\documentclass[12pt]{article}
\usepackage{amssymb}
\usepackage{latexsym}
\usepackage{amsmath}
\usepackage{amsfonts}
\usepackage{setspace}
\usepackage{enumerate}
\usepackage{verbatim}
\usepackage{graphicx}
\usepackage[multiple]{footmisc}
\usepackage{ifpdf}%
\providecommand{\U}[1]{\protect\rule{.1in}{.1in}}
\newtheorem{theorem}{Theorem}

\newtheorem{corollary}{Corollary}

\newtheorem{definition}{Definition}

\newtheorem{lemma}{Lemma}

\newtheorem{proposition}{Proposition}
\newtheorem{remark}{Remark}

\newcommand{\XEFF}{\textsl{ex-post efficiency}}
\newcommand{\XNW}{\textsl{ex-post non-wastefulness}}
\newcommand{\XWNW}{\textsl{ex-post weak non-wastefulness}}

\newcommand{\ETE}{\textsl{equal treatment of equals}}
\newcommand{\SETE}{\textsl{equal treatment of almost equals}}
\newcommand{\SP}{\textsl{strategy-proofness}}

\begin{document}

\title{\textbf{Strategy-Proof and Minimally Wasteful Random Assignment}\thanks{Support by the Deutsche Forschungsgemeinschaft through CRC TRR 190 (project number 280092119), the SSHRC (Canada) under Insight Grant 435-2023-0129, and the FRQ (Qu\'ebec) under Soutien aux \'equipes de recherche / Universitaire - nouvelle équipe 367853 is gratefully acknowledged.}}
\author{Christian Basteck\thanks{WZB Berlin Social Science Center, Berlin, Germany. Email: christian.basteck@wzb.eu.} \and
Lars Ehlers\thanks{D\'epartement de Sciences \'Economiques and CIREQ, Universit\'{e} de Montr\'{e}al, Montr\'{e}al, QC H3C 3J7, Canada. Email: lars.ehlers@umontreal.ca.}}
\maketitle

\begin{abstract}
\noindent We study random assignment of indivisible objects among a set of agents with strict preferences and outside options. When agents may rank some objects as unacceptable, we consider different notions of measuring waste of object(s) from an ex-ante perspective. The most natural %na\"{\i}ve
one is $q$-agent-object-waste whereby both one agent and one of his acceptable objects are unassigned with at least probability $q$. 
%Then it is feasible to transfer the share $q$ of the object to the agent.
On the one hand, we show that any mechanism satisfying \SETE\ (whereby any two agents with identical rankings over objects receive the same probability shares for objects they both regard acceptable), \SP\ and \XWNW\ (whereby in any assignment in the support we cannot have that both one agent and one of his acceptable objects are unassigned) must be $q$-agent-object-wasteful with $q\geq \frac{1}{6}$. On the other hand, we show that random serial dictatorship (RSD) attains the minimal bound of $\frac{1}{6}$ in this class for four agents or three objects. % These results remain unchanged for \ETE, \WEF, \SP\ and \XWNW.
In addition, we
consider $q$-object-wastefulness where an object remains unassigned with probability $q$,
while agents, who consider it acceptable, remain unassigned with aggregate probability
$q$.
We again show that RSD attains the minimal bound of $\frac{1}{4}$ in this class of mechanisms with respect to $q$-object-wastefulness for three objects. 
%Finally, we show that random deferred acceptance (RDA) is at-most-$q$-agent-object-wasteful with $q<\frac{1}{6}$ but at the same time $q$-object-wasteful with $q\geq \frac{1}{4}$.\smallskip
Finally, we show that random deferred acceptance (RDA) may be strictly less agent-object-wasteful than RSD (but at the cost of violating equal treatment of almost equals).\smallskip

\noindent\emph{JEL Classification:} D63, D70.\smallskip

\noindent\emph{Keywords:} random assignment, strategy-proofness, minimal waste.
\end{abstract}

\section{Introduction}

Over the past 30 years the assignment of indivisible objects has received considerable attention in economics. Applications range from allocating public school seats or child care places, assigning public houses among socially disadvantaged groups, matching organs among patients to the distribution of vaccines. All these applications have in common that monetary
transfers are fixed or absent and that prices cannot be used in the design of those markets.
In the pursuit of socially desirable objectives, the market designer has to devise a
mechanism from (reported) ordinal preferences to chosen assignments.
Oftentimes agents may
rank objects unacceptable as for instance in school choice a private school may be preferred
to a public school while in kidney exchange a compatible patient-donor pair may rank certain
donors unacceptable (and the market design ensures their participation by assigning them
only acceptable objects). %Due to the nature of indivisibilities,
There are two opposing designs of those markets: deterministic versus random.

For deterministic assignment three approaches are well-established and used in applications: serial dictatorship (SD), top-trading-cycles (TTC) and deferred-acceptance (DA). Under SD, agents choose the objects in a prespecified order. In applications, this order is determined via objective criteria such as wait lists or when the request was received. Under TTC, endowments are constructed by assigning the objects to agents and then the TTC-mechanism by Shapley and Scarf (1974) is executed. A general approach encompassing SD and TTC are hierarchical exchange rules (HE) by Papai (2000). Under DA, priorities for objects are constructed and then the DA-mechanism is executed. The (dis)advantages of these mechanisms are as follows. On the one hand, HE are \textsl{ex-post efficient} and unstable whereas DA violates \XEFF\ and is stable. On the other hand all those mechanisms are non-wasteful, which means that no object is both unassigned and preferred by an agent to his assignment, i.e., any object has zero waste. For deterministic mechanisms ex-ante and ex-post versions coincide with respect to efficiency and (weak) non-wastefulness. To the best of our knowledge, wasteful mechanisms are not used in real life for the design of deterministic assignment.

The main shortcoming of deterministic mechanisms is that they are extremely unfair. For instance, if all agents have the same most preferred object, then any fair deterministic assignment would have to waste this object in the absence of monetary transfers. This is the reason for implementing random assignment mechanisms in those contexts. % even though politicians often shy away and favor deterministic assignment mechanisms.
In some sense, randomness allows to substitute for absence of (continuous) prices. Na\"{i}vely mixing the above deterministic mechanisms results in random serial dictatorship (RSD), random top-trading-cycles (RTTC), more generally in random hierarchical exchange (RHE)\footnote{Bade (2020) has shown by using Pycia and \"Unver (2017) that RHE always results in RSD.} and random deferred acceptance (RDA). While those mechanisms satisfy \ETE\ and \SP, unfortunately, by considering random mechanisms we necessarily introduce waste of objects, as Martini (2016) has shown that any mechanism satisfying \ETE\ and \SP\ must be wasteful in a strong sense whereby an agent prefers an object, which is unassigned with positive probability, to one of the objects to which he is assigned with positive probability.

Throughout we consider the weaker (unambiguous) property by considering an object wasted if it is unassigned with positive probability and some agent, who is unassigned with positive probability, desires this object. This notion is easy to check as it only requires to do so for any agent-object pair where both are unassigned with positive probability.
Note that waste in this weak sense implies that the random assignment fails to maximize
the expected size of an assignment.
%Then the random assignment does not ex-ante assign the maximum size of agents to their acceptable objects.

At the same time we consider \SETE, which means that whenever two agents rank all objects identically and their sets of acceptable objects differ by at most one object, then they shall receive the same probability shares for objects they both regard acceptable.
We show that this does not allow us to escape the impossibility, i.e., there is no mechanism satisfying \SETE, \SP\ and ex-ante weak non-wastefulness, which is somewhat unsurprising given that contrary to the deterministic approach that relinquishes \ETE, the literature on random assignment established numerous impossibilities regarding the existence of mechanisms satisfying non-manipulability (\SP), \ETE\ and different efficiency notions.

Our main purpose is to show the existence of minimally wasteful mechanisms in various classes of mechanisms. This requires to find both the minimal bound on waste in a certain class of mechanisms and to show the existence of a mechanism attaining this bound.
We introduce different notions of waste regarding object(s), agent(s) and across them. Given $q\in [0,1]$, $q$-agent-object-wastefulness means that at some preference profile both an agent and one of his acceptable objects are unassigned with probability greater than or equal to $q$. On the one hand, we show that any mechanism satisfying (i) \SETE, \SP\ and \XWNW\ or (ii) \ETE, \SP\ and \XEFF\ is $q$-agent-object-wasteful with $q\geq \frac{1}{6}$. On the other hand, for three objects or four agents we show that RSD is at-most-$\frac{1}{6}$-agent-object-wasteful. Hence, RSD obtains the minimal bound regarding $q$-agent-object-wastefulness in the class of mechanisms satisfying \SETE, \SP\ and \XEFF.
%, and respectively, in the one satisfying \ETE, \WEF, \SP\ and \XWNW.

We also consider the notion of $q$-object-wastefulness whereby the share $q$ of the unassigned part of the object can be feasibly transferred to the agents preferring it over being unassigned (while not changing the random assignment of any other object).
We show $\frac{1}{4}$-object-wastefulness of any mechanism in the class satisfying (i) \SETE, \SP\ and \XWNW, %(ii) \ETE, \WEF, \SP\ and \XWNW,
or (ii) \ETE, \SP\ and \XEFF.
For three objects we obtain that RSD is at-most-$\frac{1}{4}$-object-wasteful, i.e., this bound is the minimal one for $q$-object-wastefulness in these two classes of mechanisms.
We finally consider random deferred acceptance (RDA), which violates both \SETE\ and \XEFF, and show for four agents and three objects that RDA is at-most-$q$-agent-object-wasteful with $q<\frac{1}{6}$ while RDA is $q$-object-wasteful with $q\geq \frac{1}{4}$.

%We further establish ex-ante weak non-wastefulness of any such mechanism adding to Martini (2016). We also consider stochastic dominance among %strategy-proof mechanisms and show that if one mechanism stochastically dominates another one, then there exists a profile whereby the first %mechanism object-by-object dominates the second one meaning that for any acceptable object of an agent she is assigned to that object with weakly %greater probability under the first mechanism in comparison to the second mechanism.

The paper is organized as follows. Section \ref{sec2} introduces random assignments, their properties and several prominent mechanisms.
%Section \ref{sec3} contains our taxonomy of measuring waste of object(s).
Section \ref{sec4} establishes that $\frac{1}{6}$ is the minimal bound for $q$-agent-object-wastefulness in the two classes of mechanisms satisfying (i) \SETE, \SP\ and \XWNW\ and (ii) \ETE, \SP\ and \XEFF, and at-most-$\frac{1}{6}$-agent-object-wastefulness of RSD for four agents or three objects.
Section \ref{sec5} establishes that $\frac{1}{4}$ is the minimal bound for $q$-object-wastefulness in the two classes of mechanisms, and
at-most-$\frac{1}{4}$-object-wastefulness of RSD for three objects.
Section \ref{secRDA} relates our main results to the random deferred-acceptance mechanism.
Section \ref{sec6} discusses our results with respect to ex-ante (weak) non-wastefulness.
Section \ref{sec7} concludes. The Appendix contains the proofs omitted from the main text.

\section{Model}\label{sec2}

Let $N=\{1,\ldots,n\}$ denote the set of agents and $O=\{o_1,\ldots,o_m\}$ denote the finite set of objects. %For the moment, suppose that all capacities are one, i.e., $q_o=1$ for all $o\in O$,
%Throughout the main text we suppose $|N|=|O|\geq 3$ and allow for unequal numbers of agents and objects in the Appendix.
Each agent $i$ has strict preferences over $O\cup\{i\}$ where $i$ stands for being unassigned; let $R_i$ denote the corresponding linear order\footnote{Thus, $R_i$ is (i) complete, (ii) transitive and (iii) antisymmetric ($xR_iy$ and $yR_ix$ implies $x=y$).} and write $P_i$ for its asymmetric part (where $xP_iy$ is defined by $xR_iy$ and $x\neq y$). Let $\mathcal{R}^i$ denote the set of all strict preferences of agent $i$ over $O\cup \{i\}$, i.e., where some objects may be unacceptable.
%Let $\mathcal{W}^i$ denote the set of all strict preferences of agent $i$ over $O\cup \{i\}$.
Let $\mathcal{R}^N=\times_{i\in N}\mathcal{R}^i$ denote the set of all preference profiles $R=(R_1,\ldots,R_n)$, which we call the full domain.
%We denote this domain by $\mathcal{R}^N=\times_{i\in N}\mathcal{R}^i$ and refer to it as the acceptable domain, as no agent would ever dispose of any assigned object. We call $\mathcal{W}^N$ the full domain.
%A strict priority ranking over $N$ is denoted by $\succ$. Let $\mathcal{L}$ denote the set of all strict priority rankings.

An assignment is a mapping $\mu:N\rightarrow O\cup N$ such that\footnote{We will use throughout the convention to write $\mu_i$ instead of $\mu(i)$ for any $i\in N$.} $\mu_i\in O\cup\{i\}$ for all $i\in N$ and $\mu_i\neq \mu_j$ for all $i\neq j$.
Let $\mathcal{M}$ denote the set of all assignments.

An assignment $\mu$ is efficient under $R$ if there exists no $\mu'\in \mathcal{M}$ such that $\mu_i'R_i\mu_i$ for all $i\in N$ and $\mu_j'P_j\mu_j$ for some $j\in N$. Let $\mathcal{PO}(R)$ denote the set of all efficient assignments under $R$.
%An assignment $\mu$ is weakly efficient under $R$ if there exists no $\mu'\in \mathcal{M}$ such that $\mu_i'P_i\mu_i$ for all $i\in N$. Let %$\mathcal{WPO}(R)$ denote the set of all weakly efficient assignments under $R$.

An assignment $\mu$ is non-wasteful under $R$ if for all $i\in N$ and all $x\in O\cup \{i\}$, $xR_i\mu_i$ implies there exists $j\in N$ with $\mu_j=x$. Note that this implies $\mu_iR_ii$. Let $\mathcal{NW}(R)$ denote the set of all non-wasteful assignments under $R$.

An assignment $\mu$ is weakly non-wasteful under $R$ if for all $i\in N$ and all $x\in O\cup \{i\}$,  $xR_i\mu_i$ and $i R_i \mu_i$ together imply that there exists $j\in N$ with $\mu_j=x$. Again this implies $\mu_iR_ii$. Here we consider an object to be wasted if it is unassigned but desired by an \emph{unassigned} agent or if an agent is assigned an unacceptable object.
Let $\mathcal{WNW}(R)$ denote the set of all weakly non-wasteful assignments under $R$. Note that verifying weak non-wastefulness only requires the knowledge of each agent's acceptable objects but no knowledge of the ranking among them (which is necessary to determine non-wastefulness or efficiency of a deterministic allocation). Finally, let $\mathcal{IR}(R)$ denote the set of all individually rational assignments, i.e., $\mu\in \mathcal{IR}(R)$ if and only if $\mu_iR_ii$ for all $i\in N$.

For any profile $R$, we have $\mathcal{PO}(R)\subseteq \mathcal{NW}(R)\subseteq \mathcal{WNW}(R)\subseteq \mathcal{IR}(R)$.
%, and there is no relation between non-wastefulness and weak efficiency.

Let $\Delta(\mathcal{M})$ denote the set of all probability distributions over $\mathcal{M}$. Given $p\in \Delta(\mathcal{M})$, let $p_{ia}$ denote the associated probability of $i$ being assigned $a$ and refer to $p_i=(p_{ia})_{a\in O\cup \{i\}}$ as agent $i$'s (individual) random assignment. Let $supp(p)$ denote the support of $p$. Then (i) $p$ is ex-post efficient under $R$ if $supp(p)\subseteq \mathcal{PO}(R)$,
%(ii) $p$ is ex-post weakly efficient under $R$ if $supp(p)\subseteq \mathcal{WPO}(R)$,
(ii) $p$ is ex-post non-wasteful under $R$ if $supp(p)\subseteq \mathcal{NW}(R)$, (iii) $p$ is ex-post weakly non-wasteful under $R$ if $supp(p)\subseteq \mathcal{WNW}(R)$, and (iv) $p$ is individually rational under $R$ if $supp(p)\subseteq \mathcal{IR}(R)$. Now if $p$ is individually rational, then for all $i\in N$ and $x\in O$ such that $iP_ix$ we have $p_{ix}=0$.

For all $i\in N$, all $R_i\in \mathcal{R}^i$ and all $x\in O\cup \{i\}$, let $B(x,R_i)=\{y\in O\cup \{i\}:yR_ix\}$.
Then given any $p,q\in \Delta(\mathcal{M})$, $p_i$ stochastically $R_i$-dominates $q_i$ if for all $x\in O\cup \{i\}$,
\[ \sum_{y\in B(x,R_i)} p_{iy} \geq \sum_{y\in B(x,R_i)} q_{iy}. \]
A random assignment $p$ stochastically $R$-dominates (or sd-dominates) another random assignment $q$ if $p_i$ $R_i$-dominates $q_i$ for all $i\in N$; $p$ strictly stochastically $R$-dominates $q$ if moreover $p_i\neq q_i$ for some $i\in N$. A random assignment $p$ is stochastic dominance (sd)-efficient if there is no random assignment $q$ that strictly stochastically $R$-dominates it.\footnote{Bogomolnaia and Moulin (2001) refer to this as ``ordinal efficiency''. It implies Pareto-efficiency with respect to expected utilities for some von Neumann-Morgenstern-representations of agents' ordinal preferences over objects.} Given two random assignments $p$ and $q$, we say that $p$ and $q$ are \textsl{welfare-equivalent} if $p_i=q_i$ for all $i\in N$.\footnote{Some papers directly define a bistochastic matrix $(p_{ia})_{i\in N, a\in O}$ rather than a random assignment per se, i.e., a convex combination of deterministic assignments. Nonetheless, corresponding random assignments exist as any bistochastic matrix $(p_{ia})_{i\in N, a\in O}$ can be decomposed as a convex combination of deterministic assignments by the Birkhoff-von Neumann Theorem (Birkhoff, 1946). Abdulkadiro\u{g}lu and S\"onmez (2003) observe that an ex-post efficient random assignment may be welfare-equivalent to a random assignment with support contained in the set of inefficient assignments so that $p_i=q_i$ for all $i\in N$ does not imply $p=q$. Furthermore, Zhang (2023) considers different notions of non-wastefulness, with the weakest one being ex-post non-wastefulness: a random assignment is ex-post-non-wasteful if it has a decomposition of non-wasteful assignments.} %The latter also observed by \cite{pyciatroyan2023} for RSD.}%This is an ex-ante efficiency notion before realizing the deterministic assignment to be implemented.

A mechanism (or rule) is a mapping $f:\mathcal{R}^N\rightarrow \Delta(\mathcal{M})$. Then $f(R)$ denotes the random assignment chosen for $R$, and $f_{ia}(R)$ denotes the probability of agent $i$ being assigned object $a$ whereas $f_{ii}(R)$ denotes the probability of agent $i$ being unassigned. For $i\in N$, $f_i(R)$ denotes the tuple of assignment probabilities $\left(f_{ia}(R)\right)_{a\in O\cup \{i\}}$, and for $a\in O$, $f_a(R)$ is defined accordingly as the tuple of probability shares with which $a$ is assigned to the various agents. A mechanism $f$ \textsl{sd-dominates} another mechanism $g$, denoted as $f\rhd^{sd} g$, if for any profile $R$ the random assignment
$f(R)$ stochastically $R$-dominates the random assignment $g(R)$, and for some profile $\bar{R}$ and $i\in N$ we have $f_i(\bar{R})\neq g_i(\bar{R})$. Further $f$ is \textsl{sd-efficient} if for all $R\in \mathcal{R}^N$, $f(R)$ is sd-efficient under $R$.
Similarly, we define ex-post (weak) efficiency, ex-post (weak) non-wastefulness, and individual rationality\footnote{For individual rationality its ex-ante and ex-post notions are equivalent.} for a mechanism. A mechanism $f$ is \textsl{deterministic} if for any profile $R$, $|supp(f(R))|=1$, i.e., the mechanism chooses one assignment with probability one. % A deterministic mechanism is \textsl{efficient} if it chooses an efficient assignment for any profile.

Then $f$ is \textsl{strategy-proof} if for all $R\in \mathcal{R}^N$, all $i\in N$ and all $R_i'\in \mathcal{R}^i$,
$f_i(R)$ stochastically $R_i$-dominates $f_i(R_i',R_{-i})$. Strategy-proofness is equivalent to the requirement that for any von Neumann-Morgenstern utility representation compatible with a given ordinal ranking of objects, submitting the true ordinal ranking maximizes an agent's expected utility. Most real-life mechanisms only elicit this ordinal information (instead of von Neumann-Morgenstern utilities).

Furthermore, $f$ is \textsl{envy-free} if for all $R\in \mathcal{R}^N$ and all $i\in N$,
$f_i(R)$ stochastically $R_i$-dominates $f_j(R)$ (where in $f_j(R)$ the outside option $j$ is replaced by $i$). If $f(R)$ attaches probability one to assignment $\mu$, then this is equivalent to $\mu_iR_i\mu_j$ for all $i,j\in N$.
Then $f$ satisfies \textsl{equal treatment of equals} if for all $R\in \mathcal{R}^N$ and all $i,j\in N$,
$R_i=R_j$ implies $f_{io}(R)=f_{jo}(R)$ for all $o\in O$.
Finally, $f$ satisfies \textsl{strong equal treatment of equals} (Nesterov, 2017) if for all $R\in \mathcal{R}^N$, all $i,j\in N$ and all $x\in O$,
$R_i|B(x,R_i)=R_j|B(x,R_j)$,\footnote{Note that this implies $B(x,R_i)=B(x,R_j)$.} then $f_{io}(R)=f_{jo}(R)$ for all $o\in B(x,R_i)$.
Note that \textsl{envy-freeness} implies \textsl{strong equal treatment of equals} which in turn implies \textsl{equal treatment of equals}.
There, \textsl{strong equal treatment of equals} is a fairness notion in between these two fairness notions.

%We also define two invariance conditions of a mechanism with respect to renaming agents and with respect to renaming objects.
%
%Given a permutation $\tau:N\rightarrow N$ and $R\in \mathcal{R}^N$, let $\tau(R)$ be the profile such that for all $i\in N$, %$\tau(R)_i=R_{\tau(i)}$. A mechanism $f$ is \textsl{anonymous} if for any permutation $\tau:N\rightarrow N$ and $R\in \mathcal{R}^N$, we have %$f_{i}(\tau(R))=f_{\tau(i)}(R)$ for all $i\in N$.
%
%Given a permutation $\sigma:O\rightarrow O$ and $R\in \mathcal{R}^N$, let $R_i^{\sigma}$ be such that (i) for all $a,b\in O$, $aR_ib$ iff %$\sigma(a)R_i^{\sigma}\sigma(b)$ and (ii) for all $a\in O$, $aR_ii$ iff $\sigma(a)R_ii$, and $R^{\sigma}=(R_i^{\sigma})_{i\in N}$. A mechanism %$f$ is \textsl{neutral} if for any permutation $\sigma:O\rightarrow O$ and $R\in \mathcal{R}^N$, we have $f_{io}(R)=f_{i\sigma(o)}(R^{\sigma})$ %for all $i\in N$ and all $o\in O$.

Most properties are defined in terms of agents' random assignments. For a given set of properties, we say that a mechanism $f$ is \textsl{unique in terms of probability shares}, if for any other mechanism $\phi$ satisfying this set of properties, $f(R)$ and $\phi(R)$ are welfare-equivalent for any profile $R$, i.e., if individual random assignments coincide. Below we introduce two well-known mechanisms.

%The uniform assignment (UA) mechanism\footnote{\cite{chambers2004consistency} characterizes UA via consistency.} randomizes uniformly over all $|N|!$ deterministic non-wasteful assignments (irrespective of agents preferences). Hence for individual object assignment probabilities we have: for all $R\in \mathcal{R}^N$,
%$U\!A_{io}(R)=\frac{1}{n}$ for all $i\in N$ and $o\in O$.

Let $\succ$ denote a strict priority ranking over $N$ and let $\Pi$ denote the set of all strict priority orders.
Given $\succ\in \Pi$, let $f^{\succ}$ denote the (deterministic) serial dictatorship (SD) mechanism where agents are assigned their most-preferred among all available acceptable objects in order of their priority, or remain unassigned if no acceptable objects remain.\footnote{For any $R\in \mathcal{R}^N$ and $i_1\succ i_2 \succ \cdots \succ i_n$, $i_1$ receives his most $R_{i_1}$-preferred element in $O\cup \{i_1\}$ (denoted by $f_{i_1}^{\succ}(R)$), and for $l=2,\ldots,n$, $i_l$ receives his most $R_{i_l}$-preferred object in $O\backslash \{f_{i_l}^{\succ}(R),\ldots,f_{i_{l-l}}^{\succ}(R)\}\cup\{i_l\}$ (denoted by $f_{i_l}^{\succ}(R)$).}
Then the random serial dictatorship (RSD) mechanism is defined by $R\!S\!D(R) =\frac{1}{n!}\sum_{\succ\in \Pi}f^{\succ}(R)$ for all $R\in \mathcal{R}^N$.%\footnote{\cite{pyciatroyan2023} construct a random mechanism distinct from RSD but where for any profile $R$ the chosen random assignment is equivalent to $RSD(R)$.}

We omit the formal definition of the probabilistic serial (PS) mechanism\footnote{For that, we refer the reader to Bogomolnaia and Moulin (2001).}
%; \cite{bogomolnaia2015random} offers an alternative definition of PS, and \cite{katta2006solution} extend PS to the domain where indifferences are allowed.}
and provide an intuitive formulation instead: agents start eating, with uniform speed, from their most-preferred acceptable object; once an object is exhausted, each agent eats with uniform speed from his most-preferred among the remaining acceptable objects, and so on until all of their acceptable objects are exhausted. The assignment probabilities of any agent in PS are simply the shares of objects the agent has eaten over the course of this process.\footnote{The PS-mechanism pins down individuals' object assignment probabilities directly but can be decomposed as a convex combination of deterministic assignments by the Birkhoff-von Neumann Theorem (Birkhoff, 1946).}

The literature widely discusses the tradeoff between these two mechanisms: on the one hand RSD satisfies ex-post efficiency, equal treatment of equals and strategy-proofness but violates sd-efficiency and envy-freeness while on the other hand PS satisfies sd-efficiency and envy-freeness but violates strategy-proofness.

\section{Agent-object-wastefulness}\label{sec4}

Let $f$ be a mechanism and $q\in [0,1]$. For any profile $R$ and any $o\in O$, let $f_{oo}(R)=1-\sum_{j\in N}f_{jo}(R)$ denote the unassigned share of object $o$ under $f(R)$.\newline

\noindent
\textbf{$q$-agent-object-wastefulness:}
$f$ is $q$-agent-object-wasteful if there exist a profile $R$,  $o\in O$, and $i\in N$ such that $oP_ii$ and $\min\{f_{ii}(R),f_{oo}(R)\}\geq q$.\newline

In words, a mechanism is $q$-agent-object-wasteful if we could feasibly increase the probability with which an object is assigned to an agent who desires it by $q$ and reduce the probability with which that agent and object remain unassigned by the same amount (without changing any other assignment probabilities). Furthermore, we call $f$ \textsl{at-most-$q$-agent-object-wasteful} if both $f$ is $q$-agent-object-wasteful and $f$ is not $q'$-agent-object-wasteful for $q'>q$, i.e., $q$ is the least upper bound of $f$ for agent-object-wastefulness.\footnote{Note that the set of agents, the set of assignments and the set of preference profiles are finite, i.e., for any mechanism \textsl{at-most-$q$-agent-object-wastefulness} is well-defined.}

Regarding $q$-agent-object wastefulness of RSD, for four agents and three objects we obtain for profile $G$ below:\footnote{Throughout rows stand for agents' preferences (where for instance, $G_1:ac$), the columns for objects, and the matrix entries for the random assignment chosen by the mechanism (for instance, $RSD_{3a}(G)=\frac{5}{24}$).}
\begin{equation}\label{RSD1}
\begin{array}{cccc}
RSD(G) & a & b & c \\ \hline
ac & \frac{14}{24} & 0 & \frac{6}{24} \\
bc & 0 & \frac{1}{3} &  \frac{14}{24}\\
ba & \frac{5}{24} & \frac{1}{3} & 0 \\
ba & \frac{5}{24} & \frac{1}{3} &  0
\end{array}.
\end{equation}
%Objects $a$ and $b$ are fully assigned as they are top ranked.
There are 24 orders and object $c$ is unassigned for the orders 1234, 1243, 2134, 2143, i.e., with probability $\frac{4}{24}=\frac{1}{6}$.
Agent 1 is unassigned for the orders 4321, 4231, 3421 and 3241, i.e., with probability $\frac{4}{24}=\frac{1}{6}$.
%Note that agent 2 is unassigned for the orders 4312 and 3412.
Hence, RSD is $\frac{1}{6}$-agent-object-wasteful.

We introduce below a new fairness notion in between \textsl{strong equal treatment of equals} and \ETE.

\begin{definition}
Let $R$ be a profile and $p$ be a random assignment.
Then $p$ satisfies \emph{(almost) equal treatment of almost equals (ETAE)} if for all $i,j\in N$ such that (i)
$R_i|O=R_j|O$ and (ii) $|A(R_i)|\leq |A(R_j)|\leq |A(R_i)|+1$ we have $p_{iy}= p_{jy}$ for all $y\in A(R_i)$.
\end{definition}

In words, if the preferences of two agents over objects are identical and their preferences differ by at most one acceptable object, then the probability share of any common acceptable object is identical. Note that then $R_i$ and $R_j$ differ by at most one adjacent switch between an object and being unassigned.\footnote{This means that the Kemeny distance (measuring the minimal number of pairwise switches from $R_i$ to obtain $R_j$) between $R_i$ and $R_j$ is at most one.} In this sense it is the weakest conceivable strengthening of \ETE.\footnote{Furthermore, on the domain where all objects are acceptable, \SETE\ and \ETE\ are equivalent, while on the full domain \SETE\ implies \ETE\ but not the other way around (and \SETE\ is weaker than \textsl{strong equal treatment of equals}). In addition, when all agents have equal priorities, (i) it is straightforward to check that no ex-ante discrimination (Kesten and \"Unver, 2015) implies \SETE\ and that ordinal fairness (Han, 2024b) implies \SETE.} Both PS and RSD satisfy \SETE.\footnote{For PS this follows as agents eat at uniform speed following the order of their objects while for RSD this follows as it satisfies ``bounded invariance'' (whereby changing the set of acceptable objects while keeping the same ranking over objects does not alter the distributions of the objects which are acceptable before and after the change), and RSD satisfies \ETE\ and \SP.}

We show for four or more agents and for three or more objects that $\frac{1}{6}$-agent-object-wastefulness is not specific for RSD as it is attained (or surpassed) by any mechanism satisfying (i) \SETE, \SP\ and \XWNW, and respectively, (ii) \ETE, \SP\ and \XEFF.

\begin{theorem}\label{the1}
Let $|N|\geq 4$ and $|O|\geq 3$.
\begin{itemize}
\item[(i)]
If $f$ satisfies \SETE, \SP\ and \XWNW, then $f$ is $q$-agent-object-wasteful with $q\geq \frac{1}{6}$.
%\item[(ii)] If $f$ satisfies \ETE, \WEF, \SP\ and \XWNW, then $f$ is $q$-agent-object-wasteful with $q\geq \frac{1}{6}$.
\item[(ii)] If $f$ satisfies \ETE, \SP\ and \XEFF, then $f$ is $q$-agent-object-wasteful with $q\geq \frac{1}{6}$.
\end{itemize}
\end{theorem}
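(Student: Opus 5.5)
We first reduce to $|N|=4$ and $|O|=3$. Given larger $N$ and $O$, we let all additional agents find every object unacceptable, and let all additional objects be unacceptable to everyone. Ex-post weak non-wastefulness (and a fortiori ex-post efficiency) keeps these agents unassigned and these objects unassigned in every assignment of the support. The problem therefore collapses to four agents $1,2,3,4$ and objects $a,b,c$. We then work at the profile $G$ from (\ref{RSD1}): $G_1:ac$, $G_2:bc$, $G_3=G_4:ba$. The goal is $\min\{f_{11}(G),f_{cc}(G)\}\geq \frac16$.

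The first step is to read off, via ex-post weak non-wastefulness, which events in the support produce the two kinds of waste.
\begin{itemize}
\item Object $b$ is acceptable to three agents but only two other objects exist, so $b$ is always assigned.
\item Object $c$ is unassigned in a support assignment exactly when agents $1$ and $2$ are both assigned but not to $c$, that is, when $1$ gets $a$ and $2$ gets $b$.
\item Agent $1$ is unassigned exactly when $c$ goes to $2$ and $a$ goes to $3$ or $4$.
\end{itemize}
Hence $f_{cc}(G)=1-f_{1c}(G)-f_{2c}(G)$ and $f_{11}(G)=1-f_{1a}(G)-f_{1c}(G)$. The claim becomes the two inequalities $f_{1a}(G)+f_{1c}(G)\leq \frac56$ and $f_{1c}(G)+f_{2c}(G)\leq\frac56$. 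Throughout, equal treatment of equals gives $f_{3\cdot}(G)=f_{4\cdot}(G)$, so the shares of $a$ and $b$ not going to $1$ or $2$ are split equally between $3$ and $4$.

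The second step pins down enough shares at auxiliary profiles and transfers them to $G$ by strategy-proofness. For agent $2$, we compare $G$ with the profile where $2$ reports $b$ only, or $ba$, say $G'_2:ba$. In part (i), \SETE\ makes the truncation $b$ "almost equal" to $ba$. Choosing the unacceptable tail to match $3,4$, this should force $2$'s share of $b$ to be $\frac13$ in the relevant profile, exactly as in the RSD computation.

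Strategy-proofness applied to $2$'s switch between $bc$ and $b$ then yields $f_{2b}(G)\geq$ the $b$-share at the truncated report. Combined with ex-post weak non-wastefulness at the truncated profile, this gives $f_{2b}(G)=\frac13$, hence $f_{2c}(G)\leq\frac23$.

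The same manoeuvre for agent $1$ (reports $a$, $ac$, $ca$) bounds $f_{1a}(G)$ and $f_{1c}(G)$. When $1$ reports only $a$, agents $1,3,4$ compete for $a$ and $3,4$ also for $b$. The $b$-shares of $3,4$ are then forced, and the residual $a$-share of $1$ is computed from feasibility. In part (ii), \SETE\ is unavailable, so these truncation steps must instead be replaced by ex-post efficiency arguments: which agents can receive which objects in Pareto optimal assignments, plus \ETE\ among $3,4$. We would run the same chain of profiles and check at each step that ex-post efficiency restricts the support enough.

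The main obstacle is obtaining the two inequalities simultaneously. Strategy-proofness alone only gives inequalities between cumulative shares of upper contour sets, and a mechanism might try to make $f_{1c}$ large to kill $c$-waste while keeping agent $1$'s total assignment high. I would first try to show $f_{1c}(G)\leq \frac16$-type bounds by considering $2$'s deviation to $c$ only: by weak non-wastefulness, $2$ and $1$ then share $c$, and fairness or efficiency applied to $3,4$ pins down $a$. If this fails, I would search for a different witness profile. One natural candidate is the profile where agent $1$ reports $ca$, where the roles of $1$ and $2$ become symmetric around $c$. There I would argue that either $c$ or agent $1$ carries waste at least $\frac16$.
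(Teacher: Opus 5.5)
Your preliminaries are correct: the reduction to $N=\{1,2,3,4\}$, $O=\{a,b,c\}$; the description of when $c$ and when agent $1$ are unassigned at $G$; the reformulation as $f_{1a}(G)+f_{1c}(G)\le\frac56$ and $f_{1c}(G)+f_{2c}(G)\le\frac56$; and $f_{2b}(G)=\frac13$. Beyond that, the proposal is a plan, and the steps you do sketch do not deliver what is needed.

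\emph{First}, at $(a,G_{-1})$ feasibility only gives $f_{1a}=1-2f_{3a}$, and nothing you say pins down $f_{3a}$. The paper obtains $f_{1a}(G)=\frac{14}{24}$ in two moves. At $U=(bac,G_{-1})$ everyone ranks $b$ first, so $b$ is split $\frac14$ each and $a$ is then split $\frac13$ each among $1,3,4$. At $U'=(abc,G_{-1})$ agent $1$ receives no $b$, so strategy-proofness shifts the whole $\frac14+\frac13$ onto $a$.

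\emph{Second}, $f_{2c}(G)\le\frac23$ is far too weak. You need $f_{2c}(G)=\frac{14}{24}$ exactly. In the paper this takes all of Claims 1 and 2 at $(ac,cab,ba,ba)$, including the lemma that an object everybody ranks second is split $\frac14$ each.

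\emph{Third}, once these two values are known, $f_{11}(G)=f_{cc}(G)=\frac{10}{24}-f_{1c}(G)$. Both of your inequalities therefore collapse to the single crux $f_{1c}(G)\le\frac14$, and for this you offer no argument. Letting agent $2$ report only $c$ relates agent $2$'s shares across two profiles but constrains nothing about agent $1$'s share at $G$.

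The missing idea is that the paper never bounds $f_{1c}(G)$ at all. It derives its contradiction from $f_{1c}>\frac14$ at the neighbouring profile $(ac,bca,ba,ba)$, where $f_{1a}=f_{2c}=\frac{14}{24}$ still hold. So in effect that profile, not $G$, is the witness, and there is no evidence that $G$ itself always carries waste $\frac16$.

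The point of letting agent $2$ accept $a$ is that agent $1$ can then deviate to $bca$, making $1$ and $2$ exact equals at $(bca,bca,ba,ba)$. There, strategy-proofness turns the assumed excess into $\alpha=f_{1a}>\frac1{12}$. Equal treatment between $1$ and $2$ then pushes agents $3$ and $4$ down to $\frac12-\alpha$ of $a$. This is carried through $(bca,bca,ab,ba)$ and $(bca,bca,bac,ba)$ to agent $3$, where $\alpha=\frac1{12}$ is forced:
\begin{itemize}
\item in (i), by ETAE between agent $3$ ($bac$) and agent $4$ ($ba$);
\item in (ii), where that comparison is unavailable, by computing every agent's unassignment probability at $(cb,bca,bac,ba)$ and using that they must sum to one.
\end{itemize}
Neither your fallback witness (agent $1$ reporting $ca$) nor your remark that (ii) merely replaces truncations by efficiency arguments anticipates these steps. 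As it stands, the proposal does not prove the theorem.
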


The proof of Theorem \ref{the1} and Lemma \ref{lemm1} below are relegated to the Appendix. We further show later that Theorem \ref{the1} does not remain true for four agents and three objects when \SETE\ is weakened to \ETE\ as random deferred acceptance (RDA) is at-most-$q$-agent-object-wasteful with $q<\frac{1}{6}$.

RSD indeed obtains the minimal bound of $\frac{1}{6}$ for $q$-agent-object-wastefulness when there are four agents or three objects.\footnote{For fewer agents or objects, RSD is at-most-$0$-agent-object-wasteful (Martini, 2016, Proposition 1 $\&$ 2).}

\begin{lemma}\label{lemm1}
\begin{itemize}
\item[(i)]
Let $|N|\geq 4$ and $|O|=3$.
Then RSD is at-most-$\frac{1}{6}$-agent-object-wasteful.
\item[(ii)]
Let $|N|= 4$ and $|O|\geq 3$.
Then RSD is at-most-$\frac{1}{6}$-agent-object-wasteful.
\end{itemize}
\end{lemma}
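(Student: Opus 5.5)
The proof has two parts: exhibit a profile attaining $\frac{1}{6}$, and show that no profile exceeds it. For the lower bound I extend profile $G$ from (\ref{RSD1}). In case (i), for $|N|>4$ I add agents for whom every object is unacceptable. In case (ii), for $|O|>3$ I add objects that every agent finds unacceptable. Neither change affects the serial dictatorship outcome for any order, since the added agents always remain unassigned and the added objects are never taken. So $R\!S\!D_{11}=\frac{1}{6}$ and $R\!S\!D_{cc}\geq\frac{1}{6}$ still hold at the extended profile, and RSD is $\frac{1}{6}$-agent-object-wasteful in both cases.

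For the upper bound, fix a profile $R$, an agent $i$ and an object $o$ with $oP_ii$. Let $E_i$ be the set of orders $\succ\in\Pi$ under which $i$ ends up unassigned, and let $E_o$ be the set of orders under which $o$ ends up unassigned. Two basic facts hold for every order:
\begin{itemize}
\item[(a)] If $i$ is unassigned, then every object acceptable to $i$, in particular $o$, was taken before $i$'s turn. Hence $E_i\cap E_o=\emptyset$.
\item[(b)] If $o$ is unassigned, then $o$ was available at $i$'s turn, so $i$ took some $x$ with $xP_io$. Hence $|A(R_i)|\geq 2$ whenever $E_o\neq\emptyset$.
\end{itemize}
The goal is to show $|E_i|+|E_o|\leq \frac{n!}{3}$. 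This gives $\min\{R\!S\!D_{ii}(R),R\!S\!D_{oo}(R)\}\leq\frac{1}{6}$, and profile $G$ shows the sum bound is tight, since there both events have probability $\frac{1}{6}$. I will argue conditionally: fix the relative order of the agents other than $i$, insert $i$ in each of the $n$ possible positions, and bound the fraction of positions lying in $E_i\cup E_o$. The serial dictatorship run is monotone in $i$'s position: moving $i$ later can only shrink the set of objects available to $i$. Therefore $E_i$ corresponds to a final segment of positions for $i$, starting after the point where $i$'s last acceptable object is taken.

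In case (i), $|O|=3$. Event $E_o$ means at most two objects are assigned overall, one of them to $i$ by (b). Every other agent who takes an object must then take the single remaining object, and all later agents must find $o$ unacceptable. In case (ii), $|N|=4$. Here I enumerate the $24$ orders directly, grouped by $i$'s position. Event $E_i$ requires at least $|A(R_i)|\geq 2$ agents before $i$ (using (b)) who exhaust $A(R_i)$. This forces $i$ into position $3$ or $4$, so $E_i$ has at most $12$ orders, with fewer when $|A(R_i)|\geq 3$. Event $E_o$ requires that $i$ gets an object better than $o$ and that no agent after $i$ wants $o$ when it is their turn.

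In both cases I will split by which agents find $o$ acceptable and by $|A(R_i)|\in\{2,3\}$, and then trade off the two events. The agents who make $E_i$ large, namely those who take $o$ and $i$'s other acceptable objects early, are exactly the agents who, when they come late, consume $o$ and so destroy $E_o$. I plan an explicit injection: map orders in $E_i\cup E_o$ into the orders in which $i$ occupies a fixed one-third of the positions relative to the two or three relevant agents. For example, reversing the relative order of $i$ and the relevant agents should send $E_i$ and $E_o$ into disjoint thirds.

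The main obstacle is this counting step: making the tradeoff inequality $|E_i|+|E_o|\leq n!/3$ rigorous uniformly in $n$ for case (i), where many irrelevant agents interleave. I would first reduce to the agents whose acceptable sets intersect $A(R_i)\cup\{o\}$, arguing that the other agents never affect the events. I would then do the case analysis on at most four relevant agents, which is the same finite check as in case (ii).
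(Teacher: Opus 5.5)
Your lower-bound construction is fine: padding profile $G$ with agents who accept nothing, or with objects nobody accepts, works. Facts (a) and (b) and the monotonicity in $i$'s position are also correct.

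\textbf{The central inequality is false.} Your upper bound aims at $|E_i|+|E_o|\leq n!/3$, i.e.\ $RSD_{ii}(R)+RSD_{oo}(R)\leq\frac{1}{3}$, and no injection or case analysis can establish it. Take the paper's profile $C''$ in (\ref{RSD2}): $R_1:abc$, $R_2:cba$, $R_3:a$, $R_4:c$, and set $i=1$, $o=b$.
\begin{itemize}
\item Only agents $1$ and $2$ accept $b$. So $b$ stays unassigned exactly when $1$ gets $a$ and $2$ gets $c$, i.e.\ when $1$ precedes $3$ and $2$ precedes $4$. That is six orders.
\item Agent $1$ is unassigned exactly when he moves last and $4$ precedes $2$, so that $2$ is pushed to $b$. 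That is the orders $3421$, $4321$, $4231$.
\end{itemize}
Hence $|E_1|+|E_b|=9>8=4!/3$, a sum of $\frac{3}{8}$. Adding agents with empty lists or objects nobody accepts preserves this for every $|N|\geq 4$ and $|O|\geq 3$. So no injection of $E_i\cup E_o$ into a set of $n!/3$ orders can exist. The lemma survives only because it concerns the minimum, which is $\frac{1}{8}$ here.

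\textbf{The reduction to four relevant agents also fails.} Independently, your plan to reduce case (i) to at most four relevant agents is not valid, for two reasons.
\begin{itemize}
\item Agents who accept only the object outside $A(R_i)$ still matter. An agent accepting only $b$ who takes $b$ early pushes an agent with $R_j:bc$ onto $c$.
\item Arbitrarily many agents may accept objects in $A(R_i)$, e.g.\ many copies of $ba$ in $G$.
\end{itemize}
So case (i) does not collapse to the finite check for $|N|=4$.

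\textbf{What is needed.} You must bound the \emph{minimum} configuration by configuration, choosing in each case which of the two events to count, rather than treating the two events as sharing a common budget. This is what the paper does.
\begin{itemize}
\item Using ex-post efficiency and $|O|=3$, it reduces to: $o=c\notin top(R)$, exactly two agents (say $1$ and $2$) accept $c$, and all other agents have nonempty lists topped by $a$ or $b$. Empty lists are handled by Martini's non-wastefulness result for three agents.
\item For $|N|=4$ it shows agent $1$ is unassigned in at most $4$ of the $24$ orders.
\item For $|N|>4$ it bounds $RSD_{cc}(R)\leq\frac{1}{(|S_a|+1)(|S_b|+1)}$. This is at most $\frac{1}{6}$ except in the boundary cases $n\in\{5,6\}$, $|S_a|\in\{0,n-2\}$, where agent $1$'s unassignment is counted directly.
\item The case $top(R_1)=top(R_2)$ is either reduced to the previous one via non-bossiness of serial dictatorship, or handled by a direct count.
\item Part (ii) is an induction on $|O|$, with a similar count of the orders in which agent $1$ is unassigned.
\end{itemize}
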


As RSD satisfies \SETE, \SP\ and \XWNW,
we obtain the following from Theorem \ref{the1} and Lemma \ref{lemm1}.

\begin{corollary}\label{COR1}
For $|N|= 4$ or $|O|= 3$, RSD attains the minimal bound for at-most-$q$-agent-object-wastefulness  
\begin{itemize}
\item[(i)] in the class of
mechanisms satisfying \SETE, \SP\ and \XWNW, and 
\item[(ii)] in the class of mechanisms satisfying \ETE, \SP\ and \XEFF.
\end{itemize}
\end{corollary}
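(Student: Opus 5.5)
The plan is to combine Theorem \ref{the1} and Lemma \ref{lemm1}, reading the hypothesis ``$|N|=4$ or $|O|=3$'' inside the standing setting $|N|\geq 4$ and $|O|\geq 3$ of Theorem \ref{the1}. Under this reading the two cases covered are exactly those of Lemma \ref{lemm1}(ii) and Lemma \ref{lemm1}(i). The argument has three steps: show that RSD belongs to both classes, bound the waste of RSD from above, and bound the waste of every member of each class from below.

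\emph{Step 1: RSD lies in both classes.} For each $\succ\in\Pi$, the serial dictatorship $f^{\succ}(R)$ is Pareto efficient under $R$. Indeed, the first agent in the order receives his top element of $O\cup\{i_1\}$, and each later agent receives his top element among what remains, so no agent can be made better off without hurting an earlier agent. Hence $supp(RSD(R))\subseteq\mathcal{PO}(R)\subseteq\mathcal{WNW}(R)$. This shows that RSD is \XEFF, and therefore also \XWNW.

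Next, \SP\ and \ETE\ of RSD are standard. Strategy-proofness holds because each $f^{\succ}$ is strategy-proof and a uniform mixture over orders preserves stochastic dominance. Equal treatment holds because symmetric averaging over all orders treats agents with identical preferences identically. Finally, \SETE\ of RSD is recorded earlier in the paper; it follows from bounded invariance together with \ETE\ and \SP. So RSD belongs to class (i) and to class (ii).

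\emph{Step 2: upper bound for RSD.} In either case of the corollary, Lemma \ref{lemm1} applies, so RSD is at-most-$\frac{1}{6}$-agent-object-wasteful. That is, $\frac{1}{6}$ is the least $q$ that bounds its agent-object-waste across all profiles, and the profile $G$ in (\ref{RSD1}) witnesses that the value $\frac{1}{6}$ is attained.

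\emph{Step 3: lower bound for the classes.} Since $|N|\geq 4$ and $|O|\geq 3$, Theorem \ref{the1}(i) applies to every $f$ in class (i), and Theorem \ref{the1}(ii) applies to every $f$ in class (ii). Each such $f$ is therefore $q$-agent-object-wasteful for some $q\geq\frac{1}{6}$. Consequently its at-most value, which is well defined by finiteness, is at least $\frac{1}{6}$.

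\emph{Conclusion.} By Steps 2 and 3, no mechanism in either class has an at-most value below $\frac{1}{6}$, and RSD, which lies in both classes by Step 1, attains exactly $\frac{1}{6}$. This is the claim.

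There is no genuine obstacle in this argument; all the substantive work is contained in Theorem \ref{the1} and Lemma \ref{lemm1}. The only point needing care is Step 1, namely confirming that RSD satisfies \SETE\ and \XEFF, so that it is a legitimate member of each class.
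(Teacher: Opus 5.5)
Your proposal is correct and follows the paper's own argument. The corollary combines the lower bound of Theorem \ref{the1} with the upper bound of Lemma \ref{lemm1}, using that RSD satisfies \SETE, \SP\ and \XEFF\ (hence \XWNW) and so lies in both classes; your explicit check of RSD's membership in class (ii), and your reading of the hypothesis within $|N|\geq 4$ and $|O|\geq 3$, simply spell out what the paper leaves implicit.
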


Note that Corollary \ref{COR1} covers the case $|N|=|O|\leq 4$ for which all impossibility results in the literature of random assignment were shown.\footnote{See for instance, Bogomolnaia and Moulin (2001), Martini (2016), Nesterov (2017), Basteck (2018) and Basteck and Ehlers (2023).} Moreover, the class of mechanisms satisfying \SETE, \SP\ and \XEFF\ includes mechanisms that are strictly more agent-object-wasteful.

\begin{proposition}
There exist mechanisms which are $\frac{5}{24}$-agent-object-wasteful and satisfy \SETE, \SP\ and \XEFF.
\end{proposition}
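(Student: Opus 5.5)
We prove the statement with an explicit construction rather than an abstract argument. We take a \emph{random priority} mechanism whose lottery over priority orders is not uniform. The profile $G$ of (\ref{RSD1}) is the template. Waste there arises from two events: agent $1$ (preferences $ac$) comes after both $ba$-agents $3,4$ while agent $2$ comes before him, and object $c$ stays unassigned because agent $1$ takes $a$ and agent $2$ takes $b$. Under the uniform lottery each event has probability $\frac{4}{24}$. The idea is to tilt the order distribution so that both events gain weight. For instance, put extra probability on the orders $4321,4231,3421,3241$ (agent $1$ unassigned) and on $1234,1243,2134,2143$ ($c$ unassigned), taking it from orders in which neither event occurs. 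The aim is that both events reach probability at least $\frac{5}{24}$. For $|N|>4$ or $|O|>3$ we add objects and agents that play no role at the relevant profile and embed the construction as in the lower-bound example for Theorem~\ref{the1}.

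Any mechanism of the form $f(R)=\sum_{\succ}\pi^{R}(\succ)f^{\succ}(R)$ is automatically \XEFF, since every serial dictatorship outcome is efficient. If the weights $\pi$ do not depend on the profile, it is also \SP, since each $f^{\succ}$ is. The difficulty is \SETE: a fixed non-uniform $\pi$ treats agents asymmetrically. So $\pi$ must be uniform over any pair of agents who are almost equals, yet biased at $G$. We would therefore let the weights depend only on the profile's \emph{structure of almost-equal classes}, in the following way. Partition agents into classes of identical ordinal rankings over objects, use a lottery that is exchangeable within each class, and allow biases only between agents in different classes. At $G$, agents $3,4$ form one class and $1$, $2$ are singletons. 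The tilt above is symmetric in $3\leftrightarrow4$, so \SETE\ holds at $G$. In general, \SETE\ follows from exchangeability within classes combined with the bounded-invariance property of serial dictatorship across the acceptability cut.

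The main obstacle is \SP. When $\pi$ depends on the profile, an agent may change the class partition by misreporting. Agent $1$ could report $ba$ and join the class of agents $3,4$, and the resulting change of lottery might benefit him. We would first try to make the lottery depend only on features an agent cannot profitably manipulate. The bias would be triggered only by agents' \emph{top} objects and be arranged so that any agent who changes his report is weakly moved \emph{later} in the priority. Since serial dictatorship is monotone in an agent's position, a deviation then cannot improve his assignment in the stochastic dominance sense. Concretely, we would favour agents whose top object is ranked top by at least two others, and verify by case analysis on deviations of each agent that this gives a deviator a weakly worse position distribution. A final computation at $G$ would then confirm $\min\{f_{11}(G),f_{cc}(G)\}=\frac{5}{24}$. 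If this monotone-trigger scheme breaks \SP\ in some case, the fallback is to search directly, by linear programming over the finitely many $4$-agent, $3$-object profiles, for order lotteries that satisfy the \SP\ and \SETE\ constraints while maximizing the waste at $G$.
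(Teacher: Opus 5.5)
Your proposal has a genuine gap: it never exhibits a mechanism. The property that makes the existence claim nontrivial is \SP\ of a lottery over priority orders that depends on the reported profile, and that is exactly what you leave open (``we would first try'', followed by a linear-programming fallback whose feasibility you do not know). The one concrete rule you offer also points the wrong way. At $G$, agents $2,3,4$ each have a top object that two others also rank first, while agent $1$ does not. Your trigger therefore favours $2,3,4$ over $1$. But $c$ is wasted at $G$ precisely when $1$ and $2$ both precede $3$ and $4$, and that event becomes less likely when $1$ is pushed back. The tilt you actually need is more weight on $1$ being among the first two together with $2$, and on $1$ being last with $2$ not first. This is not monotone in agent $1$'s position, so a ``deviators only move later'' argument cannot produce it.

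Targeting $G$ is also harder than it looks. By the first part of the proof of Theorem~\ref{the1} (including Claims 1 and 2), every mechanism in the class has $f_{1a}(G)=\frac{14}{24}$, $f_{2b}(G)=\frac13$ and $f_{2c}(G)=\frac{14}{24}$. Hence $f_{11}(G)=f_{cc}(G)=\frac{10}{24}-f_{1c}(G)$, and you need $f_{1c}(G)\le\frac{5}{24}$.

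Under \XEFF, at $(ca,G_{-1})$ both $c$ and $a$ are always assigned, so agent $1$'s total there equals $2-f_{2c}-f_{3a}-f_{4a}$. If agents $2,3,4$ keep their RSD lotteries, this total is $\frac{20}{24}$. \SP\ between the reports $ac$ and $ca$ (same acceptable set) then forces $f_{1c}(G)=\frac{6}{24}$. So any mechanism with waste $\frac{5}{24}$ at $G$ must change other agents' lotteries, and their \SP\ must then be re-verified across profiles. That is precisely the unfinished part of your sketch.

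The paper avoids this by choosing a profile where a modification confined to one agent suffices. It fixes $R_{-1}=(ca,cb,c)$ and sets $g_i=RSD_i$ for every $i\neq 1$ at every profile, so \SP\ of agents $2,3,4$ and almost all of \SETE\ are inherited from RSD. On the slice $\{(Q_1,R_{-1})\}$ where $a$ and $b$ are both acceptable to $1$, it removes $\frac{3}{24}$ of agent $1$'s lower-ranked object among $a,b$, reversing Erdil's (2014) improvement. Then:
\begin{itemize}
\item \SP\ of agent $1$ is a finite check;
\item \SETE\ only matters for $Q_1\in\{cab,cba\}$, where the shares of the objects shared with agent $2$ or $3$ are untouched;
\item \XEFF\ follows from the swap $(b,a,c,4)+(a,2,b,c)\to(1,a,b,c)+(a,2,c,4)$ of efficient assignments, listing what agents $1,\dots,4$ receive, with $i$ meaning $i$ is unassigned.
\end{itemize}
The resulting waste of $\frac{5}{24}$ is attained at $(ab,ca,cb,c)$ for agent $1$ and object $b$, not at $G$.
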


The proof is by example in Appendix \ref{Erdil reconsidered}.

\section{Object-wastefulness}\label{sec5}

An object, which remains unassigned with positive probability, may be desired by multiple
agents. Thus, instead of $q$-agent-object-wastefulness, we also consider $q$-object-wastefulness,
defined as follows:\newline

\noindent
\textbf{$q$-object-wastefulness:}
$f$ is $q$-object-wasteful if there exists a profile $R$ and $o\in O$ such that
$\min\{\sum_{i\in N:oP_ii}f_{ii}(R),f_{oo}(R)\}\geq q$.\newline

In words, a mechanism is $q$-object-wasteful if we could feasibly increase the probability with which an object is assigned to agents who desire it by $q$ and reduce the probabilities with which the agents and the object remain unassigned by the same amount. Similarly, as above we then define \textsl{at-most-$q$-object-wastefulness} for a mechanism. Note that $q$-agent-object-wastefulness implies $q$-object-wastefulness.

Regarding $q$-object-wastefulness of RSD, for four agents and three objects we obtain for profile $C''$ below:
\begin{equation}\label{RSD2}\begin{array}{cccc}
RSD(C'') & a & b & c \\ \hline
abc & \frac{1}{2} & \frac{3}{8} &  0\\
cba & 0 & \frac{3}{8} &  \frac{1}{2}\\
a & \frac{1}{2} & 0 &  0\\
c & 0 & 0 &  \frac{1}{2}
\end{array}.
\end{equation}
There are 24 orders and $b$ is unassigned for the orders 1234, 1243, 2134, 2143, 1324 and 2413, i.e., with probability $\frac{6}{24}=\frac{1}{4}$. Agent 1 is unassigned for the orders 4321, 4231 and 3421, i.e., with probability $\frac{3}{24}=\frac{1}{8}$ (and similarly for agent 2). Hence, RSD is $q$-object-wasteful with $q \geq \frac{1}{4}$.

%We define below \textsl{weak envy-freeness} for the full domain whereby agents are allowed to freely dispose probability shares.
%
%\begin{definition}
%Let $R$ be a profile and $p$ be a random assignment. Then $p$ is \textsl{weakly envy-free} if for any $i,j\in N$, there exists no $\hat{p}_j$ %such that $p_j$ object-by-object dominates $\hat{p}_j$, $\hat{p}_j\neq p_i$ and $\hat{p}_j$ $R_i$-dominates $p_i$.
%\end{definition}
%This means that $i$ can possibly dispose of probability shares of $j$'s random assignment $p_j$ to obtain a better random assignment than $p_i$ %in terms of stochastic dominance. Note that there is no relation among \SETE\ and \WEF.\footnote{Note that \SETE\ only applies to two agents %having the same ranking over objects (and thus does not imply \WEF) while for $O=\{a,b,c\}$, $R_1:abc$, $R_2:ab$, %$p_1=(p_{1a},p_{1b},p_{1c},p_{11})=(0.2,0.1,0.1,0.6)$ and
%$p_2=(p_{2a},p_{2b},p_{2c},p_{22})=(0.3,0,0,0.7)$ \WEF\ is satisfied and \SETE\ is violated.}

We show for four or more agents and for three or more objects that $\frac{1}{4}$-object-wastefulness is not specific for RSD as it is attained (or surpassed) by any mechanism satisfying (i) \SETE, \SP\ and \XWNW, %(ii) \ETE, \WEF, \SP\ and \XWNW,
or (ii) \ETE, \SP\ and \XEFF.
%Hence, the common lower bound of $\frac{1}{4}$ is obtained for these two classes of mechanisms regarding $q$-object wastefulness.

\begin{theorem}\label{theo2}
Let $|N|\geq 4$ and $|O|\geq 3$.
\begin{itemize}
\item[(i)] If $f$ satisfies \SETE, \SP\ and \XWNW, then $f$ is $q$-object-wasteful with $q\geq \frac{1}{4}$.
%\item[(ii)] If $f$ satisfies \ETE, \WEF, \SP\ and \XWNW, then $f$ is $q$-object-wasteful with $q\geq \frac{1}{4}$.
\item[(ii)] If $f$ satisfies \ETE, \SP\ and \XEFF, then $f$ is $q$-object-wasteful with $q\geq \frac{1}{4}$.\footnote{Basteck and Ehlers (2025) established on the domain, where all objects are acceptable, that RSD is not characterized by these three properties, i.e., already on this domain there is a full class of mechanisms satisfying those three properties.}
\end{itemize}
\end{theorem}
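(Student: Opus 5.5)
Our plan is to mirror the structure we expect for the proof of Theorem \ref{the1}: reduce to $|N|=4$, $|O|=3$ by letting all extra agents find every object unacceptable and letting no agent find extra objects acceptable. Ex-post weak non-wastefulness (resp. ex-post efficiency) then forces these dummy agents and objects to be inert. Next we start from the profile $C''$ of (\ref{RSD2}), with $R_1:abc$, $R_2:cba$, $R_3:a$, $R_4:c$, and we aim to pin down enough of $f(C'')$ to conclude $f_{bb}(C'')\geq \frac14$ and $f_{11}(C'')+f_{22}(C'')\geq\frac14$. The argument is a chain of profiles linked by single-agent deviations. Strategy-proofness transmits probability shares along the chain, and the fairness axiom (ETAE in (i), ETE in (ii)) equalizes shares among agents with identical or almost identical rankings.

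The key steps are as follows. First, we consider profiles in which agents 1 and 3 both rank only $a$, or both rank $ab$ versus $a$, and likewise agents 2 and 4 with respect to $c$ and $cb$. Under ETAE, agents $1$ and $3$ with $R_1:ab$ and $R_3:a$ receive equal shares of $a$. Weak non-wastefulness forces $a$ to be fully assigned, since an unassigned $a$ would coexist with an unassigned agent who desires it. This gives shares of $\frac12$ each. In case (ii) we instead use ETE at profiles where both agents report $abc$ (resp. $cba$), and ex-post efficiency to control the remaining mass. Second, we use strategy-proofness as agent 1 moves from $a$ to $ab$ to $abc$, and symmetrically for agent 2 with respect to $c$. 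The upper-contour constraints then imply that $f_{1a}(C'')=\frac12$ and $f_{2c}(C'')=\frac12$: agent 1 cannot raise his share of $a$ by truncating or extending, and the reverse deviation rules out any decrease. Third, the shares of $b$ are bounded. Agent 1 obtains $b$ only when he does not receive $a$, and the support consists of weakly non-wasteful assignments. Enumerating the assignments compatible with $f_{1a}=f_{3a}=f_{2c}=f_{4c}=\frac12$ shows that $b$ is unassigned exactly when agents 1 and 2 receive $a$ and $c$ respectively.

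The main obstacle is the final step: converting these marginals into the two inequalities. By a rearrangement argument, the probability that agent 1 receives $a$ and agent 2 receives $c$ is not determined by the marginals alone. It can range from $0$ to $\frac12$, and RSD yields $\frac14$. Marginals from strategy-proofness at $C''$ alone are therefore insufficient, and we need an additional profile. We would try a second chain in which agent 2's report is changed to $b$ or $cb$ while agent 1 still reports $abc$, and use strategy-proofness for agent 1's share of $b$ (comparing $abc$ with $ab$) together with agent 2's share of $b$ at the neighboring profile. Combining these, we aim to derive $f_{1b}(C'')+f_{2b}(C'')\leq\frac34$, hence $f_{bb}(C'')\geq\frac14$. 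Since $f_{11}(C'')=1-\frac12-f_{1b}(C'')$ and similarly for agent 2, we have $f_{11}(C'')+f_{22}(C'')=1-(f_{1b}(C'')+f_{2b}(C''))\geq\frac14$, so both quantities in the definition of object-wastefulness for $o=b$ are at least $\frac14$. If this bound fails, the fallback is a further chain through profiles in which agent 3 (resp. 4) also adds $b$, using ETAE between agents 1 and 3 to split $b$ symmetrically and to obtain the missing linear inequality.
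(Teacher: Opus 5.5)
\textbf{There is a genuine gap: the decisive inequality is never proved.} Several parts of your proposal do match the paper: the reduction to four agents and three objects, the choice of profile $C''$, and the observation that once
\[
f_{1a}(C'')=f_{3a}(C'')=f_{2c}(C'')=f_{4c}(C'')=\tfrac12,
\]
both $f_{bb}(C'')$ and $f_{11}(C'')+f_{22}(C'')$ equal $1-f_{1b}(C'')-f_{2b}(C'')$. So everything hinges on showing $f_{1b}(C'')+f_{2b}(C'')\le\frac34$, and you only say you ``aim to'' derive it.

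There is also a smaller omission. To get the shares $\frac12$ you need $f_{2a}(C'')=0$ and $f_{1c}(C'')=0$. ETAE and XWNW at $(ab,cba,a,c)$ only give $f_{1a}=f_{3a}\le\frac12$. Under ETAE the missing fact is Lemma \ref{lem.none.at.bottom}, which rests on Lemma \ref{lem.equal.split.at.top}(b). Under XEFF it is immediate.

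Your first proposed chain cannot produce the inequality. Compare $cba$ with $cb$ for agent 2, or $abc$ with $ab$ for agent 1. The relevant upper contour sets coincide, so SP only says $f_{2b}$ (respectively $f_{1a},f_{1b}$) is unchanged. This just moves the unknown to a profile with the same indeterminacy you correctly identified at $C''$.

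Your fallback is only sketched, and as described it does not connect back to $C''$. At $(abc,cba,ab,c)$, ETAE ties agent 3's $b$-share to agent 1's at that profile. But SP links that profile to $C''$ only through agent 3's own shares. These are $f_{3a}=\frac12$ in both and carry no information about $f_{1b}(C'')$ or $f_{2b}(C'')$.

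\textbf{The missing idea.} You need a chain ending at a profile where ETE fixes every share. You also need the SP fact that an agent's \emph{total} assignment probability is the same across any two reports that rank all objects acceptable. The paper's chain, with $\alpha=f_{2b}(C'')$, runs as follows.

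\begin{enumerate}
\item Agent 2 reports $abc$, identical to agent 1 (profile $C'''$). By SP agent 2's total stays $\frac12+\alpha$.
\item At $C'''$, $a$ is split $\frac13$ each among agents 1, 2, 3 by Lemma \ref{lem.equal.split.at.top}(b), and $b$ is split $\frac12$ between agents 1 and 2. ETE then gives $f_{1c}(C''')=f_{2c}(C''')=\alpha-\frac13$. Since $c$ is fully assigned, $f_{4c}(C''')=\frac53-2\alpha$.
\item Agent 4 reports $cab$ (profile $R^0$). She gets no $a$ or $b$ there, so her total remains $\frac53-2\alpha$.
\item Agent 4 then reports $abc$ (profile $E$). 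ETE and full assignment give her $\frac14+\frac13+\frac13=\frac{22}{24}$.
\item SP between the two complete reports $cab$ and $abc$ forces $\frac53-2\alpha=\frac{22}{24}$. So $\alpha>\frac38$ is impossible; in fact $\alpha=\frac38$.
\item The mirror-image argument bounds $f_{1b}(C'')$ by $\frac38$ in the same way.
\end{enumerate}

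So the paper bounds $f_{1b}(C'')$ and $f_{2b}(C'')$ separately through agent 4's total assignment probability. It does not bound their sum through agents 1 and 2's $b$-shares as you envisage. Without this chain or an equivalent one, your proposal does not establish the theorem.
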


Furthermore, for four or more agents and three objects, RSD attains the minimal bound of $\frac{1}{4}$ for $q$-object-wastefulness.
%there is no difference whether we regard $q$-object-wastefulness or the strongest notion of $q$-wastefulness as they give us the same results.
%\begin{remark}
%Note that for three objects, for any mechanism satisfying ETE, SP and XEFF, any other aggregate waste-notion will give the same results as at %most one object is wasted, i.e., it suffices to consider (aggregate)$q$-agent-object-wastefulness.
%\end{remark}

\begin{lemma}\label{lem.RSD.tight}
For $|N|\geq 4$ and $|O|=3$, RSD is at-most-$\frac{1}{4}$-object-wasteful.
\end{lemma}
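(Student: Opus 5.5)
The bound is attained: the profile $C''$ in (\ref{RSD2}) shows that RSD is $\frac14$-object-wasteful for $|N|=4$. For $|N|>4$ I would append agents who find every object unacceptable. Such agents never take an object, so the RSD shares of agents 1--4 do not change, and neither does the unassigned share of $b$. Hence it suffices to show that for every profile $R$ and every object $o$ we have $\min\{\sum_{i\in A}RSD_{ii}(R),RSD_{oo}(R)\}\leq\frac14$, where $A=\{i\in N:oP_ii\}$.

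\textbf{Basic observation.} Every serial dictatorship outcome $f^{\succ}(R)$ is non-wasteful. So whenever $o$ is left unassigned, every $i\in A$ receives an object he strictly prefers to $o$. Write $O=\{o,b,c\}$. Then $o$ can be unassigned only if $|A|\leq 2$, and only if each agent of $A$ receives one of $b,c$ and prefers it to $o$. Moreover, the event ``$o$ unassigned'' and the event ``some $i\in A$ unassigned'' are disjoint. If $|A|\geq 3$ or $A=\emptyset$, the claim is immediate. This leaves $|A|\in\{1,2\}$, and I would treat these cases separately.

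\textbf{Case $|A|=1$, $A=\{i\}$.} For $o$ to be unassigned, $i$ must rank some $x\in\{b,c\}$ above $o$ and obtain it. For $i$ to be unassigned, $o$ must also be taken, but only agents in $A$ accept $o$, so $o$ would have to be unassigned. This contradicts non-wastefulness. Hence $RSD_{ii}(R)=0$ and the minimum is $0$.

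\textbf{Case $|A|=2$, $A=\{i,j\}$.} This is the main work. Now $RSD_{oo}(R)$ is the probability that $i$ and $j$ take $b$ and $c$ between them. The quantity $RSD_{ii}+RSD_{jj}$ counts orders in which $i$ or $j$ finds $b,c$ (and $o$) exhausted. Since nobody outside $A$ wants $o$, an agent of $A$ can be unassigned only if the other agent of $A$ took $o$ and someone else took whichever of $b,c$ that agent accepts. I would classify by the rankings of $i,j$ over $\{o,b,c\}$ restricted to their acceptable sets, and by the sets $N_b,N_c,N_{bc}$ of outside agents accepting only $b$, only $c$, or both. The relevant events depend only on the relative order of $i$, $j$ and the first one or two outside competitors for $b$ and $c$. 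So each probability reduces to a small computation over relative orders of at most four ``effective'' agents, which I expect to be maximized when the competitor sets are singletons, as in $C''$.

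In each subcase I would bound both quantities by explicit functions of these few relative-order events and verify that the minimum is at most $\frac14$. A useful inequality here is a coupling: swapping $i$ or $j$ with the first outside competitor maps orders where $o$ is unassigned injectively to orders where an agent of $A$ is unassigned, or vice versa, depending on the subcase.

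\textbf{Main obstacle.} The hard part is this $|A|=2$ case analysis. One must show that adding more outside competitors (the case $|N|>4$) cannot push both quantities above $\frac14$ simultaneously: extra competitors raise the chance that $i$ or $j$ is unassigned, but they lower the chance that $i,j$ grab both $b$ and $c$. I would first handle the subcase where $i$ and $j$ both rank $o$ last among acceptable objects, as in $C''$. I would then show that more competitors make $RSD_{oo}$ decrease at least as fast as the unassigned mass grows, so that the trade-off peaks at $\frac14$.
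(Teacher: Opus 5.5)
\textbf{There is a genuine gap: the case $|A|=2$, which carries the whole upper bound, is described but not proved.}

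Your preliminary steps are sound and agree with the paper. These are the lower bound via $C''$ padded with agents who accept nothing, the disjointness of ``$o$ unassigned'' and ``some agent of $A$ unassigned'', and the cases $A=\emptyset$, $|A|\geq 3$ and $|A|=1$. For $|A|=2$, however, the sentences ``I expect the probabilities to be maximized when the competitor sets are singletons'' and ``more competitors make $RSD_{oo}$ decrease at least as fast as the unassigned mass grows, so the trade-off peaks at $\frac14$'' are not arguments; they essentially restate the lemma. The one concrete tool you name does not close the case either. An injection from the orders where $o$ is unassigned into the orders where an agent of $A$ is unassigned (or conversely) only shows which of the two quantities is smaller, not that the smaller one is at most $\frac14$. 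Disjointness, together with the fact that at most one agent of $A$ is unassigned in any serial dictatorship outcome, only gives $RSD_{oo}(R)+\sum_{i\in A}RSD_{ii}(R)\leq 1$, i.e.\ a bound of $\frac12$ on the minimum.

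\textbf{The missing idea is that no trade-off needs balancing: in every configuration one of the two quantities is small outright.} This is the paper's argument.
\begin{itemize}
\item \emph{Setup.} Drop agents with empty acceptable sets; this only lowers $n$, and with at most three effective agents RSD has no waste by Martini (2016, Proposition 2). Then every agent outside $A=\{i,j\}$ has top $b$ or $c$, and takes one of them if he picks while it is available.
\item \emph{Different tops, $n\geq 5$.} $o$ is unassigned only if each of $i,j$ precedes all outsiders sharing his top. Hence $RSD_{oo}(R)\leq\frac{1}{(s+1)(t+1)}$ with $s+t=n-2$, which is at most $\frac{1}{n-1}\leq\frac14$.
\item \emph{Common top, $n\geq 5$.} One of $i,j$ must pick first overall, and the other must precede the $k$ outsiders accepting the remaining object. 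This gives $RSD_{oo}(R)\leq\frac{2}{n(k+1)}\leq\frac1n$ when $k\geq 1$.
\item \emph{Common top, $k=0$.} An agent of $A$ who accepts the remaining object can never be unassigned, since the other agent of $A$ would have to take both it and $o$. If neither accepts it, $o$ is never unassigned. So at most one agent of $A$ is ever unassigned, and the object-waste bound collapses to the agent-object bound $\frac16$ of Lemma \ref{lemm1}.
\item \emph{$n=4$.} If one outsider accepts $b$ and the other accepts $c$, the first-picker argument gives $RSD_{oo}(R)\leq\frac12\cdot\frac12=\frac14$. Otherwise both outsiders accept only the same single object, and the same collapse to Lemma \ref{lemm1} occurs.
\end{itemize}
Your plan needs these direct bounds on $RSD_{oo}$, plus the reduction to Lemma \ref{lemm1}, in place of the unproved trade-off analysis.
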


As RSD satisfies \SETE, \SP, and \XEFF,
we obtain the following from Theorem \ref{theo2} and Lemma \ref{lem.RSD.tight}.

\begin{corollary}\label{cor.RSD.tight}
For $|N|\geq  4$ and $|O|=3$, RSD attains the minimal bound of $\frac{1}{4}$ for at-most-$q$-object-wastefulness
\begin{itemize}
\item[(i)] in the class of mechanisms satisfying \SETE, \SP\ and \XWNW, and
%\item[(ii)] the class of mechanisms satisfying \ETE, \WEF, \SP\ and \XWNW, and
\item[(ii)] in the class of mechanisms satisfying \ETE, \SP\ and \XEFF.
\end{itemize}
\end{corollary}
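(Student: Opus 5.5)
The corollary follows by combining Theorem \ref{theo2} with Lemma \ref{lem.RSD.tight}, once we check that RSD belongs to both classes of mechanisms. Throughout, we fix $|N|\geq 4$ and $|O|=3$, so that both results apply. Here ``attaining the minimal bound'' means that RSD's least upper bound on object-wastefulness equals the infimum, over the class, of the mechanisms' least upper bounds.

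\textbf{Step 1: RSD lies in both classes.} Each serial dictatorship $f^{\succ}$ produces an efficient assignment under $R$. Hence every assignment in the support of $R\!S\!D(R)$ lies in $\mathcal{PO}(R)$, so RSD is \XEFF. By the chain $\mathcal{PO}(R)\subseteq \mathcal{NW}(R)\subseteq \mathcal{WNW}(R)$, RSD is then also \XWNW. RSD is \SP\ because each $f^{\succ}$ is strategy-proof and stochastic dominance is preserved under averaging over orders. RSD satisfies \SETE, as recorded in the footnote following its definition (via bounded invariance together with \ETE\ and \SP). Since \SETE\ implies \ETE, RSD satisfies the properties of class (i) and of class (ii).

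\textbf{Step 2: lower bound for every member.} Let $f$ be any mechanism in class (i) or class (ii). Theorem \ref{theo2}(i) or (ii), respectively, gives a profile $R$ and an object $o$ with
$\min\{\sum_{i:oP_ii}f_{ii}(R),f_{oo}(R)\}\geq q$ for some $q\geq \frac{1}{4}$. Since preference profiles and objects are finite, the least upper bound $q_f$ of $f$'s object-wastefulness is well defined. Moreover, $q_f$ is at least this $q$, so $q_f\geq \frac{1}{4}$.

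\textbf{Step 3: RSD attains the bound.} By Lemma \ref{lem.RSD.tight}, RSD is at-most-$\frac{1}{4}$-object-wasteful, so $q_{RSD}=\frac{1}{4}$. By Step 1, RSD is a member of each class. Together with Step 2, $\frac{1}{4}$ is therefore the minimum of $q_f$ over each class, and RSD attains it.

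The only point requiring care is Step 1, namely verifying that RSD satisfies the fairness and efficiency requirements in both classes. The substantive work is already contained in Theorem \ref{theo2} and Lemma \ref{lem.RSD.tight}.
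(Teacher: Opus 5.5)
Your proposal is correct and follows the paper's argument. The paper likewise obtains the corollary by combining Theorem~\ref{theo2}, which gives the lower bound $\frac{1}{4}$ for every mechanism in either class, with Lemma~\ref{lem.RSD.tight}, which shows RSD is at-most-$\frac{1}{4}$-object-wasteful, after noting in one line that RSD satisfies \SETE, \SP\ and \XEFF; your Step 1 simply spells out that membership check in more detail.
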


\section{Random Deferred Acceptance}\label{secRDA}

We first define the (agent-proposing) deferred-acceptance (DA) algorithm.
A priority structure $\gtrsim=(\gtrsim_o)_{o\in O}$ specifies a (strict) priority order $\gtrsim_o$ for each object $o$. Now for $\gtrsim$ we run the DA-algorithm with profile $R$ and denote its outcome by $DA^{\gtrsim}(R)$.

\begin{quote}
\begin{center}
\textbf{DA Algorithm}
\end{center}

Let $\gtrsim=(\gtrsim_o)_{o\in O}$ and profile $R$ be given.

\textbf{Step 1.} Each agent proposes to his top-ranked acceptable object. If there is no such object, then he is unassigned. Each object $o$ considers the set of proposals that it receives. Among them, it tentatively accepts the highest $\gtrsim_o$-ranked agent and rejects the others. If there is no rejection, then stop. Otherwise, move to Step $2$.

\textbf{Step} {\boldmath $t\geq 2$.} Each agent who is rejected at Step $t-1$ proposes to his top-ranked acceptable object among the ones that have not rejected him yet. If there is no such object, then he is unassigned. Each object $o$ considers the agent whom it tentatively accepted at Step $t-1$ together with agents who have proposed at Step $t$. Among them, it tentatively accepts  the highest $\gtrsim_o$-ranked agent and rejects the others. If there is no rejection, then stop. Otherwise, move to Step $t+1$.
\end{quote}

We denote the outcome of the (agent-proposing) DA-algorithm for $\gtrsim$ and $R$ by $DA^{\gtrsim}(R)$.

The random DA mechanism (RDA) puts probability $\left(\frac{1}{|N|!}\right)^{|O|}$ on any priority structure $\gtrsim$ and its outcome $DA^{\gtrsim}(R)$ (where for identical outcomes probabilities are added). RDA satisfies \ETE, \SP\ and \XNW\ but violates both \SETE\ and \XEFF.
RDA violates \SETE\ as for $N=\{1,2,3\}$, $O=\{a,b\}$ and profile $R'$ below we obtain:\footnote{Note that 1 obtains $a$ with probability $\frac{1}{2}$ when $1> _a2$, 1 obtains $b$ with probability $\frac{1}{4}$ when $2>_a 1$ and $1>_b3$, while $3$ obtains $a$ with probability $\frac{1}{12}$ when $3>_a2>_a1$ (which arises with probability $\frac{1}{6}$) and $1>_b3$.}
\[
\begin{array}{cccc}
RDA(R') & a & b  \\ \hline
ab & \frac{1}{2} & \frac{1}{4}  \\
a & \frac{5}{12} & 0 \\
ba & \frac{1}{12} & \frac{3}{4}
\end{array}.
\]
Hence, $RDA_{1a}(R')>RDA_{2a}(R')$ and $RDA(R')$ violates \SETE.\footnote{One may also view it as a violation of weak (sd-)envy-freeness in the sense that, restricted to objects that $2$ finds acceptable, $1$ receives a lottery over objects that stochastically dominates the lottery of $2$. }
% and \WEF\ as agent 2 can dispose $RDA_{1b}(R')$ to obtain a random assignment which sd-dominates $RDA_2(R')$.

For profile $G$ in (\ref{RSD1}), under RDA objects $a$ and $b$ are fully assigned whereas $c$ remains unassigned with probability $\frac{1}{9}$ (as $c$ is unassigned only if $2$ receives $b$ (which arises when both $2>_b3$ and $2>_b4$, i.e., with probability $\frac{1}{3}$) and 1 receives $a$ (which arises when both $1>_a3$ and $1>_a4$, i.e., with probability $\frac{1}{3}$).

Hence, $RDA(G)$ is $\frac{1}{9}$-agent-object wasteful as\footnote{Note that 1 is assigned $a$ when $1>_a\{3,4\}$ which arises with probability $\frac{1}{3}$, when both $3>_b\{2,4\}$ and $3>_a1>_a4$ which arises with probability $\frac{1}{3}\cdot \frac{1}{6}=\frac{1}{18}$, and
when both $4>_b\{2,3\}$ and $4>_a1>_a3$ which arises with probability $\frac{1}{3}\cdot \frac{1}{6}=\frac{1}{18}$. Hence, $RDA_{1a}(G)=\frac{4}{9}$. Furthermore, 1 is unassigned only when 2 receives $c$, and then either 3 receives $b$ and $4$ $a$ (which arises when $3>_b\{2,4\}$, $4>_a1$ and $2>_c1$, i.e., with probability $\frac{1}{12}$ or 3 receives $a$ and $4$ $b$ (which arises also with probability $\frac{1}{12}$). Hence, $RDA_{11}(G)=\frac{1}{6}=\frac{3}{18}$. %Furthermore, 2 is assigned $c$ when 3 is assigned $b$ and 1 is assigned $a$, i.e., when $3>_b\{2,4\}$ and $1>_a4$ which arises with probability $\frac{1}{3} \cdot \frac{1}{2}=\frac{1}{6}$, and 2 is assigned $c$ when 4 is assigned $b$ and 1 is assigned $a$, i.e., when $4>_b\{2,3\}$ and $1>_a3$ which arises with probability $\frac{1}{3} \cdot \frac{1}{2}=\frac{1}{6}$. Hence, $RDA_{2c}(G)=\frac{1}{3}$.
Now the remaining entries in $RDA(G)$ follow from \ETE, feasibility and the fact that $c$ is wasted with probability $\frac{1}{9}$ while $a$ and $b$ are fully assigned.}
\begin{equation}\label{RDA2}
\begin{array}{cccc}
RDA(G) & a & b & c \\ \hline

ac & \frac{4}{9} & 0 & \frac{7}{18} \\

bc & 0 & \frac{1}{3} &  \frac{1}{2}\\

ba & \frac{5}{18} & \frac{1}{3} & 0 \\

ba & \frac{5}{18} & \frac{1}{3} &  0
\end{array},
\end{equation}
$RDA_{22}(G)=\frac{1}{6}$, $cP_22$ and $RDA_{cc}(G)=\frac{1}{9}$. Note that then from (\ref{RSD1}) it follows that
$RSD_2(G)$ stochastically $G_2$-dominates $RDA_2(G)$ while $RDA_3(G)$ stochastically $G_3$-dominates $RSD_3(G)$,
%\footnote{This even remains true under the ``bi-dominance'' relation considered by Zhang (2023b) whereby a random assignment both strictly sd-dominates and strictly size-dominates another random assignment.}
i.e., there is no stochastic dominance relation among $RSD$ and $RDA$.

While the above establishes lower agent-object-wastefulness of RDA compared to
RSD at profile $G$, below we establish bounds for RDA with respect to $q$-agent-object-wastefulness
and $q$-object-wastefulness across all preference profiles.
%Below we establish bounds for RDA with respect to $q$-agent-object-wastefulness and $q$-object-wastefulness.
% (and its proof is relegated to the Appendix).

\begin{lemma}\label{lemRDA}
Let $|N|=4$ and $|O|=3$. Then the following holds.
\begin{itemize}
\item[(a)] RDA is at-most-$q$-agent-object-wasteful with $q< \frac{1}{6}$.
\item[(b)] RDA is at-most-$q$-object-wasteful with $q\geq \frac{1}{4}$.
\end{itemize}
\end{lemma}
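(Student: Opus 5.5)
For part (b) I would give an explicit profile at which RDA wastes an object with probability at least $\frac{1}{4}$. Because RDA violates both \SETE\ and \XEFF, Theorem \ref{theo2} does not apply, so a computation is needed.

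\textbf{Part (b).} My first candidate is a variant of $C''$ from (\ref{RSD2}). In $C''$ itself, $b$ is unassigned under RDA exactly when $1>_a 3$ and $2>_c 4$, which has probability $\frac{1}{4}$. However, agents $1$ and $2$ are then rarely unassigned, so the aggregate unassigned probability of agents desiring $b$ falls short of $\frac{1}{4}$. I would therefore modify the profile as follows.

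\begin{itemize}
\item Make $b$ acceptable (but bottom-ranked) for the ``single-object'' agents $3$ and $4$, or for further agents who lose their top choice.
\item Keep the structure in which $b$ is never proposed to whenever agents $1$ and $2$ win their tops at $a$ and $c$.
\end{itemize}

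For each candidate profile I would compute RDA by conditioning on the relevant relative priorities at each object. Since DA only depends on pairwise comparisons among proposers, this reduces to a few independent coin flips. From this I obtain $RDA_{bb}$ and $\sum_{i:bP_ii}RDA_{ii}$.

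I expect to tune the profile until both quantities are at least $\frac{1}{4}$. The natural tension is that whenever an agent desiring $b$ is rejected elsewhere, he proposes to $b$ and fills it. So I would look for a profile where the unassignment of the $b$-desiring agents happens on events disjoint from the ones in which $b$ is left over.

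\textbf{Part (a).} With $|N|=4$ and $|O|=3$ there are finitely many profiles, so the least upper bound is a maximum. It therefore suffices to show that $\min\{RDA_{ii}(R),RDA_{oo}(R)\}<\frac{1}{6}$ for every $R$, every $o$, and every $i$ with $oP_ii$.

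I would first prune the profile space using two structural facts about DA.
\begin{itemize}
\item $DA^{\gtrsim}(R)$ is non-wasteful. Hence on the event where $o$ is unassigned, $i$ holds an object he prefers to $o$. So the events $\{i$ unassigned$\}$ and $\{o$ unassigned$\}$ are disjoint, and any other agent desiring $o$ is assigned on the latter event.
\item Agent $i$ is unassigned only if every object on his list is held by an agent ranked above him in that object's (independent, uniform) priority order. This bounds $RDA_{ii}$ by products of probabilities of the form $\frac{k}{k+1}$.
\end{itemize}

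Combining these, I would do a case split on how many agents find $o$ acceptable and on the position of $o$ in $R_i$, after reducing by symmetries (renaming agents and objects, and deleting objects below $o$ that are irrelevant). In each surviving case I would show that pushing $RDA_{oo}$ up to $\frac{1}{6}$ forces $RDA_{ii}<\frac{1}{6}$, or vice versa. Profile $G$, with values $\frac{1}{6}$ and $\frac{1}{9}$, serves as the extremal benchmark.

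The main obstacle is this exhaustive step in (a): ruling out every profile reaching $\frac{1}{6}$ rather than just lowering the bound at $G$. If a clean inequality does not emerge, I would fall back on a symmetry-reduced enumeration of all profiles. In that enumeration, RDA is computed exactly over the $(4!)^3$ priority structures.
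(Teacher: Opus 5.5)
\textbf{Part (b)} fails most concretely. You never exhibit a witness, and the modification you suggest destroys the waste. If $b$ becomes acceptable (after $a$) for agent $3$, then whichever of $1$ and $3$ is rejected at $a$ proposes to $b$ next, so $b$ is assigned with probability one. The same happens if $4$ finds $b$ acceptable. Your diagnosis of $C''$ is correct: $1$ and $2$ are rarely unassigned because each keeps a third-ranked fallback. The cure is the opposite move, namely to truncate their lists. Take $R_1:ab$, $R_2:cb$, $R_3:a$, $R_4:c$, which is the paper's $\hat{R}$ after swapping the names of $b$ and $c$. Then $b$ is idle exactly when $1>_a3$ and $2>_c4$, which has probability $\frac14$. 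On the event $3>_a1$, $4>_c2$ (also probability $\frac14$), both $1$ and $2$ propose to $b$ and the loser is unassigned. Hence $RDA_{11}=RDA_{22}=\frac18$ and $\min\{RDA_{bb},RDA_{11}+RDA_{22}\}=\frac14$.

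\textbf{Part (a)} uses the right ingredients: non-wastefulness of each $DA^{\gtrsim}(R)$, stability, and a case split. However, the case analysis is not carried out. The one bound you state is too coarse to close it: each object on $i$'s list is held by \emph{some} agent ranked above $i$, giving $\prod_x\frac{k_x}{k_x+1}$ with $k_x$ the number of other agents finding $x$ acceptable. For $R_1:abc$, $R_2:ac$ and agents $3,4$ both finding $a$ and $b$ acceptable, this gives only $\frac34\cdot\frac23\cdot\frac12=\frac14$.

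The paper first reduces, via non-wastefulness and $|O|=3$, to the following situation:
\begin{itemize}
\item exactly two agents, $1$ and $2$, find the wasted object $c$ acceptable, and neither ranks it first;
\item agents $3$ and $4$ have nonempty lists with tops in $\{a,b\}$.
\end{itemize}
Then $1$ can be unassigned only at $\mu=(1,c,a,b)$ or $\nu=(1,c,b,a)$, listing the assignments of agents $1,\dots,4$. At each of these, stability forces the holder of every object to beat, at that object, \emph{all} agents who prefer it to their own assignment. These are independent single-object events. Outside the case $R_1:ac\cdots$, $R_2:bc\cdots$, one of them is a three-way comparison ($3>_a\{1,2\}$ or $4>_b\{1,2\}$ at $\mu$), so each of $\mu,\nu$ has probability at most $\frac1{12}$. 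The two bounding events overlap with positive probability while $\mu,\nu$ are disjoint, so $RDA_{11}<\frac16$ strictly.

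In the remaining case one must bound the object instead. If both $\mu$ and $\nu$ have positive probability, then $3$ and $4$ find both $a$ and $b$ acceptable. An idle $c$ then forces $1>_a\{3,4\}$ and $2>_b\{3,4\}$, giving $RDA_{cc}\le\frac19$. Otherwise $RDA_{11}\le\frac18$.

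An exhaustive computer check would also settle this finite claim. As written, though, your (a) is a plan, not an argument.
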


For four agents and three objects, on the one hand, by Lemma \ref{lemRDA} (a) there exist mechanisms in the class of \ETE, \SP\ and \XNW\ having at-most-$q$-agent-object waste with $q<\frac{1}{6}$. On the other hand, Theorem \ref{the1} implies that any such mechanism must violate
\SETE.%, weak envyfreeness and \XEFF.

While for RDA at-most-$q$-agent-object-waste is smaller than for RSD, for RDA and RSD $q$-object-waste remains greater than or equal to $\frac{1}{4}$. In fact, for profile $\hat{R}$ in the proof of Lemma \ref{lemRDA} (b) we obtain that $RDA(\hat{R})=RSD(\hat{R})$ (where for determining $RDA(\hat{R})$ we use the same arguments as in (\ref{RSD2})). In particular, the mixture mechanisms $\alpha RDA + (1-\alpha) RSD$ (where $0<\alpha<1$) are $\frac{1}{4}$-object wasteful %while having at-most-$q$-agent-object waste with $q<\frac{1}{6}$.
while having, for four agents and three objects, at-most-$q$-agent-object waste with $q < \frac{1}{6}$.\footnote{If one agent ranks no object acceptable, then from Martini (2016, Proposition 2) RSD is 0-agent-object-wasteful and this follows from Lemma 3 (a); if two objects are top ranked or two agents rank them in the same order first and second (say $a$ and $b$), then only the third object $c$ is possibly not fully assigned under RSD and RDA and this follows from Lemma 3 (a) and  Lemma 1; and otherwise, all agents have the same top object, say $a$, and exactly two agents rank $b$ (say 1) and $c$ (say 2) second, but then under both RSD and RDA 1 and 2 are never unassigned.}

Below we discuss single-tie-breaking versus multiple-tie-breaking for RDA when all agents have equal priority at all objects.

\begin{remark}
Suppose that there are at least four agents, at least three objects, and all agents have equal priority for all objects. Then STB-RDA (where STB stands for single-tie-breaking) is equal to RSD whereas MTB-RDA (where MTB stands for multiple tie-breaking) is equal to RDA.
\begin{itemize}
\item[(i)] On the one hand, if multiple tie-breaking is deterministic, then Abdulkadiro\u{g}lu, Pathak and Roth (2009, Theorem 1) show that no strategy-proof deterministic mechanism dominates
$DA^{\gtrsim}$. Note that single-tie-breaking is the instance of multiple-tie-breaking where all tie-breakers are identical for all objects.
On the other hand, if single-tie-breaking is random, then STB-RDA becomes RSD which is sd-dominated by a strategy-proof random mechanism by Erdil (2014, Proposition 3 (i)).\footnote{Note that Erdil (2014, Proposition 3 (ii)) only considers STB-RDA.}

But now it follows that the same conclusion holds for multiple-tie-breaking as we may write
\[ RDA = \alpha RSD + (1-\alpha) h\]
where $\alpha$ denotes the probability for MTB having all tie-breaking orders to be identical\footnote{As there are $n!$ tie-breaking orders, we have $\alpha= \frac{|N|!}{(|N|!)^{|O|}}=\frac{1}{(|N|!)^{|O|-1}}$.} whereas $h$ denotes the part of MTB-RDA where at least two tie-breaking-orders are distinct. Now let $\bar{h}$ denote the mechanism constructed in the proof of Erdil (2014, Proposition 3 (i)).
But then $\alpha \bar{h} + (1-\alpha) h \rhd^{sd}  RDA$.

Therefore, on the one hand MTB-DA is not dominated by a strategy-proof deterministic mechanism while there exist strategy-proof random mechanisms sd-dominating MTB-RDA and STB-RDA.\footnote{Han (2024a) determines conditions under which STB-RDA is constrained efficient.}
\item[(ii)] For RDA at-most-$q$-agent-object-waste is smaller than for RSD for four agents and three objects. %, for RDA maximal $q$-object-waste is greater than or equal to the one of RSD (which equals $\frac{1}{4}$).
    Note that for receiving top objects Abdulkadiro\u{g}lu, Pathak and Roth (2009, Table 3) make a point in favor of STB-DA over MTB-DA by random simulations\footnote{Other contributions comparing top choices for (random) tie-breaking are Arnosti (2023), Ashlagi and Nikzad (2020), and Ashlagi, Nikzad and Romm (2019).} whereas we established a criterion in favor of MTB-DA over STB-DA in terms of at-most-$q$-agent-object-wastefulness.
\end{itemize}
\end{remark}

%\section{Discussion}

\section{Ex-Ante (Weak) Non-Wastefulness}\label{sec6}

%Note that our ex-ante notions are the ones corresponding to ``weak non-wastefulness'' as we only consider waste whereby agents are unassigned %with positive probability. This is weaker than the ex-ante notions of non-wastefulness of Erdil (2014) and Martini (2016) whereby agents are %assigned with positive probability less preferred objects.

Erdil (2014) introduced the following ex-ante notion of non-wastefulness: a random assignment is ex-ante non-wasteful if there is no agent-object pair such that with positive probability both the object is unassigned and the agent receives a less preferred object. For our purposes, we also introduce the ex-ante notion of weak non-wastefulness whereby the agent is unassigned with positive probability and ranks the object acceptable.\newline

\noindent
\textbf{Ex-ante non-wastefulness:} For all $R\in \mathcal{R}^N$, there exist no $i\in N$ and $o,\hat{o}\in O\cup\{i\}$ such that $f_{oo}(R)>0$, $oP_i\hat{o}$ and $f_{i\hat{o}}(R)>0$.\newline

\noindent
\textbf{Ex-ante weak non-wastefulness:} For all $R\in \mathcal{R}^N$, there exist no $i\in N$ and $o\in O$ such that $f_{oo}(R)>0$, $oP_ii$ and $f_{ii}(R)>0$.\newline

It is obvious that these ex-ante notions imply the corresponding weaker ex-post ones, i.e., ex-ante non-wastefulness implies ex-post non-wastefulness and ex-ante weak non-wastefulness implies ex-post weak non-wastefulness. Furthermore, ex-ante weak non-wastefulness is equivalent to at-most-0-agent-object-wastefulness.%, which in turn is equivalent to aggregate-0-object-wastefulness.

Martini (2016) has shown the impossibility of \ETE, \SP\ and ex-ante non-wastefulness. From our first main theorem we obtain the following result (as ex-ante weak non-wastefulness implies ex-post weak non-wastefulness and $\frac{1}{6}$-agent-object-wastefulness contradicts ex-ante weak non-wastefulness).

\begin{corollary}\label{exanteWNW}
Let $|N|\geq 4$ and $|O|\geq 3$. Then %the following holds:
%\begin{itemize}\item[(i)]
there exists no mechanism satisfying \SETE, \SP\ and \textsl{ex-ante weak non-wastefulness}.
%\item[(ii)] There exists no mechanism satisfying \ETE, \SP, \XEFF\ and \textsl{ex-ante weak non-wastefulness}.
%\end{itemize}
\end{corollary}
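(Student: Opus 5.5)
The plan is to derive Corollary \ref{exanteWNW} directly from Theorem \ref{the1}(i), arguing by contradiction. Fix $|N|\geq 4$ and $|O|\geq 3$, and suppose some mechanism $f$ satisfies \SETE, \SP\ and \textsl{ex-ante weak non-wastefulness}. The argument has three steps.

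\textbf{Step 1: ex-ante implies ex-post weak non-wastefulness.} I will check that $f$ satisfies \XWNW. Suppose some assignment $\mu$ in $supp(f(R))$ were not weakly non-wasteful under $R$. Then one of two things happens.
\begin{itemize}
\item[(a)] Some agent $i$ is assigned an unacceptable object, $iP_i\mu_i$.
\item[(b)] Some agent $i$ is unassigned in $\mu$ while an object $o$ with $oP_ii$ is unassigned in $\mu$.
\end{itemize}
In case (b), since $\mu$ has positive probability, we get $f_{ii}(R)>0$ and $f_{oo}(R)>0$. This directly violates ex-ante weak non-wastefulness.

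Case (a) needs individual rationality, which ex-ante weak non-wastefulness as stated does not literally include. I will handle it by noting that the paper treats individual rationality as implicit in every (weak) non-wastefulness notion; both the deterministic definition and the subsequent remark say "this implies $\mu_iR_ii$". So I read the ex-ante notion as containing individual rationality. Under that reading, $p_{ix}=0$ whenever $iP_ix$, so case (a) cannot occur. Hence $supp(f(R))\subseteq\mathcal{WNW}(R)$ for every $R$.

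\textbf{Step 2: apply Theorem \ref{the1}(i).} Now $f$ satisfies \SETE, \SP\ and \XWNW. By Theorem \ref{the1}(i), $f$ is $q$-agent-object-wasteful with some $q\geq \frac{1}{6}$. So there exist a profile $R$, an agent $i$ and an object $o$ with $oP_ii$ and
\[\min\{f_{ii}(R),f_{oo}(R)\}\geq \tfrac{1}{6}>0.\]

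\textbf{Step 3: contradiction.} The pair $(i,o)$ from Step 2 has $f_{oo}(R)>0$, $oP_ii$ and $f_{ii}(R)>0$. This is exactly what ex-ante weak non-wastefulness forbids, so no such $f$ exists.

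The only real obstacle is the individual-rationality point in case (a) of Step 1. If one insists on the literal ex-ante definition without individual rationality, I would instead inspect the proof of Theorem \ref{the1}(i) in the Appendix. I expect it to use \XWNW\ only to rule out leftover waste between unassigned agents and objects, and that use is already covered by ex-ante weak non-wastefulness.
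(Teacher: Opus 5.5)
Your argument is the paper's own: ex-ante weak non-wastefulness yields \XWNW, Theorem~\ref{the1}(i) then gives $R$, $i$, $o$ with $oP_ii$ and $\min\{f_{ii}(R),f_{oo}(R)\}\geq\frac{1}{6}>0$, which contradicts the ex-ante notion, and you are right that the first implication (which the paper calls obvious) requires reading individual rationality into ex-ante weak non-wastefulness. That reading is necessary, not just convenient, so drop your fallback: RSD run on the modified profile where each agent keeps his ranking of objects but declares all of them acceptable satisfies \SETE\ and \SP\ (an agent's chance of ending up unassigned does not depend on his report), and it satisfies the literal ex-ante condition vacuously because every serial dictatorship outcome assigns either all objects or all agents, so no inspection of the proof of Theorem~\ref{the1} can remove the individual-rationality requirement.
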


Note that in comparison to Martini (2016) the above uses the stronger fairness notion of \SETE\ and at the same time the weaker ex-ante notion of ex-ante weak non-wastefulness. Hence, neither Martini's (2016) impossibility implies Corollary \ref{exanteWNW} nor the other way around.\footnote{The (in)compatibility of \ETE, \SP\ and ex-ante weak non-wastefulness remains an open question. The compatibility of \SETE, \SP, and \XEFF\ (which implies \XNW\ and \XWNW) follows as RSD satisfies all those requirements.}

\begin{remark}\rm
Bogomolnaia and Moulin (2015) analyze maximizing the number of assigned objects as an important concern in the class of \textsl{envy-free} mechanisms. They show by considering ``diagonal profiles'' where PS obtains the worst ratio of assigned objects relative to the number of agents, that any envy-free mechanism has a weakly larger worst ratio than PS.
\end{remark}
\begin{remark}\rm
Regarding the class of \textsl{envy-free}, \textsl{strategy-proof} and \textsl{ex-post weakly non-wasteful} mechanisms, 
%Basteck and Ehlers (2023) and Shende and Purohit (2023) have shown for the domain, where all objects are acceptable, with equal numbers of agents and objects that any such mechanism is ex-post efficient with probability at most $\frac{2}{|N|}$. 
Basteck and Ehlers (2023) have shown, for the domain where all objects are acceptable and where there are equal numbers of agents and objects, that any 
such mechanism is ex-post efficient with probability at most  $\frac{2}{|N|}$ (which include the rank exchange mechanisms by Shende and Purohit (2023)).
Thus, under these criteria, ex-post inefficiency for any such mechanism converges to one as $|N|$ grows.
Note that RSD violates envy-freeness and here we use the much weaker requirement of \SETE.\footnote{Basteck (2018), Demeulemeester and Pereyra (2024) and Duddy (2025) consider other ``fairness'' and ``egalitarian'' criteria than envy-freeness whereas Heo, Manjunath and Alva (2025) consider ``unambiguous efficiency''.}
\end{remark}

\section{Conclusion}\label{sec7}

The ``strongest'' non-wastefulness notion allows unassigned objects to be distributed optimally in order to minimize the aggregate probability of being unassigned (while object-by-object dominating the original one in terms of probability shares).
Given two random assignments $p$ and $\bar{p}$, we say that $p$ \textit{object-by-object dominates} $\bar{p}$ if for all $i\in N$ and all $o\in O$ we have $p_{io}\geq \bar{p}_{io}$ (with strict inequality holding for some $j\in N$ and some $x\in O$). Object-by-object domination makes no reference to a preference profile. Now given profile $R$ and two individually rational random assignments $p$ and $\bar{p}$, if $p$ object-by-object dominates $\bar{p}$, then $p$ stochastically $R$-dominates $\bar{p}$ and $p_i\neq \bar{p}_i$ for some $i\in N$. Note that for this assertion we do not need to know the exact preferences over acceptable objects.%\footnote{On the full domain we show later in Lemma \ref{obyo} that if one strategy-proof mechanism sd-dominates another strategy-proof mechanism, then we can find a profile where object-by-object dominance holds.}

Formally, given profile $R$,
for any $p\in \mathcal{IR}(R)$ object-by-object dominating $\bar{p}$ let
\[ \Delta(p,\bar{p})=\sum_{i\in N}\sum_{o\in A(R_i)}(p_{io}-\bar{p}_{io}).\]
Let $f$ be an arbitrary mechanism.\newline

\noindent
\textbf{$q$-wastefulness:} $f$ is $q$-wasteful if there exist a profile $R$ and $p\in \mathcal{IR}(R)$ object-by-object dominating $f(R)$ such that
%\[\max_{p\in \mathcal{IR}(R): p \, object-by-object\, dominates\, f(R)} \Delta(p,f(R))\geq q.\]
\[\Delta(p,f(R))\geq q.\]

In words, if we can reduce the aggregate probability of being unassigned by $q$ while object-by-object dominating the initial random assignment.
%Obviously, if $f$ is $q$-object-wasteful or $q$-agent-wasteful, then $f$ is $q$-wasteful.
%Similarly, as above we then define \emph{at-most-$q$-wastefulness} for a mechanism.

If both $|N|$ and $|O|$ become large, then $q$-waste becomes unbounded both in the class of \SETE, \SP\ and \XEFF\ mechanisms
as (\ref{RSD2}) can be replicated an arbitrary number of times (by partitioning the problem into subproblems of four agents and three objects). Hence, the notion of $q$-waste is not meaningful in the large.

For the domain where all agents rank all objects acceptable, ex-post weak non-wastefulness implies ex-ante weak non-wastefulness (as for $|N|\leq |O|$ no agent is unassigned and for $|N|\geq |O|$ no object is unassigned) and the issue of waste does not arise. While normative characterizations have been recently obtained regarding RSD by Pycia and Troyan (2026) and Basteck (2024) for these domains, we are the first ones to obtain positive support for RSD in terms of waste measures when agents are allowed to rank objects unacceptable. Although ex-ante notions of waste are impossible to achieve, RSD obtains the minimal bound of $\frac{1}{6}$ for agent-object-wastefulness and the minimal bound of $\frac{1}{4}$ for object-wastefulness.
Waste is an important concern in market design applications such as school choice and organ exchange where outside options are represented by private schooling and the compatibility with the own donor. At the same time it is socially preferable for the market to include those participants by ensuring them at least their outside option.

%Example in Erdil (2014, Proposition 4): RSD is $\frac{1}{12}$-agent-object-wasteful and $\frac{1}{6}$-object-wasteful (not maximal bounds).
%stress importance of outside options

\appendix

\begin{appendix}
\begin{center}
\textbf{APPENDIX.}
\end{center}

Throughout the Appendix we use the abbreviations ETE for \ETE, ETAE for \SETE, %WEF for \WEF,
SP for \SP, XWNW for \XWNW, XNW for \XNW\ and XEFF for \XEFF.

Given profile $R$ and $i\in N$, let $top(R_i)$ denote the most $R_i$-preferred object and $top(R)=\cup_{i\in N}\{top(R_i)\}$ denote the set of objects which are ranked at the top under $R$.

\section{Proof of Theorem \ref{the1}}

We begin with some preliminary observations. The first lemma lists conditions under which objects have to be split uniformly between agents ranking them at the top.

\begin{lemma}\label{lem.equal.split.at.top}
Let $|N|=4$, $|O|=3$, and $f$ satisfy SP, ETE, and XWNW.
\begin{itemize}
\item[(a)] For any profile $R$ where all agents rank $x$ at the top, $f_{ix}(R)=\frac{1}{|N|}$ for all $i\in N$.
\item[(b)] If $f$ also satisfies ETAE or XEFF, then for any profile $R$ with $top(R_i)\neq i$ for all $i\in N$ and where all agents in $S\subseteq N$ rank $x$ at the top while for all distinct $j,k\in N\backslash S$, $x\neq top(R_j)\neq top(R_k)\neq x$, then $f_{ix}(R)=\frac{1}{|S|}$ for all $i\in S$.
\item[(c)] If $f$ also satisfies ETAE or XEFF, then for any profile $R$ with $top(R_i)\neq i$ for all $i\in N$ and where all agents in $S= N\backslash\{j\}$ rank $x$ and $y$ at the top and in the same order while for $j\in N\backslash S$, $top(R_j)=z$, then $f_{ix}(R)=\frac{1}{|S|}$ and $f_{iy}(R)=\frac{1}{|S|}$ for all $i\in S$.

\end{itemize}
\end{lemma}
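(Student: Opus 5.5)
The plan is to reduce every profile in (a)--(c) to a symmetric profile where \ETE\ pins down the shares, and then to undo the reduction one agent at a time. Two facts drive the transport. The first is the standard consequence of SP for upper contour sets. If $R_i$ and $R_i'$ agree on their top $k$ elements, in the same order, then SP applied in both directions gives equal probability sums over these top sets. In particular, $f_{ix}$ does not depend on $i$'s own report among reports with $x$ on top. Likewise, $f_{ix}$ and $f_{ix}+f_{iy}$ do not depend on $i$'s report among reports ranking $x$ first and $y$ second. The second fact is an induction on the number of agents whose report differs from a fixed common report $R_0$.

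For (a), let $R_0$ be the report ``$x$ only''. First I show that $x$ is fully assigned at every profile in which all agents rank $x$ first. Suppose $x$ is unassigned in some deterministic assignment in the support. Only two other objects exist, so at least two agents are unassigned while finding $x$ acceptable, which contradicts XWNW. Hence $\sum_i f_{ix}(R)=1$. I then induct on the number $k$ of agents with $R_i\neq R_0$. For $k=0$, \ETE\ gives $\frac14$ each. For $k\ge 1$, take a deviating agent $i$. By the invariance fact, $f_{ix}(R)=f_{ix}(R_0,R_{-i})$, and the profile $(R_0,R_{-i})$ has fewer deviators, so this equals $\frac14$ by induction. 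The non-deviating agents all report $R_0$, so \ETE\ makes their shares equal. Since the total is $1$, they also receive $\frac14$ each.

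For (b) I first truncate the agents outside $S$. Each $j\in N\backslash S$ is replaced by ``$t_j$ only'', where $t_j=top(R_j)$, and each agent in $S$ by ``$x$ only''. At that profile every object is demanded by exactly one group, and $t_j$ is demanded only by $j$. XWNW then forces $j$ to receive $t_j$ with probability one, and forces $x$ to be fully assigned within $S$, so \ETE\ gives $\frac1{|S|}$. Next I restore each $j\in N\backslash S$ to $R_j$. The top-share invariance keeps $f_{jt_j}=1$ throughout, so $j$ never receives $x$. Finally I restore the agents in $S$ with the same induction as in (a). That induction needs one more input, namely that $x$ remains fully assigned to $S$ at the intermediate profiles.

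That input is where I expect the main obstacle. Under XEFF it is immediate. Agents outside $S$ already hold their tops, so if $x$ were unassigned, some agent of $S$ would hold a worse object or nothing while $x$ is free, which is Pareto improvable. Under ETAE alone, weak non-wastefulness only rules out an unassigned $S$-agent next to an unassigned $x$. It does not rule out an $S$-agent holding the third object while $x$ lies idle, which is possible when $|S|=2$. For that case I would use ETAE itself. The $S$-agents would be moved to their true rankings by changing acceptable sets one object at a time, so that at each step ETAE equalizes the $x$-shares of an agent reporting ``$x$ only'' and one reporting ``$x\,y$''. The $x$-share of the ``$x$ only'' agent is transported by SP invariance, and XWNW does control that agent, since she is unassigned whenever she misses $x$. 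This chain should show that every agent in $S$ keeps exactly $\frac1{|S|}$.

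Part (c) follows the same template with two-level invariance. Truncate $j$ to ``$z$ only'' and each $i\in S$ to ``$x\,y$''. Then $z$ goes to $j$ with probability one, and XWNW forces both $x$ and $y$ to be fully assigned among the three agents of $S$; otherwise an unassigned $S$-agent would sit next to an unassigned acceptable object. \ETE\ then gives $\frac13$ each for $x$ and for $y$. Restoring $j$ keeps $f_{jz}=1$ by top invariance. Restoring the agents of $S$ one at a time keeps both $f_{ix}$ and $f_{ix}+f_{iy}$ fixed. The induction from (a) closes, once again given full assignment of $x$ and $y$ to $S$, which is handled as in (b) via XEFF or via the ETAE chain.
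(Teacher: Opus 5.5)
\paragraph{Verdict.} Part (a) is correct and is the paper's argument, written as an induction instead of a minimal counterexample. Parts (b) and (c) have a genuine gap, and it is not where you locate it.

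\paragraph{What the remainder step needs.} Your remainder step needs $f_{jx}=0$ for every $j\notin S$ (and also $f_{jy}=0$ in (c)). This must hold at the intermediate profiles, where outsiders are back at $R_j$ and some members of $S$ report their full preferences. You never prove it.

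\paragraph{Idleness of $x$ is not the issue.} Whenever the remainder step is actually used, some member of $S$ reports ``$x$ only''. XWNW then already forbids $x$ from being unassigned. In (c), idleness of $x$ or $y$ is excluded simply by counting.

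\paragraph{The real threat: an outsider holding $x$.} Top-share invariance fixes $f_{jt_j}$ only against changes in $j$'s own report. Once members of $S$ are restored, they may take $t_j$. For example, under RSD with $R_1=R_2:x\,t_3$, $R_3:t_3$ and $R_4:t_4$, agent 3 gets $t_3$ only with probability $\frac{2}{3}$.
\begin{itemize}
\item Under XEFF, your remark that outsiders ``already hold their tops'' is therefore unfounded. The conclusion can still be rescued by a trading argument: an outsider holding $x$ (or $y$ in (c)) could swap, or 3-cycle, with whoever holds his top.
\item Under ETAE, your chain only links the $x$-shares of members of $S$ to one another. Neither ETAE nor XWNW prevents an outsider from holding $x$ while members of $S$ are unassigned. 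So the chain can at best give each $i\in S$ the share $(1-\sum_{j\notin S}f_{jx})/|S|$.
\end{itemize}

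\paragraph{The missing idea: SP applied to the outsiders.} The paper argues by contradiction.
\begin{enumerate}
\item Suppose some $i\in S$ gets less than $\frac{1}{|S|}$ of $x$.
\item Apply an SP/ETAE chain: $i$ adopts another member's ordering with two acceptable objects; that member truncates to ``$x$ only''; then $i$ copies him exactly, so ETE applies. The remaining members' shares are bounded by $\frac{1}{|S|}$ via feasibility.
\item This reaches a profile where all of $S$ reports ``$x$ only'' and $S$'s total share of $x$ is still below $1$.
\item At that profile XWNW forces some outsider $j$ to hold $x$ with positive probability, so $f_{jt_j}<1$.
\item Let $j$ truncate to ``$t_j$ only''. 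By SP, $f_{jt_j}<1$ persists.
\item But at the new profile nobody else accepts $t_j$, so XWNW forces $f_{jt_j}=1$, a contradiction.
\end{enumerate}
For $|S|=2$ two successive truncations are needed, because after agent 3 truncates, his top may still be acceptable to agent 4.

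\paragraph{Part (c).} The same device handles $y$. First, ETAE equalizes the $y$-shares within $S$, since all members share the ordering $x\,y\,z$ and their acceptable sets differ in size by at most one. Then the outsider truncating to ``$z$ only'' rules out any $y$-share for him.
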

\noindent\textbf{Proof.}
Let $N=\{1,2,3,4\}$ and $O=\{a,b,c\}$.

\underline{\emph{(a):}} As $|O|<|N|$, some agent is unassigned with positive probability. Thus, if all agents consider $x$ acceptable, XWNW implies that $x$ is assigned with probability~1. Now, towards a contradiction, assume that there is a profile $R$ where all rank $x$ at the top, yet $x$ is not split uniformly. Further, w.l.o.g., let $R$ be minimal in that there is no such profile with some agent declaring fewer objects acceptable (and no agent declaring more objects acceptable). Now, take $j$ for whom $f_{jx}(R)\neq \frac{1}{|N|}$. Since $R$ is minimal, SP implies that $j$ considers only $x$ acceptable. If $f_{jx}(R)>\frac{1}{|N|}$, there must be another agent $k$ for whom $f_{kx}(R)<\frac{1}{|N|}$ and by ETE he must rank another object acceptable below $x$ -- but then there is another profile $R'$ where $k$ considers only $x$ acceptable, thus ranks fewer objects acceptable, and where by SP $f_{kx}(R')=f_{kx}(R)<\frac{1}{|N|}$, contradicting our assumption that $R$ was minimal. Analogously one finds a contradiction if $f_{jx}(R)<\frac{1}{|N|}$.
%Suppose that for some $i\in N$, $f_{ix}(R)\neq \frac{1}{|N|}$, say $i=1$. Let $j>1$ be minimal such that $R_j\neq R_1$. By ETE, $f_{1x}(R)=\cdots=f_{j-1x}(R)\neq \frac{1}{|N|}$. Then there exists $k\geq j$ such that $f_{kx}(R)\neq \frac{1}{|N|}$. Let $R'=(R_1,R_{-k})$, i.e., $k$ reports $R_1$. Now by SP and ETE, $\frac{1}{|N|}\neq f_{kx}(R)=f_{kx}(R')=f_{1x}(R')=\cdots=f_{j-1x}(R')\neq \frac{1}{|N|}$. Thus, there exists $l\in N\backslash \{1,\ldots,j-1,k\}$ such that $f_{lx}(R')\neq \frac{1}{|N|}$. Now continuing in this way we find a contradiction using ETE, SP and the fact that $x$ is assigned with probability 1.

\underline{\emph{(b):}} Let $R$ be a profile with $top(R_i)\neq i$ for all $i\in N$ and where all agents in $S\subseteq N$ rank~$x$ at the top while for all distinct $j,k\in N\backslash S$, $x\neq top(R_j)\neq top(R_k)\neq x$. As $|O|=3$, we must have $|S|\geq 2$. If $|S|=4$, then (a) yields the desired conclusion. If $|S|<4$ and $f$ satisfies XEFF, then $x$ must be split among agents in $S$ and we can use the same argument as in (a) to show that it is split uniformly.

Next, let $f$ satisfy ETAE and let $|S|=3$, say $S=\{1,2,3\}$. Suppose to the contrary that for some $i\in S$, say $i=1$, we have $f_{1x}(R)< \frac{1}{3}$. Let $R_1'$ be such that $R_1'|O=R_2|O$ and $|A(R_1')|=2$ and $R_2'$ be such that $R_2'|O=R_2|O$ and $A(R_2')=\{x\}$. Then
\[ \frac{1}{3}> f_{1x}(R)=f_{1x}(R_1',R_{-1})=f_{2x}(R_1',R_{-1})=f_{2x}(R_1',R_2',R_{-1,2})=f_{1x}(R_1',R_2',R_{-1,2}),\]
where the first and third equality follow from SP while the second and fourth equality follow from ETAE. For the next step, set $R_1''=R_2'$ (so that both consider only $x$ acceptable) and find by SP and ETE that
\[ \frac{1}{3}> f_{1x}(R_1'',R_2',R_{-1,2})=f_{2x}(R_1'',R_2',R_{-1,2}).\]
Further, we have $f_{3x}(R_1'',R_2',R_{-1,2})\leq \frac{1}{3}$ as otherwise $3$ may report $R_3'=R_1''=R_2'$ and we obtain a contradiction to feasibility using SP and ETE. If $|A(R_3)|>1$, then we may repeat the same argument as above and let agent 1 report $R_1'''$ such that $R_1'''|O=R_3'|O$ and $|A(R_1''')|=2$ while $3$ shortens the list of acceptable objects to $\{x\}$ (as we did for $2$ before).
Thus, let $|A(R_3)|=1$.
But then by XWNW we must have that $f_{4x}(R_1',R_2',R_{-1,2})>0$ and hence $f_{4top(R_4)}(R_1',R_2',R_{-1,2})<1$.
But then agent 4 may report $R_4':top(R_4)$ and we obtain a contradiction to SP as by XWNW $f_{4top(R_4)}(R_1',R_2',R_3,R_4')=1$.

%Next, let $|S|=3$, say $S=\{1,2,3\}$. Suppose to the contrary that for some $i\in S$, say $i=1$, we have $f_{1x}(R)< \frac{1}{3}$ and, w.l.o.g., that $|A(R_2)|\geq |A(R_3)|$. Let $R_1'$ be such that $R_1'|O=R_2|O$ and $|A(R_1')|=2$ and $R_2'$ be such that $R_2'|O=R_2|O$ and $A(R_2')=\{x\}$. Then \[ \frac{1}{3}> f_{1x}(R)=f_{1x}(R_1',R_{-1})=f_{2x}(R_1',R_{-1})=f_{2x}(R_1',R_2',R_{-1,2})=f_{1x}(R_1',R_2',R_{-1,2}),\] where first and third equality follow from SP while the second and fourth equality follow from ETAE. Using the same argument, we may suppose without loss of generality that $A(R_1')=A(R_2')=\{x\}$ and $R_1'|O=R_2'|O=R_2|O$. But now we obtain $f_{3x}(R_1',R_2',R_{-1,2})\leq \frac{1}{3}$ as otherwise $3$ may report $R_1'$ and we obtain a contradiction to feasibility using SP and ETAE. If $|A(R_3)|>1$, then we may repeat the same argument as above and let agent 1 report $R_1''$ such that $R_1''|O=R_3'|O$. Thus, let $|A(R_3)|=1$. But then by XWNW we must have that $f_{4x}(R_1',R_2',R_{-1,2})>0$ and $f_{4top(R_4)}(R_1',R_2',R_{-1,2})<1$. But then agent 4 may report $R_4':top(R_4)$ and we obtain a contradiction to SP and XWNW (as $top(R_4)\neq x$).

Last, let $f$ satisfy ETAE and $|S|=2$, say $S=\{1,2\}$. Suppose to the contrary that $f_{1x}(R)< \frac{1}{2}$. Using the same argument as above we then may suppose
$f_{1x}(R)=f_{2x}(R)< \frac{1}{2}$ and $A(R_1)=A(R_2)=\{x\}$. By XWNW, then we must have $f_{3x}(R)>0$ or $f_{4x}(R)>0$, say $f_{3x}(R)>0$ and $f_{3top(R_3)}(R)<1$. Now if agent 3 reports $R_3':top(R_3)$, then we continue to obtain
$f_{3top(R_3)}(R_3',R_{-3})<1$. But then by XWNW and $top(R_3)\neq x$,
$f_{4top(R_3)}(R_3',R_{-3})>0$ and $f_{4top(R_4)}(R_3',R_{-3})<1$.
Now if agent 4 reports $R_4':top(R_4)$, then we continue obtain
$f_{4top(R_4)}(R_3',R_4',R_{-3,4})<1$, which is a contradiction to XWNW as $x\neq top(R_3) \neq top(R_4)\neq x$.

\underline{\emph{(c):}} Let $R$ be a profile with $top(R_i)\neq i$ for all $i\in N$ and where all agents in $S= N\backslash \{j\}$ (thus $|S|=3$) rank $x$ and $y$ acceptable at the top in the same order while for $j\notin S$, $top(R_j)=z$. Without loss of generality, let $S=\{1,2,3\}$ and $j=4$.
Now by (b) we obtain $f_{ix}(R)=\frac{1}{3}$ for all $i\in S$. If $f$ satisfies XEFF, we further know that $4$ obtains none of $y$ (as 4 ranks
$y$ below $z$ while all others rank $y$ above $z$). Then, $y$ is split among agents in $S$ and we can use the same argument as in (a) to show that it is split uniformly.

Next, let $f$ satisfy ETAE. As all agents in $S$ find $x$ and $y$ acceptable in the same order, ETAE implies $f_{1y}(R)=f_{2y}(R)=f_{3y}(R)$. Using the same argument as in (b) we may suppose without loss of generality that $R_1=R_2=R_3:xy$, i.e., suppose that they consider $z$ unacceptable.
If $f_{1y}(R)<\frac{1}{3}$, then by XWNW we must have
$f_{4y}(R)>0$ and $f_{4z}(R)<1$. But now agent 4 may report $R_4':z$ and we obtain a contradiction between SP and XWNW.\hfill$\square$\newline

Our second lemma considers a condition under which an agent receives an object with probability zero when ranking it third.

\begin{lemma}\label{lem.none.at.bottom}
Let $|N|=4$, $|O|=3$ and $f$ satisfy SP, ETE and XWNW. If $f$ also satisfies ETAE or XEFF, then for any profile $R$ where two agents have preferences $R_i,R_j:x...$ while another agent $k$ ranks $x$ third, behind $y$ and $z$, we have $f_{kx}(R)=0$.
\end{lemma}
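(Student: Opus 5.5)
The plan is to split $x$ among the agents ranking it first via Lemma \ref{lem.equal.split.at.top}(b), after a harmless relabelling of $k$'s top two objects. The one configuration where that lemma does not apply, namely when the fourth agent finds nothing acceptable, I would handle with a truncation argument combined with XWNW.

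\textbf{Setup.} Let $N=\{i,j,k,l\}$, $O=\{x,y,z\}$, and without loss of generality $R_k:yzx$.

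\textbf{Step 1: reordering $y,z$ for $k$.} Let $R_k'':zyx$. The upper contour sets $\{y,z\}$ and $\{y,z,x\}$ are common to $R_k$ and $R_k''$. Applying SP in both directions, $k$'s probability of each common upper contour set is the same under $(R_k,R_{-k})$ and $(R_k'',R_{-k})$. Hence $f_{kx}(R)=f_{kx}(R_k'',R_{-k})$, so we may choose which of $y,z$ agent $k$ ranks first. The same invariance applies to any truncation or reordering that preserves the relevant upper contour sets. I would also first replace $R_k$ by the truncation that makes exactly $y,z,x$ acceptable, although this is already the case.

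\textbf{Step 2: the case $top(R_l)\neq l$.}
\begin{itemize}
\item If $top(R_l)=x$, set $S=\{i,j,l\}$. Then $N\backslash S=\{k\}$ and $top(R_k)\neq x$. Lemma \ref{lem.equal.split.at.top}(b) gives $f_{sx}(R)=\frac{1}{3}$ for every $s\in S$, so $x$ is fully assigned within $S$ and $f_{kx}(R)=0$.
\item If $top(R_l)\in\{y,z\}$, use Step 1 to choose $k$'s top as the one of $y,z$ that differs from $top(R_l)$. Then with $S=\{i,j\}$ the agents outside $S$ have distinct tops, both different from $x$. Lemma \ref{lem.equal.split.at.top}(b) gives $f_{ix}=f_{jx}=\frac{1}{2}$, so again $f_{kx}=0$.
\end{itemize}
Both uses of Lemma \ref{lem.equal.split.at.top}(b) need ETAE or XEFF, which is assumed.

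\textbf{Step 3: the case where $l$ finds nothing acceptable.} This is the main obstacle, because Lemma \ref{lem.equal.split.at.top}(b) requires $top(R_l)\neq l$.

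First I would use the truncation technique from the proof of Lemma \ref{lem.equal.split.at.top}(b) (SP combined with ETAE, or with XEFF) to reduce to the case $A(R_i)=A(R_j)=\{x\}$. The point to check is that these truncations leave $f_{kx}$ unchanged; if they do not, I would at least show that they preserve the sign of $f_{kx}$.

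Then let $k$ report the truncation $\hat R_k:yz$. SP applied in both directions between $R_k$ and $\hat R_k$ gives
\[
f_{ky}(R)=f_{ky}(\hat R_k,R_{-k})\quad\text{and}\quad f_{ky}(R)+f_{kz}(R)=f_{ky}(\hat R_k,R_{-k})+f_{kz}(\hat R_k,R_{-k}).
\]
At $(\hat R_k,R_{-k})$ no agent other than $k$ finds $y$ or $z$ acceptable. By XWNW, every assignment in the support in which $k$ is unassigned would leave $y$ unassigned while $k$ desires it, which is impossible. Hence $k$ is never unassigned, and $f_{ky}+f_{kz}=1$ at $(\hat R_k,R_{-k})$. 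By the displayed equality the same holds at $R$, which forces $f_{kx}(R)=0$.

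If the reduction of $i,j$ to $x$-only preferences turns out to be delicate, I would try running the XWNW argument directly on $(\hat R_k,R_{-k})$ instead. In that version one has to control how much of $y$ and $z$ agents $i$ and $j$ can absorb, using Lemma \ref{lem.equal.split.at.top}(a)-type equal-split reasoning for them.
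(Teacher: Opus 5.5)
Your Steps 1 and 2 are the paper's argument for the ETAE case. Either the fourth agent ranks $x$ first and Lemma~\ref{lem.equal.split.at.top}(b) applies with $|S|=3$, or you use SP to swap $k$'s top two objects so that the outside agents have distinct tops, and then apply Lemma~\ref{lem.equal.split.at.top}(b) with $|S|=2$. For XEFF the paper argues directly. If $k$ held $x$ in some assignment of the support, then either $k$ trades with whichever of $i,j$ holds $y$ or $z$, or, if neither does, one of $y,z$ is free for $k$ while $i$ takes $x$. Both are Pareto improvements, and this also settles your Step~3 under XEFF. You are right that Lemma~\ref{lem.equal.split.at.top}(b) presupposes $top(R_m)\neq m$ for every agent, so it does not cover $A(R_l)=\emptyset$. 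The paper's proof tacitly assumes the fourth agent has an acceptable object, so Step~3 goes beyond the paper. As written, however, it has a genuine gap.

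The gap is the reduction to $A(R_i)=A(R_j)=\{x\}$. It needs $f_{kx}$, or at least its positivity, to survive $i$'s and $j$'s truncations, and nothing gives you that. SP constrains only the deviator's own upper-contour probabilities, and ETAE only transfers those to a twin. Your fallback has the same problem. At $(\hat R_k,R_{-k})$ with the original $R_i,R_j$, the assignment in which $i$ and $j$ hold $y$ and $z$, $k$ is unassigned and $x$ is idle is weakly non-wasteful, so you would still need $f_{ix}+f_{jx}=1$ there. The fix is to carry the deviators' own $x$-shares instead of $f_{kx}$, as in the proof of Lemma~\ref{lem.equal.split.at.top}(b). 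Define:
\begin{itemize}
\item $R_i'$ with $R_i'|O=R_j|O$ and $|A(R_i')|=\max\{1,|A(R_j)|-1\}$;
\item $R_j'$ with $R_j'|O=R_j|O$ and $A(R_j')=\{x\}$;
\item $R^*$, the profile in which both $i$ and $j$ report $R_j'$ and $k,l$ keep $R_k,R_l$.
\end{itemize}
SP fixes the deviator's share of her unchanged top object $x$, so alternating SP and ETAE gives
\[
f_{ix}(R)=f_{ix}(R_i',R_{-i})=f_{jx}(R_i',R_{-i})=f_{jx}(R_i',R_j',R_{-ij})=f_{ix}(R_i',R_j',R_{-ij})=f_{ix}(R^*).
\]
At $R^*$ your own argument is valid. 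Let $k$ report $yz$; nobody else accepts $y$, so XWNW and SP give $f_{ky}(R^*)+f_{kz}(R^*)=1$, hence $f_{kx}(R^*)=0$. One of $i,j$ is always unassigned, so XWNW forces $x$ to be assigned, and ETE yields $f_{ix}(R^*)=f_{jx}(R^*)=\frac12$. Hence $f_{ix}(R)=\frac12$, symmetrically $f_{jx}(R)=\frac12$, and feasibility gives $f_{kx}(R)=0$.
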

\noindent\textbf{Proof.} For XEFF, this is immediate, as otherwise $k$ could trade with $i$ or $j$. So suppose $f$ satisfies ETAE. If the fourth agent, say $m$, ranks $x$ first, then the claim follows from Lemma \ref{lem.equal.split.at.top} (b). So consider $top(R_m)\neq x$ and w.l.o.g. $top(R_m)=y$. If $k$ ranks $z$ first, it also follows from Lemma \ref{lem.equal.split.at.top} (b). If $R_k=yzx$, consider $R_k':zyx$ -- then  $f_{kx}(R_k',R_{-k})=0$ and, by SP,   $f_{kx}(R)=f_{kx}(R_k',R_{-k})=0$.\hfill$\square$\newline

Our third lemma considers a condition under which an object is split uniformly when all agents rank it second.

\begin{lemma}\label{lem.equal.split.at.second*}
Let $|N|=4$, $|O|=3$ and $f$ satisfy SP, ETE and XWNW. If $f$ also satisfies ETAE or XEFF, then for any profile $R$ where two agents have preferences $R_i,R_j:xyz$ while the other two agents have preferences $R_k,R_l:zy...$, each agent is assigned their second-most-preferred object $y$ with probability $\frac{1}{4}$.
\end{lemma}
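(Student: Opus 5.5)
The plan is to first pin down how $x$ and $z$ are assigned, then show that $y$ is always fully assigned, and finally show that $y$ is split equally between the two pairs. Write $b=f_{iy}(R)$ and $a=f_{ky}(R)$.

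\emph{Step 1: $x$ goes only to $i,j$ and $z$ only to $k,l$.} Agents $i$ and $j$ both have preferences $x\ldots$, while $k$ and $l$ rank $x$ third behind $z$ and $y$ (or not at all). Lemma \ref{lem.none.at.bottom} then gives $f_{kx}(R)=f_{lx}(R)=0$. Symmetrically, $k$ and $l$ both rank $z$ first while $i$ and $j$ rank $z$ third behind $x$ and $y$. Lemma \ref{lem.none.at.bottom} again gives $f_{iz}(R)=f_{jz}(R)=0$.

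\emph{Step 2: $x$ and $z$ are each split $\frac12$--$\frac12$.} Take any assignment $\mu$ in the support of $f(R)$. Suppose $x$ were unassigned in $\mu$. By Step 1, each of $i,j$ then holds $y$ or nothing. So at least one of them is unassigned while $x$, which he finds acceptable, is unassigned, contradicting XWNW. Hence $x$ is assigned with probability $1$, always to $i$ or $j$, and ETE gives $f_{ix}(R)=f_{jx}(R)=\frac12$. The same argument for $k,l$ gives $f_{kz}(R)=f_{lz}(R)=\frac12$.

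\emph{Step 3: $y$ is always fully assigned.} In every $\mu$ in the support, exactly one of $i,j$ holds $x$ and exactly one of $k,l$ holds $z$. The remaining agent of each pair finds $y$ acceptable and can only hold $y$ or nothing. By XWNW, $y$ is therefore assigned in $\mu$. Combined with ETE ($f_{iy}=f_{jy}=b$ and $f_{ky}=f_{ly}=a$), this yields $2a+2b=1$, i.e.\ $a+b=\frac12$. It remains to show $a=b$.

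\emph{Step 4 (main obstacle): the split of $y$ between the pairs.} Nothing so far distinguishes the pairs, so this step must come from SP applied to deviations landing in profiles covered by Lemma \ref{lem.equal.split.at.top}. I would compare $R$ with two profiles.

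The first is $R'=(R_k',R_{-k})$ with $R_k':xyz$. Here three agents rank $x,y$ on top in the same order and $l$ ranks $z$ first, so Lemma \ref{lem.equal.split.at.top}(c) gives $f_{kx}(R')=f_{ky}(R')=\frac13$. Moreover $f_{kz}(R')=0$, since $k$ now ranks $z$ third behind $x$ and $y$ while $l$ ranks $z$ first and $i,j$ have preferences $x\ldots$ (Lemma \ref{lem.none.at.bottom}). SP for $k$ with true preference $R_k'$ then gives $\frac23\ge a$. This bound is too weak on its own.

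The second is $R''=(R_i'',R_{-i})$ with $R_i'':zyx$. Symmetrically, Lemma \ref{lem.equal.split.at.top}(c) gives $f_{iz}(R'')=f_{iy}(R'')=\frac13$. This requires that $k$ and $l$ rank $x$ third or not at all, which I would justify by first truncating them via ETAE/SP as in the proof of Lemma \ref{lem.equal.split.at.top}(b). Then $f_{ix}(R'')=0$ by Lemma \ref{lem.none.at.bottom}, and SP for $i$ with true preference $R_i''$ gives $\frac23\ge b$.

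These two inequalities alone do not force $a=b$. To close the gap I would add intermediate deviations. For $i$, I would use $xzy$ and the truncation $xy$, where ETAE keeps $i$'s shares tied to $j$'s. For $k$, I would use $zyx$ versus $zy$ and $yz\ldots$. For each, I would use SP in both directions, since each deviation in the chain must be profitable neither from $R$ nor back to $R$.

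In the XEFF case I would additionally use that trades are excluded in every support assignment. For example, when $k$ reports $yz$, $k$ must get $y$ whenever $l$ gets $z$ and the agent holding $y$ prefers $x$.

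The target is two matching inequalities $b\ge\frac14$ and $a\ge\frac14$, which with $a+b=\frac12$ give $a=b=\frac14$. The most promising route I see is a deviation by $i$ to $yxz$ together with the symmetric deviation by $k$ to $yzx$, computing the resulting probabilities of $y$ via Lemma \ref{lem.equal.split.at.top}(b). I expect finding the right chain of such deviations to be the bulk of the proof.
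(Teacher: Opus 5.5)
Your Steps 1--3 are fine and match the easy part of the argument. Step 4, however, is where the whole content of the lemma lies, and it is only a plan. The computations you sketch there are also wrong. Lemma~\ref{lem.none.at.bottom} needs \emph{two} agents ranking the object in question first. At $R'=(xyz,xyz,xyz,R_l)$ only $l$ ranks $z$ first, and at $R''$ only $j$ ranks $x$ first, so $f_{kz}(R')=0$ and $f_{ix}(R'')=0$ are unjustified. They are in fact false: under RSD, $f_{kz}(R')=\frac{1}{12}$. Had $f_{kz}(R')=0$ held, SP applied to the full upper contour set $\{x,y,z\}$ of $R_k'$ would give $\frac{2}{3}\ge\frac{1}{2}+a$, i.e.\ $a\le\frac16$, contradicting the very statement you are proving. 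You only recorded the $\{x,y\}$-inequality. The suggested deviations to $yxz$ and $yzx$ are not carried out, so nothing in the proposal actually establishes $a=b$.

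The missing idea is to track each agent's \emph{total} assignment probability. By SP, an agent's total is the same under any two reports that rank all three objects acceptable, and it weakly rises when the acceptable set is extended. At the profile where all four agents report $xyz$, ETE and XWNW force every total to be exactly $\frac34$. Since $f_{kz}(R)=\frac12$ and $f_{kx}(R)=0$, the claim $f_{ky}(R)=\frac14$ is equivalent to $k$'s total being $\frac34$.

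The paper argues as follows. Suppose $k$'s total exceeds $\frac34$, and let $k$ report $xyz$; the total stays above $\frac34$. ETE passes this to $i$ and $j$, so by feasibility $l$'s total drops below $\frac34$. Now let $l$ report $zyx$ and then $xyz$. At the first switch Lemma~\ref{lem.none.at.bottom} does apply, because three agents now rank $x$ first, so $l$ gets no $x$. This keeps $l$'s total below $\frac34$ at the all-$xyz$ profile, a contradiction. A total below $\frac34$ is handled symmetrically, with XWNW and feasibility swapping roles.

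With this idea your profile $R'$ also becomes usable. $l$'s total there must be $\frac34$, which forces $f_{kz}(R')=\frac{1}{12}$, and SP on $\{x,y,z\}$ then gives $f_{ky}(R)\le\frac14$. Upper bounds of this kind for all four agents would suffice, since $y$ is fully assigned.

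Separately, the statement allows $R_k\neq R_l$: one of them may find $x$ acceptable and the other not. So your appeals to ETE between $k$ and $l$ in Steps 2--3 are not available under XEFF alone. The paper avoids this by running the argument for $k$ and $l$ separately.
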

\noindent\textbf{Proof.} %
%Let $R$ be a profile where all agents rank $y$ second, two agents, w.l.o.g. say $1$ and $2$, rank $x$ first, and the other two agents, $3$ and $4$, rank $z$ first. Suppose $f_{1y}(R)<\frac{1}{4}$. By SP, this holds regardless of whether $1$ ranks $z$ as acceptable third choice so w.l.o.g. assume he does. By ETAE, $f_{2y}(R)=f_{1y}(R)<\frac{1}{4}$ and we may likewise assume that $2$ ranks $z$ third. Moreover, by Lemma \ref{lem.none.at.bottom}, $f_{2z}(R)=f_{1z}(R)=0$ and hence, by ETAE, $f_{3z}(R)=f_{4z}(R)=\frac{1}{2}$ (as $z$ is assigned with probability 1 by XWNW and SP). Given that $f_{2y}(R)=f_{1y}(R)<\frac{1}{4}$ we also have $f_{3y}(R)=f_{4y}(R)>\frac{1}{4}$. By SP we may suppose that $3$ ranks $x$ third and by ETAE this also holds for $4$. Likewise, we may assume that $4$ ranks $x$. But then we have a preference profile where all agents consider all objects acceptable and where $3$ receives more than $\frac{3}{4}$. By SP, this would still hold if $3$'s preferences were $R_3'=R_1=R_2$. By ETAE, both $1$ and $2$ would then also receive more than $\frac{3}{4}$ in total. Thus $4$ must receive strictly less than $\frac{3}{4}$. By SP, this still holds as $4$, too, aligns her preferences with those of $1$ and $2$, and we have $R_4'=R_3'=R_1=R_2$. But then by ETAE, all agents have less than $\frac{3}{4}$ in total probability shares, violating XWNW.\hfill$\square$\newline
Let $R$ be a profile where all agents rank $y$ second, two agents, w.l.o.g. say $1$ and $2$, rank $x$ first, and the other two agents, $3$ and $4$, rank $z$ first. Moreover, suppose $1$ and $2$ rank all objects acceptable.

First, note that $1$ and $2$ receive none of $z$ by Lemma \ref{lem.none.at.bottom}. Then $z$ is split uniformly between $3$ and $4$ (otherwise we arrive at a contradiction by the same argument as in Lemma \ref{lem.equal.split.at.top} (a)).

Now, towards a contradiction, suppose that $f_{3y}(R)> \frac{1}{4}$, hence $3$ receives more than $\frac{3}{4}$ in total. By SP, $3$ still receives more than $\frac{3}{4}$ if she reports $R_3':xyz$. By ETE, $1$ and $2$ also receive strictly more of $\frac{3}{4}$ in total. By feasibility, $4$ must receive strictly less. Now, let $4$ report $R_4':zyx$ (possible $R_4'=R_4$). By SP, $4$ still receives strictly less of $z$ and $y$ and receives none of $x$ by Lemma \ref{lem.none.at.bottom} -- thus strictly less than $\frac{3}{4}$ in total. Finally, let $4$ report $R_4''=R_1=R_2=R_3'$. Then, by SP and ETE, all receive strictly less than $\frac{3}{4}$, contradicting XWNW. 

For a second contradiction, suppose that $f_{3y}(R)< \frac{1}{4}$, hence $3$ receives less than $\frac{3}{4}$ in total. Now, let $3$ report $R_3':zyx$ (possibly $R_3'=R_3$). By SP, $3$ still receives strictly less of $z$ and $y$ and receives none of $x$ by Lemma \ref{lem.none.at.bottom} -- thus strictly less than $\frac{3}{4}$ in total. As she reports $R_3'':xyz$, SP and ETE imply that $1$, $2$, and $3$ all receive strictly less than $\frac{3}{4}$ in total. By XWNW, $4$ must receive strictly more. Now, let $4$ report $R_4':xyz$, i.e., $R_4'=R_1=R_2=R_3''$. Then, by SP and ETE, all receive strictly more than $\frac{3}{4}$, contradicting feasibility. 

If $f_{4y}(R)\neq \frac{1}{4}$, we derive a contradiction in the same way as for $3$ above. 

If instead $f_{1y}(R)=f_{2y}(R)< \frac{1}{4}$ (where the equality follows from ETE), then the fact that $f_{3y}(R)=f_{4y}(R)= \frac{1}{4}$ implies that $y$ is assigned with probability less than $1$, violating XWNW. Similarly $f_{1y}(R)=f_{2y}(R)> \frac{1}{4}$ violates feasibility.\hfill$\square$\newline

\noindent
\textbf{Proof of Theorem \ref{the1}.}
Let $|N|\geq 4$, $|O|\geq 3$ and $f$ satisfy SP, ETE and XWNW. Moreover suppose $f$ also satisfies ETAE or XEFF.

By XWNW it suffices to show Theorem \ref{the1} for $N=\{1,2,3,4\}$ and $O=\{a,b,c\}$;  for  $N'\supseteq N$ and $O'\supseteq O$ the same arguments apply as long as all agents in $N$ consider only objects in $O$ acceptable and all objects in $O$ are considered acceptable only by agents in $N$.

Towards a contradiction, suppose that $f$ is at-most-$q$-agent-object-wasteful with $q<\frac{1}{6}$.

Consider profile $G$ in (\ref{RSD1}) which for convenience we repeat again below. Let $U=(bac,G_{-1})$. Then by Lemma \ref{lem.equal.split.at.top} (a), $f_{ib}(U)=\frac{1}{4}$ for all $i\in N$, and by XWNW $a$ is assigned with probability 1. Moreover, by ETE $f_{3a}(U)=f_{4a}(U)$, and by ETE and  SP $f_{1a}(U)=f_{3a}(U)=f_{4a}(U)=\frac{1}{3}$ (if $f_{1a}(U)\neq f_{3a}(U)=f_{4a}(U)$, letting $1$ declare $c$ unacceptable would yield a contradiction).

Let $U'=(abc,G_{-1})$. Then $f_{1b}(U')=0$ by Lemma \ref{lem.equal.split.at.top} (b). Thus by SP, $f_{1a}(U')=\frac{1}{3}+\frac{1}{4}= \frac{14}{24}$. The remaining entries in $f(U')$ follow from ETE and XWNW.
\[
\begin{array}{cccc}
G & a & b & c \\ \hline
ac & \frac{14}{24} & 0 &  \\
bc & 0 & \frac{1}{3} &  \\
ba & \frac{5}{24} & \frac{1}{3} & 0 \\
ba & \frac{5}{24} & \frac{1}{3} &  0
\end{array}\qquad
\begin{array}{cccc}
U & a & b & c \\ \hline
bac & \frac{1}{3} & \frac{1}{4} &  \\
bc & 0 & \frac{1}{4} &  \\
ba & \frac{1}{3} & \frac{1}{4} & 0 \\
ba & \frac{1}{3} & \frac{1}{4} &  0
\end{array}\qquad
\begin{array}{cccc}
U' & a & b & c \\ \hline
abc & \frac{14}{24} & 0 &  \\
bc & 0 & \frac{1}{3} &  \\
ba & \frac{5}{24} & \frac{1}{3} & 0 \\
ba & \frac{5}{24} & \frac{1}{3} &  0
\end{array}
\]
Thus, by SP, $f_{1a}(G)=\frac{14}{24}$, and by XWNW, $f_{1b}(G)=0$. As for $U'$, the remaining assignment probabilities for $f_a(G)$ follow from ETE and XWNW.
As $f$ is at-most-$q$-agent-object-wasteful with $q<\frac{1}{6}$, we must have $f_{1c}(G)>\frac{6}{24}$ or $f_{2c}(G)>\frac{14}{24}$.
We show $f_{2c}(G)=\frac{14}{24}$, which implies $f_{1c}(G)>\frac{6}{24}$.\medskip

For the profile $R$ below we show that
\[\begin{array}{cccc}
R & a & b & c \\ \hline
ac &  \frac{16}{24} & 0 & \frac{2}{24} \\
cab & 0 & 0 & \frac{22}{24}\\
ba & \frac{4}{24} & \frac{1}{2} & 0 \\
ba & \frac{4}{24} & \frac{1}{2} &  0
\end{array}.
\]
Note that by SP and $R_{-2}=G_{-2}$ we then obtain $f_{2c}(G)=\frac{14}{24}$. Furthermore, we obtain $f_{3b}(R)=f_{4b}(R)=\frac{1}{2}$ by Lemma \ref{lem.equal.split.at.top} (b). Hence, $f_{1b}(R)=f_{2b}(R)=0$.

\medskip
\noindent
\textit{Claim 1: $f_{1c}(R)=\frac{2}{24}$ and $f_{2c}(R)=\frac{22}{24}$.}\medskip

Note that $c$ is assigned with probability 1 as otherwise if $c$ is unassigned for some weakly non-wasteful assignment in the support of $f(R)$, then agents 1 and 2 cannot be unassigned, agent 1 must obtain $a$ and agent 2 must obtain $b$, which is then a contradiction to the fact that $f_{2b}(R)=0$.

Thus, it suffices to show that $f_{1c}(R)=\frac{2}{24}$, which we do in two steps: \emph{(1)} $f_{1a}(R)+f_{1c}(R)=\frac{3}{4}$ and \emph{(2)} $f_{1a}(R)=\frac{2}{3}$.

\noindent
Step \emph{(1)}:
The first main difficulty is to show $f_{1c}(Q)=\frac{5}{6}$, $f_{1a}(Q)=0$ and $f_{1b}(Q)=0$ in profile $Q$ below.
\[\begin{array}{cccc}
Q & a & b & c \\ \hline
cab &  &  &  \\
bac &  &  & \\
bac &  &  & \\
ba &  &  &
\end{array}
\qquad \begin{array}{cccc}
Q^0 & a & b & c \\ \hline
bac & \frac{1}{4} & \frac{1}{4} & \frac{1}{3} \\
bac & \frac{1}{4} & \frac{1}{4} & \frac{1}{3}\\
bac & \frac{1}{4} & \frac{1}{4} & \frac{1}{3}\\
ba & \frac{1}{4} & \frac{1}{4} &  0
\end{array}\]
If all have preferences $bac$, then by ETE $a$ and $b$ are split equally. Thus, for $Q^0$  SP implies that $f_{4a}(Q^0)=f_{4b}(Q^0)=\frac{1}{4}$. The remaining entries then follow from ETE and XWNW. By $Q^0_{-1}=Q_{-1}$ and SP we obtain $\sum_{o\in O}f_{1o}(Q)=\frac{5}{6}$.

Consider the following sequence of profiles:
\[\begin{array}{cccc}
Q & a & b & c \\ \hline
cab & 0 & 0 & \frac{5}{6} \\
bac & \frac{1}{3} & \frac{1}{3} & \frac{1}{12}\\
bac & \frac{1}{3} & \frac{1}{3} & \frac{1}{12}\\
ba & \frac{1}{3} & \frac{1}{3} &  0
\end{array}
\qquad
\begin{array}{cccc}
Q' & a & b & c \\ \hline
cab & \frac{1}{4} & 0 & \frac{1}{2} \\
cab & \frac{1}{4} & 0 & \frac{1}{2}\\
bac & \frac{1}{4} & \frac{1}{2} & 0\\
ba & \frac{1}{4} & \frac{1}{2} &  0
\end{array}\qquad
\begin{array}{cccc}
Q'' & a & b & c \\ \hline
cab & \frac{1}{4} & 0 & \frac{1}{2} \\
cab & \frac{1}{4} & 0 & \frac{1}{2}\\
ba & \frac{1}{4} & \frac{1}{2} & 0\\
ba & \frac{1}{4} & \frac{1}{2} &  0
\end{array}
\qquad
\begin{array}{cccc}
Q''' & a & b & c \\ \hline
ca & \frac{1}{4} & 0 & \frac{1}{2} \\
cab &  &  & \\
ba &  &   &\\
ba &  &  &
\end{array}
\]

\[\begin{array}{cccc}
\hat{Q} & a & b & c \\ \hline
cab & 0 & 0 & \frac{3}{4} \\
bac & \frac{1}{3} & \frac{1}{3} & \frac{1}{12}\\
bac & \frac{1}{3} & \frac{1}{3} & \frac{1}{12}\\
bac & \frac{1}{3} & \frac{1}{3} & \frac{1}{12}
\end{array}
\qquad
\begin{array}{cccc}
\hat{Q}' & a & b & c \\ \hline
cab & \frac{1}{4} & 0 & \frac{1}{2} \\
cab & \frac{1}{4} & 0 & \frac{1}{2}\\
bac & \frac{1}{4} & \frac{1}{2} & 0\\
bac & \frac{1}{4} & \frac{1}{2} &  0
\end{array}
\qquad
%\begin{array}{cccc}
%Q'' & a & b & c \\ \hline
%cab & \frac{1}{4} & 0 & \frac{1}{2} \\
%cab & \frac{1}{4} & 0 & \frac{1}{2}\\
%ba & \frac{1}{4} & \frac{1}{2} & 0\\
%ba & \frac{1}{4} & \frac{1}{2} &  0
%\end{array}
%\qquad
%\begin{array}{cccc}
%Q''' & a & b & c \\ \hline
%ca & \frac{1}{4} & 0 & \frac{1}{2} \\
%cab &  &  & \\
%ba &  &   &\\
%ba &  &  &
%\end{array}
\]

First, the entries for $a$ and $b$ at $\hat{Q}$ follow by Lemma \ref{lem.equal.split.at.top} (c) as $a$ and $b$ are split equally among agents $2$, $3$ and $4$. Moreover, by SP we know that the total assignment probabilities for $1$ are $\frac{3}{4}$ (consider a change in $1$'s preferences to $bac$ and invoke ETE and XWNW). This pins down $f(\hat{Q})$.

For $Q$, we obtain the assignment probabilities for $b$ and $a$ by Lemma \ref{lem.equal.split.at.top} (b). Thus $f_{1c}(Q)=\sum_{o\in O}f_{1o}(Q)=\frac{5}{6}$. As three agents rank $c$ acceptable, object $c$ is fully assigned (by XWNW) and we obtain $f_{2c}(Q)=f_{3c}(Q)=\frac{1}{12}$ by ETE.

For $\hat{Q}'$ we know that the total assignment probabilities of $2$ sum to $\frac{3}{4}$ (by $\hat{Q}_{-2}=\hat{Q}_{-2}'$ and SP). By ETE the same is true for $1$. But then the same must hold for $3$ and $4$ since all objects are assigned with probability 1 by XWNW.
By Lemma \ref{lem.equal.split.at.second*}, we have $f_{1a}(\hat{Q}')=f_{2a}(\hat{Q}')=f_{3a}(\hat{Q}')=f_{4a}(\hat{Q}')=\frac{1}{4}$.
By Lemma \ref{lem.none.at.bottom}, $f_{1b}(\hat{Q}')=f_{2b}(\hat{Q}')=0=f_{3c}(\hat{Q}')=f_{4c}(\hat{Q}')$. Now all remaining entries
of $f(\hat{Q}')$ follow from ETE and XWNW.
%For (ii), XEFF implies $f_{1b}(\hat{Q}')=f_{2b}(\hat{Q}')=0=f_{3c}(\hat{Q}')=f_{4c}(\hat{Q}')$ and then this determines $f(\hat{Q}')$.
%Next assume that $f_{3b}(\hat{Q}')<\frac{1}{2}$. Then consider a change in $3$'s preferences to $ba$. By SP, $3$ would still receive less than $\frac{1}{2}$ and by ETAE the same would hold for $4$. Moreover, the same holds true when $4$ reports $b$ as the only acceptable object (while not changing the order of the objects) and by ETAE the same holds for 3. But then also 3 may just report $b$ as the only acceptable object (while not changing the order of the objects), and then $1$ (and $2$) would have to receive $b$ with positive probability. Thus, $a$ or $c$ is assigned with total probability less than $1$. Let $1$ drop $b$ and then by SP, 1 is unassigned with positive probability, which then contradicts XWNW as $c$ and $a$ would have to be assigned whenever 1 is unassigned but only agent 2 ranks $c$ and $a$ acceptable.
%Thus, we conclude that $f_{3b}(\hat{Q}')=\frac{1}{2}=f_{4b}(\hat{Q}')$ and similarly, $f_{1c}(\hat{Q}')=\frac{1}{2}=f_{2c}(\hat{Q}')$. This determines also $f(\hat{Q}')$.

Moving to $Q'$, $Q_{-2}'=Q_{-2}$ and SP imply that $2$'s probability shares remain at the sum total of $\frac{3}{4}$. By ETE, the same holds for $1$. %By $\hat{Q}_{-4}'=Q_{-4}'$, SP and ETAE, we obtain $f_{3a}(Q')=f_{4a}(Q')=\frac{1}{4}$ and $f_{3b}(Q')=f_{4b}(Q')=\frac{1}{2}$.
By Lemma \ref{lem.equal.split.at.second*} $f_{3a}(Q')=f_{4a}(Q')=\frac{1}{4}$ and by Lemma \ref{lem.none.at.bottom} we know that $b$ is split among $3$ and $4$. Moreover, by SP and ETE the split must be uniform, $f_{3b}(Q')=f_{4b}(Q')=\frac{1}{2}$ (otherwise, letting $3$ declare $c$ unacceptable yields a contradiction).
By XWNW, ETE, and the fact that all three objects are fully assigned this determines $f(Q')$.
Moving to $Q''$, by $Q_{-3}'=Q_{-3}''$ and SP, we obtain $f_{3a}(Q'')=\frac{1}{4}$ and $f_{3b}(Q'')=\frac{1}{2}$. By ETE, 4 receives the same shares as 3, and by ETE and XWNW we obtain $f(Q'')$. Finally, moving to $Q'''$, by $Q_{-1}'''=Q_{-1}''$ and SP, $1$ still receives $\frac{3}{4}$ for objects $a$ and $c$. As $R_1:ac$ and $R_{-1}=Q_{-1}'''$, SP now completes Step \emph{(1)}.

\medskip
\noindent
Step \emph{(2)}:
\[
\begin{array}{cccc}
R' & a & b & c \\ \hline
ba & \frac{1}{3} & \frac{1}{3} & 0 \\
cab & 0 & 0 & 1\\
ba & \frac{1}{3} & \frac{1}{3} & 0 \\
ba & \frac{1}{3} & \frac{1}{3} &  0
\end{array}\qquad
\begin{array}{cccc}
R'' & a & b & c \\ \hline
ab &  \frac{2}{3} & 0 & 0 \\
cab &  &  & \\
ba &  &  & \\
ba &  &  &
\end{array}
\qquad
\]
For profile $R'$, we obtain $f_{2a}(R')=0$ and $f_{2b}(R')=0$ by Lemma \ref{lem.equal.split.at.top} (c). But then by XWNW we obtain $f_{2c}(R')=1$ (as 2 is the only agent ranking object $c$ acceptable). For profile $R''$ we have $f_{1b}(R'')=0$ by Lemma \ref{lem.equal.split.at.top} (b) (as $f_{3b}(R'')=f_{4b}(R'')=\frac{1}{2}$).
Thus, by  SP, we obtain $f_{1a}(R'')=\frac{2}{3}$. But then, moving from $R''$ to $R$, we have by SP, $f_{1a}(R)=\frac{2}{3}=\frac{16}{24}$. This completes Step \emph{(2)}.\medskip

\medskip
\noindent
\textit{Claim 2: $f_{2c}(G)=\frac{14}{24}$.}

Since $f_{2a}(G)=0$ and $f_{2b}(G)=\frac{1}{3}$, the claim is equivalent to $\sum_{o\in O} f_{2o}(G)=\frac{22}{24}$. By SP, it suffices to show $\sum_{o\in O} f_{2o}(R)=\frac{22}{24}$ in the profile $R$ above.

We know already $f_{2b}(R)=0$ (by Lemma \ref{lem.equal.split.at.top} (b)) and $f_{2c}(R)=\frac{22}{24}$ (by Claim 1). Suppose that $f_{2a}(R)>0$.
Then consider the following sequence of profiles:
\[\begin{array}{cccc}
\hat{R} & a & b & c \\ \hline
ac & >\frac{5}{12} & 0 & \frac{1}{2} \\
ac & >\frac{5}{12} & 0 & \frac{1}{2}\\
ba & <\frac{1}{12} & \frac{1}{2} & 0 \\
ba & <\frac{1}{12} & \frac{1}{2} &  0
\end{array}
\qquad
\begin{array}{cccc}
\hat{R}' & a & b & c \\ \hline
ac & \frac{1}{3} & 0 & \frac{1}{2} \\
ac & \frac{1}{3} & 0 & \frac{1}{2}\\
ab & \frac{1}{3} & \frac{1}{4} & 0 \\
ba & 0 & \frac{3}{4} &  0
\end{array}
\qquad
\begin{array}{cccc}
\hat{R}'' & a & b & c \\ \hline
ac & \frac{1}{4} & 0 & \frac{1}{2} \\
ac & \frac{1}{4} & 0 & \frac{1}{2}\\
ab & \frac{1}{4} & \frac{1}{2} & 0 \\
ab & \frac{1}{4} & \frac{1}{2} &  0
\end{array}
\]
For profile $\hat{R}$ from ETE and XWNW we obtain the random assignments for objects $c$ and $b$. By $\hat{R}_{-2}=R_{-2}$, SP and ETE we obtain $f_{1a}(\hat{R})=f_{2a}(\hat{R})>\frac{5}{12}$ (as $f_{2a}(R)+f_{2c}(R)>\frac{22}{24}=\frac{1}{2}+\frac{5}{12}$).
Thus, by ETE and XWNW, $f_{3a}(\hat{R})=f_{4a}(\hat{R})<\frac{1}{12}$.
For profile $\hat{R}''$, $f(\hat{R}'')$ is obtained using ETE, XWNW, and Lemma \ref{lem.equal.split.at.top} (a) (which implies that $a$ is shared equally).
For profile $\hat{R}'$ we must have $f_{4a}(\hat{R}')=0$ as by Lemma \ref{lem.equal.split.at.top} (b) we have
$f_{1a}(\hat{R}')=f_{2a}(\hat{R}')=f_{3a}(\hat{R}')=\frac{1}{3}$. Thus, by $\hat{R}_{-4}'=\hat{R}_{-4}''$ and SP we obtain $f_{4b}(\hat{R}')=\frac{3}{4}$. As $b$ is fully assigned under $\hat{R}'$, this now implies $f_{3b}(\hat{R}')=\frac{1}{4}$.
But then we have
\[ f_{3a}(\hat{R}')+f_{3b}(\hat{R}')=\frac{1}{3}+\frac{1}{4}=\frac{7}{12}> f_{3a}(\hat{R})+f_{3b}(\hat{R}),\]
which is a contradiction to SP as $\hat{R}_{-3}=\hat{R}_{-3}'$ and both $\hat{R}_3$ and $\hat{R}_3'$ rank $a$ and $b$ as the only acceptable objects. This establishes $f_{2a}(R)=0$ and hence completes the proof of the Claim 2.\medskip

Thus, by Claim 2, $f_{2c}(G)= \frac{14}{24}$, and we have $f_{1c}(G)>\frac{6}{24}$ as $f$ is at-most-$q$-agent-object-wasteful with $q<\frac{1}{6}$. To lead this remaining case to a contradiction, consider the following sequence of profiles:
\[\begin{array}{cccc}
Q & a & b & c \\ \hline
ac &  \frac{14}{24} & 0 & >\frac{6}{24} \\
bca & 0 & \frac{1}{3} & \frac{14}{24}\\
ba &  & \frac{1}{3} & 0 \\
ba &  & \frac{1}{3} &  0
\end{array}
\qquad
\begin{array}{cccc}
Q' & a & b & c \\ \hline
acb & \frac{14}{24} & 0 & >\frac{6}{24} \\
bca &  &  & \frac{1}{2}\\
ba & \frac{1}{4} &  & 0 \\
ba & \frac{1}{4} &  &  0
\end{array}\qquad
\begin{array}{cccc}
Q'' & a & b & c \\ \hline
bca & \alpha>\frac{2}{24} & \frac{1}{4} & \frac{1}{2}  \\
bca & \alpha>\frac{2}{24} & \frac{1}{4} & \frac{1}{2}\\
ba & \frac{1}{2}-\alpha <\frac{10}{24} & \frac{1}{4} & 0 \\
ba & \frac{1}{2}-\alpha <\frac{10}{24} & \frac{1}{4} &  0
\end{array}
\]
Note that for profile $Q$ we obtain by $R_{-2}=Q_{-2}=G_{-2}$, SP and $f_{2a}(R)=0$ that $f_{2a}(Q)=0$. Then $f_{2c}(Q)=\frac{14}{24}$ follows from $f_{2c}(R)=\frac{22}{24}$ and $f_{2b}(Q)=\frac{1}{3}$ (as by Lemma \ref{lem.equal.split.at.top} (b) we have $f_{2b}(Q)=f_{3b}(Q)=f_{4b}(Q)=\frac{1}{3}$, and similarly for profile $Q'$ in obtaining $f_{1b}(Q')=0$).
Using the same argument as for determining $f_{1a}(G)=\frac{14}{24}$ we obtain $f_{1a}(Q)=\frac{14}{24}$. Hence, as $f$ is at-most-$q$-agent-object-wasteful with $q<\frac{1}{6}$, we must have $f_{1c}(Q)> \frac{6}{24}$.
For profile $Q'$, by $f_{1b}(Q')=0$, $Q_{-1}'=Q_{-1}$ and SP, $f_{1a}(Q')=\frac{14}{24}$ and $f_{1c}(Q')>\frac{6}{24}$.
For profile $Q''$, by Lemma \ref{lem.equal.split.at.top} (a), $f_{1b}(Q'')=f_{2b}(Q'')=f_{3b}(Q'')=f_{4b}(Q'')=\frac{1}{4}$. Moreover, $f_{1c}(Q'')=f_{2c}(Q'')=\frac{1}{2}$ by letting agents 1 and 2 drop $a$ and invoking ETAE, or by XEFF and ETE.
Thus, by $Q_{-1}'=Q_{-1}''$, SP and ETE, $f_{2a}(Q'')=f_{1a}(Q'')=\alpha>\frac{2}{24}$. As all agents rank $a$ acceptable, by XWNW object $a$ is assigned with probability 1 and we obtain by ETE,
$f_{3a}(Q'')=f_{4a}(Q'')=\frac{1}{2}-\alpha<\frac{10}{24}$. Note that $f_{3a}(Q'')+f_{3b}(Q'')<\frac{1}{4}+\frac{10}{24}=\frac{2}{3}$.

Next consider the following two profiles.
\[\begin{array}{cccc}
\hat{Q} & a & b & c \\ \hline
bca &   & \frac{1}{3} & \frac{1}{2} \\
bca &  & \frac{1}{3} & \frac{1}{2} \\
ab & \frac{3}{4}-\alpha & 0 & 0 \\
ba &  & \frac{1}{3} &  0
\end{array}
\qquad
\begin{array}{cccc}
\hat{Q}' & a & b & c \\ \hline
bca &  \frac{1}{24}+\frac{\alpha}{2} & \frac{1}{4} & \frac{1}{2}-\frac{\alpha}{2} \\
bca & \frac{1}{24}+\frac{\alpha}{2} & \frac{1}{4} & \frac{1}{2}-\frac{\alpha}{2}\\
bac & \frac{1}{2}-\alpha  & \frac{1}{4}  &  \alpha \\
ba &   \frac{5}{12}  & \frac{1}{4} &  0
\end{array}
\]
For profile $\hat{Q}$, we obtain $f_{b}(\hat{Q})$ by Lemma \ref{lem.equal.split.at.top} (b). Hence, by $f_{3b}(\hat{Q})=0$, $\hat{Q}_{-3}=Q_{-3}''$, SP, and $\alpha>\frac{2}{24}$, we have $f_{3a}(\hat{Q})=\frac{3}{4}-\alpha<\frac{2}{3}$.
Next $f_{1b}(\hat{Q})=f_{2b}(\hat{Q})=f_{4b}(\hat{Q})=\frac{1}{3}$ by Lemma \ref{lem.equal.split.at.top} (b).
Moreover, $f_{1c}(\hat{Q})=f_{2c}(\hat{Q})=\frac{1}{2}$ -- either by XEFF and ETE or by letting $1$ and $2$ drop $a$ and invoking ETAE and XWNW.
For profile $\hat{Q}'$, by Lemma \ref{lem.equal.split.at.top} (a), $f_{1b}(\hat{Q}')=f_{2b}(\hat{Q}')=f_{3b}(\hat{Q}')=f_{4b}(\hat{Q}')=\frac{1}{4}$.
Hence, by $\hat{Q}_{-3}=\hat{Q}_{-3}'$ and SP, we have $f_{3a}(\hat{Q}')=\frac{1}{2}-\alpha$ and by SP (using $\hat{Q}_{-3}'=U_{-3}'$ from below), $f_{3c}(\hat{Q}')=\alpha$. By ETE and XWNW we obtain $f_{1c}(\hat{Q}')=f_{2c}(\hat{Q}')={\frac{1}{2}} - {\frac{\alpha}{2}}$ as $c$ is fully assigned.

To determine the remaining entries for $\hat{Q}'$, consider
\[
\begin{array}{cccc}
U & a & b & c \\ \hline
bca & \frac{1}{6} & \frac{1}{4} & \frac{1}{3} \\
bca & \frac{1}{6} & \frac{1}{4} & \frac{1}{3}\\
bca & \frac{1}{6}  & \frac{1}{4}  &  \frac{1}{3}\\
bac &  \frac{1}{2}  & \frac{1}{4} & 0
\end{array}
\qquad
\begin{array}{cccc}
U' & a & b & c \\ \hline
bca & \frac{1}{6} & \frac{1}{4} & \frac{1}{3} \\
bca & \frac{1}{6} & \frac{1}{4} & \frac{1}{3}\\
bca & \frac{1}{6}  & \frac{1}{4}  &  \frac{1}{3}\\
ba &  \frac{1}{2}  & \frac{1}{4} & 0
\end{array}
\qquad
\begin{array}{cccc}
\hat{Q}'' & a & b & c \\ \hline
bca & 0 & \frac{1}{4} & \frac{7}{12} \\
bac & \frac{1}{3} & \frac{1}{4} & \frac{5}{24}\\
bac & \frac{1}{3}  & \frac{1}{4}  &  \frac{5}{24}\\
ba &  \frac{1}{3}  & \frac{1}{4} &
\end{array}
\qquad
\begin{array}{cccc}
\hat{Q}''' & a & b & c \\ \hline
bac & \frac{1}{4} & \frac{1}{4} & \frac{1}{3} \\
bac & \frac{1}{4} & \frac{1}{4} & \frac{1}{3}\\
bac & \frac{1}{4}  & \frac{1}{4}  &  \frac{1}{3}\\
ba &  \frac{1}{4}  & \frac{1}{4} & 0
\end{array}
\]
For profile $U$, by (a) of Lemma \ref{lem.equal.split.at.top} we have $f_{1b}(U)=f_{2b}(U)=f_{3b}(U)=f_{4b}(U)=\frac{1}{4}$.
We obtain
$f_{1c}(U)=f_{2c}(U)=f_{3c}(U)=\frac{1}{3}$ by invoking ETAE or XEFF and SP as in the proof of (b) of Lemma \ref{lem.equal.split.at.top} (where agents 1, 2 and 3 drop $a$). But then by considering $(bca,U_{-4})$ we obtain by SP that $f_{4a}(U)=\frac{1}{2}$. As all objects are fully assigned by XWNW, now we obtain $f_{1a}(U)=f_{2a}(U)=f_{3a}(U)=\frac{1}{6}$.
In moving to $U'$ we obtain by $U_{-4}=U_{-4}'$ and SP, $f_{4a}(U')=\frac{1}{2}$ and $f_{4b}(U')=\frac{1}{4}$. Now the remaining entries in $f(U')$ follow from ETE, XWNW and the fact that all three objects are fully assigned.

Then $f_{4a}(\hat{Q}''')=\frac{1}{4}$ (consider $4$ declaring $c$ acceptable and invoke SP, ETE, and XWNW), while the remaining entries follow from ETE and XWNW. At $\hat{Q}''$ we have $f_{1a}(\hat{Q}'')=0$ if $f$ satisfies XEFF; if $f$ satisfies ETAE and, towards a contradiction, we had $f_{1a}(\hat{Q}'')>0$, then $f_{2a}(\hat{Q}'')=f_{3a}(\hat{Q}'')=f_{4a}(\hat{Q}'')<\frac{1}{3}$. But then this remains the case even as $2$ and $3$ drop $c$ -- contradicting XWNW as $1$ is unassigned with positive probability and whenever that is the case, $c$ is unassigned as well. The remaining entries at $\hat{Q}''$ follow by SP and ETE.

Now comparing $\hat{Q}''$ and $\hat{Q}'$ SP yields $f_{2a}(\hat{Q}')=f_{2a}(\hat{Q}'')+f_{2c}(\hat{Q}'')-f_{2c}(\hat{Q}')=\frac{1}{24}+\frac{\alpha}{2}$ and, by ETE, $f_{1a}(\hat{Q}')=\frac{1}{24}+\frac{\alpha}{2}$. As residual, since $a$ is fully assigned by XWNW, we get $f_{4a}(\hat{Q}')=\frac{5}{12}$.

If $f$ satisfies ETAE, comparing $3$ and $4$ yields $\frac{1}{2}-\alpha=f_{3a}(\hat{Q}')=f_{4a}(\hat{Q}')=\frac{5}{12}$, thus contradicting $\alpha>\frac{2}{24}$.

It remains to derive a contradiction for the case where $f$ satisfies XEFF (even as it may fail ETAE). For this we show that $f_{1a}(\hat{Q}')(=f_{2a}(\hat{Q}'))=\frac{1}{12}$, which establishes $\alpha=\frac{1}{12}$, the desired contradiction to $\alpha> \frac{2}{24}$.

To do so, first consider the following preference profiles, arrived at by varying $1$'s preferences relative to $\hat{Q}'$.
\[\begin{array}{cccc}
W & a & b & c \\ \hline
bc &  0 & \frac{1}{4} & \frac{1}{2}-\frac{\alpha}{2} \\
bca &  & \frac{1}{4} &  \\
bac & & \frac{1}{4} &  \\
ba &  & \frac{1}{4} &  0
\end{array}
\qquad
\begin{array}{cccc}
W' & a & b & c \\ \hline
cb &  0 & 0 & \frac{3}{4}-\frac{\alpha}{2} \\
bca &  & \frac{1}{3} & \\
bac &   & \frac{1}{3}  &   \\
ba &    & \frac{1}{3} &  0
\end{array}
\]
The entries for $b$ follow from Lemma \ref{lem.equal.split.at.top}, the remaining entries for $1$ by SP. Now at $W'$ that leaves $1$ unassigned with probability $\frac{1}{4}+\frac{\alpha}{2}$.

Crucially, we do know by XEFF that at $W'$ all objects are assigned with probability $1$ -- for $b$ and $c$ this follows since they are top ranked, for $a$ this is the case as three agents rank it third or higher. Hence, if we determine the probabilities of $2$, $3$ and $4$ to be unassigned, that allows us to determine the probability with which $1$ is unassigned (as the total unassignment probabilities should sum to 1), and hence to determine $\alpha$.

We proceed in three steps. To determine the probability with which 2 is unassigned, we first determine $f_{2c}(W')+f_{2a}(W')$. Then, to determine the probability that $3$ is unassigned we consider $f_{3c}(W')+f_{3a}(W')$. Last, we determine the probability that $4$ is unassigned at $W'$.

\underline{Step 1}: $f_{22}(W')=\frac{1}{4}$. For this consider
\[
\begin{array}{cccc}
W'' & a & b & c \\ \hline
cb &  0 & 0 & \frac{5}{6} \\
bac &  \frac{1}{3} & \frac{1}{3} & \frac{1}{12}\\
bac &   \frac{1}{3} & \frac{1}{3}  &  \frac{1}{12} \\
ba &    \frac{1}{3} & \frac{1}{3} &  0
\end{array}
\qquad
\begin{array}{cccc}
\tilde{W}''' & a & b & c \\ \hline
cb(a) &  0 & 0 & \frac{5}{6} \\
bac &  &  & \\
bac &   &   &   \\
bac &  \frac{1}{3}  & \frac{1}{3} &
\end{array}
\qquad
\qquad
\begin{array}{cccc}
W''' & a & b & c \\ \hline
bac &  \frac{1}{4} & \frac{1}{4} & \frac{1}{3} \\
bac &  &  & \\
bac &   &   &   \\
ba &  \frac{1}{4}  & \frac{1}{4} &  0
\end{array}
\qquad
\begin{array}{cccc}
W'''' & a & b & c \\ \hline
bac &  \frac{1}{4} & \frac{1}{4} & \frac{1}{3}   \\
bac &  \frac{1}{4} & \frac{1}{4} & \frac{1}{3} \\
bac & \frac{1}{4} & \frac{1}{4} & \frac{1}{3}    \\
bac & \frac{1}{4} & \frac{1}{4} & \frac{1}{3}
\end{array}.
\]
The entries for $4$ in $W''''$ follow by ETE and XEFF, for $4$ at $W'''$ by SP and XEFF. The entries for $1$ in $W'''$ then follow by ETE and XEFF. By XEFF and SP we get the entries for $1$ at $W''$. The remaining probabilities for $a$ and $b$ at $W''$  follow from ETE and SP (consider also moving from $W''''$ to $W''$ via $\tilde{W}'''$). The entries for $c$ at $W''$ follow as residuals by ETE. But then $f_{2c}(W')+f_{2a}(W')=f_{2c}(W'')+f_{2a}(W'')=\frac{5}{12}$. Hence the probability that $2$ is unassigned at $W'$ is $\frac{1}{4}$.

\underline{Step 2}: $f_{33}(W')=\frac{5}{24}$.  For this consider
\[
\begin{array}{cccc}
W'' & a & b & c \\ \hline
cb &  0 & 0 & \\
bca &  \frac{1}{4} & \frac{1}{3} & \\
bca &   \frac{1}{4} & \frac{1}{3}  &   \\
ba &    \frac{1}{2} & \frac{1}{3} &  0
\end{array}
\qquad
\begin{array}{cccc}
W''' & a & b & c \\ \hline
cb &  0 & 0 & \frac{1}{2} \\
bca &  &  &\\
bca &     & & \\
bca &   \frac{1}{3}  &  \frac{1}{3} &  \frac{1}{6}
\end{array}
\qquad
\begin{array}{cccc}
W'''' & a & b & c \\ \hline
cba & \frac{1}{4} & 0 & \frac{1}{2}  \\
bca &   &  &\\
bca &     & & \\
bca &   &  &
\end{array}.
\]
Solving from $W''''$ to $W''$ we get $f_{4a}(W'')=\frac{1}{2}$, and as residual by ETE  $f_{3a}(W'')=\frac{1}{4}$. By XEFF, $f_{1b}(W'')=0$; as a residual and by ETE we have $f_{2b}(W'')=f_{3b}(W'')=\frac{1}{3}$.

Next, consider
\[
\begin{array}{cccc}
W'' & a & b & c \\ \hline
cb &   &  &\frac{7}{12} \\
bca &  & & \frac{5}{24}  \\
bca &   & & \frac{5}{24}    \\
ba &    & &   0
\end{array}
\qquad
\begin{array}{cccc}
V & a & b & c \\ \hline
cba & \frac{1}{6}  & 0 & \frac{7}{12} \\
bca &  &   & \frac{5}{24}\\
bca &  &    & \frac{5}{24} \\
ba(c) &  &   &     0
\end{array}
\qquad
\begin{array}{cccc}
V' & a & b & c \\ \hline
bca &  \frac{1}{6} & \frac{1}{4}& \frac{1}{3} \\
bca & \frac{1}{6} &  \frac{1}{4} &\frac{1}{3}\\
bca & \frac{1}{6} &  \frac{1}{4}  & \frac{1}{3}\\
ba(c) & \frac{1}{2} & \frac{1}{4}  &     0
\end{array}
\]
For $V'$ and $V$ we consider two versions, one with  $c$ acceptable for $4$ and one without. In both cases, $b$ is split uniformly at $V'$ by Lemma \ref{lem.equal.split.at.top} (a) and $4$ gets none of $c$ by XEFF. The remaining probabilities at $V'$ follow by ETE and XEFF. Moving to $V$, $1$ gets none of $b$ by XEFF and hence $\frac{7}{12}$ of $c$ by SP. The remaining assignment probabilities for $c$ follow as residuals by ETE. Moreover, moving to $W''$, $1$ still gets $\frac{7}{12}$ of $c$ by SP and hence, by ETE, $2$ and $3$ still get $\frac{5}{24}$.

But then $f_{3c}(W')+f_{3a}(W')=f_{3c}(W'')+f_{3a}(W'')=\frac{5}{24}+\frac{1}{4}=\frac{11}{24}$. Hence, at $W'$, $3$ is unassigned with probability $\frac{5}{24}$.

\underline{Step 3}: $f_{44}(W')=\frac{1}{4}$.
For this
consider
\[
\begin{array}{cccc}
W'' & a & b & c \\ \hline
cb &   & 0 & \\
bca &   & \frac{1}{2} & \frac{1}{6} \\
bac & \frac{1}{6}  & \frac{1}{2} &     \\
ab &    & 0 &
\end{array}
\quad
\begin{array}{cccc}
W''' & a & b & c \\ \hline
cb &   &  & \\
bca &  &  &  \\
ba &  \frac{1}{6} & \frac{1}{2}&     \\
ab &    & &
\end{array}
\quad
\begin{array}{cccc}
W'''' & a & b & c \\ \hline
cb &   & 0 &  \\
bca &  & \frac{2}{3}&  \\
ab & \frac{1}{2}  & \frac{1}{6}&     \\
ab & \frac{1}{2}   & \frac{1}{6} &
\end{array}
\quad
\begin{array}{cccc}
W''''' & a & b & c \\ \hline
cb &   &  & \\
ba & 0 & \frac{2}{3} &  \\
ab &   & &     \\
ab &    & &
\end{array}
\quad
\begin{array}{cccc}
W'''''' & a & b & c \\ \hline
cb &   &  & 1\\
ab & \frac{1}{3} & \frac{1}{3}&  \\
ab &   & &     \\
ab &    & &
\end{array}
\]
At $W''''''$, $a$ and $b$ are split uniformly among three agents while $1$ is satiated with $c$ (XEFF and ETE). Entries for $2$ at $W'''''$ follow by XEFF and SP. By SP, $2$ also gets $\frac{2}{3}$ of $b$ at $W''''$, remaining entries for $b$ follow from XEFF (in case of $1$) and by ETE as residuals. For $a$, XEFF implies that it is split between $3$ and $4$ and uniformly so by ETE. At $W'''$, $3$ still is assigned with probability $\frac{2}{3}$ by SP and by Lemma 4 (b) she gets $b$ with probability $\frac{1}{2}$. By SP this give us two entries at $W''$ for agent $3$.

By an analogous argument we get the corresponding entries for agent $2$ at $W''$. As residual probabilities, we find that neither $1$ nor $4$ is assigned any share of $b$. In the next step we determine the probability with which agents 2 and 3 are assigned their respective third-most-preferred object. This then completes the assignment probabilities for $W''$.
\[
\begin{array}{cccc}
W'' & a & b & c \\ \hline
cb &   &  & \\
bca &   &  &  \\
bac &   &  &     \\
ab &    &  &
\end{array}
\quad
\begin{array}{cccc}
V & a & b & c \\ \hline
cb &   &  & \\
bac    & \frac{1}{6}& & \\
bac    & \frac{1}{6}  & &    \\
ab    & \frac{2}{3} & 0 &
\end{array}
\quad
\begin{array}{cccc}
V' & a & b & c \\ \hline
cb &   &  & \\
bac &   & &  \\
bac &   &  &     \\
abc   & \frac{2}{3} & 0 & \frac{1}{12}
\end{array}
\quad
\begin{array}{cccc}
V'' & a & b & c \\ \hline
cb &  0 & 0 & \frac{3}{4} \\
bac &   & &  \\
bac &   &  &     \\
bac &  \frac{1}{3}  & \frac{1}{3} & \frac{1}{12}
\end{array}
\quad
\begin{array}{cccc}
V''' & a & b & c \\ \hline
cba &  0 & 0 & \frac{3}{4} \\
bac &   & &  \\
bac &   &  &     \\
bac &  \frac{1}{3}  & \frac{1}{3} & \frac{1}{12}
\end{array}
\]
Entries for $1$ at $V'''$ if follow from the fact that everyone gets total shares of $\frac{3}{4}$ (otherwise ETE implies that $1$ would receive either strictly more or strictly less and via SP we arrive at a contradiction once we consider a change in $1$'s preferences, aligning with the preferences of others). By XEFF, $1$ gets none of $a$ and $b$. Entries for $1$ at $V''$ follow by SP. Entries for others, in particular $4$ follow as residuals by XEFF and ETE. For $V'$, entries for $4$ follow by XEFF and SP (from $V''$).
\[
\begin{array}{cccc}
W'' & a & b & c \\ \hline
cb &   &  & \\
bca &   &  &  \\
bac &   &  &     \\
ab &    &  &
\end{array}
\qquad
\begin{array}{cccc}
V & a & b & c \\ \hline
cb &   &  & \\
bac    & \frac{1}{6}& & \\
bac    & \frac{1}{6}  & &    \\
ab    & \frac{2}{3} & 0 &
\end{array}
\quad
\begin{array}{cccc}
\hat{V}' & a & b & c \\ \hline
bc &   &  \frac{1}{3} & \\
bac &  \frac{5}{24} & \frac{1}{3}&  \\
bac &  \frac{5}{24} & \frac{1}{3} &     \\
ab   & \frac{7}{12} & 0 & 0
\end{array}
\quad
\begin{array}{cccc}
\hat{V}'' & a & b & c \\ \hline
bc &   &  & \\
bac &   & &  \\
bac &   &  &     \\
abc   & \frac{7}{12} & 0 & \frac{1}{6}
\end{array}
\quad
\begin{array}{cccc}
\hat{V}''' & a & b & c \\ \hline
bca &  0 & \frac{1}{4} & \frac{1}{2} \\
bac &   & &  \\
bac &   &  &     \\
bac &  \frac{1}{3}  & \frac{1}{4} & \frac{1}{6}
\end{array}
\]
Probabilities at $\hat{V}'''$ again follow from the fact that everyone receives total shares of $\frac{3}{4}$, 1 receives none of $a$ by XEFF, and by Lemma 4 (a) as well as ETE for the residual. For $\hat{V}''$ the entries for 4 follow by SP and XEFF. Similarly for $4$ at $\hat{V}'$ and the remaining entries at $\hat{V}'$ by Lemma 4 (b) and as residuals via ETE. To determine entries for $c$ at $\hat{V}'$, consider the following:
 \[
\begin{array}{cccc}
V & a & b & c \\ \hline
cb &   &  & \frac{5}{6} \\
bac    & \frac{1}{6}& & \frac{1}{12}\\
bac    & \frac{1}{6} & & \frac{1}{12}    \\
ab    & \frac{2}{3} & 0 &
\end{array}
\quad
\begin{array}{cccc}
\hat{V}' & a & b & c \\ \hline
bc & 0  & \frac{1}{3} & \frac{1}{2}\\
bac &  \frac{5}{24} & \frac{1}{3}&  \\
bac &  \frac{5}{24} & \frac{1}{3} &     \\
ab   & \frac{7}{12} & 0 & 0
\end{array}
\quad
\begin{array}{cccc}
\tilde{V}'' & a & b & c \\ \hline
bca & 0  & \frac{1}{3} & \frac{1}{2}\\
bac &   & &  \\
bac &   &  &     \\
ab   &  &  &
\end{array}
\quad
\begin{array}{cccc}
\tilde{V}''' & a & b & c \\ \hline
bac &  \frac{1}{6} & \frac{1}{3} & \frac{1}{3} \\
bac &   & &  \\
bac &   &  &     \\
ab &  \frac{1}{2}  & 0 & 0
\end{array}
\quad
\begin{array}{cccc}
\tilde{V}'''' & a & b & c \\ \hline
bac &  & & \\
bac &   & &  \\
bac &   &  &     \\
bac&  \frac{1}{4}  & \frac{1}{4} & \frac{1}{4}
\end{array}
\]

For $\tilde{V}''''$ XEFF and ETE suffice. For $\tilde{V}'''$ XEFF and SP yields entries for $4$, the remainder follows by ETE and XEFF. For 1 at $\tilde{V}''$, the entries follow by XEFF and SP. Similarly, we get a new entry for $1$ at $\hat{V}'$: she receives $c$ with probability $\frac{1}{2}$. But then $1$ receives $c$ with probability $\frac{5}{6}$ at $V$ so that $2$ and $3$ each receive $\frac{1}{12}$ of $c$ at $V$.

Now for $W''$, this allows us to fill in the missing entries by SP: since the total probability mass of $a$ and $c$ remains unchanged for $2$, and since we already know that they receive $\frac{1}{6}$ of $c$ at $W''$, this leaves $\frac{1}{12}$ for object $a$ :
\[
\begin{array}{cccc}
W' & a & b & c \\ \hline
cb &  0 & 0 & \frac{3}{4}-\frac{\alpha}{2} \\
bca &  & \frac{1}{3} & \\
bac &   & \frac{1}{3}  &   \\
ba &  \frac{5}{12}  & \frac{1}{3} &  0
\end{array}
\qquad
\begin{array}{cccc}
W'' & a & b & c \\ \hline
cb &   & 0 & \frac{3}{4}\\
bca &  \frac{1}{12} & \frac{1}{2} & \frac{1}{6} \\
bac &  \frac{1}{6} & \frac{1}{2} & \frac{1}{12} \\
ab &   \frac{3}{4}  & 0 &
\end{array}
\qquad
\begin{array}{cccc}
V & a & b & c \\ \hline
cb &   &  0 & \frac{5}{6}\\
bac    & \frac{1}{6}& \frac{1}{2} & \frac{1}{12} \\
bac    & \frac{1}{6}  & \frac{1}{2}&   \frac{1}{12} \\
ab    & \frac{2}{3} & 0 &
\end{array}
\]
Again, the symmetry of $W''$ allows us to derive the entries for $3$ analogously, so she also gets $\frac{1}{12}$ of their third-most preferred object at $W''$. The most-preferred objects of $1$ and $4$ are assigned with residual probabilities. Finally, returning to $W'$, we know that $4$ is still unassigned with probability $\frac{1}{4}$ by SP.\medskip

Summarizing the three steps, we know that at $W'$, $2$ is unassigned with probability $\frac{1}{4}$, $3$ with probability $\frac{5}{24}$ and $4$ with probability $\frac{1}{4}$. Since unassignment probabilities need to sum to 1 (all 3 objects assigned among 4 agents with probability 1), we know that agent $1$ is unassigned with probability $\frac{7}{24}$. Given that 1 is assigned neither $a$ nor $b$, and assigned $c$ with probability $\frac{3}{4}-\frac{\alpha}{2}$, this implies $\frac{7}{24}=\frac{1}{4}+\frac{\alpha}{2}$ or $\alpha=\frac{1}{12}$, the desired contradiction to $\alpha>\frac{2}{24}$ in the case that $f$ satisfies XEFF.\hfill$\square$

\section{Proof of Lemma \ref{lemm1}}
The proof uses the fact that any (deterministic) serial dicatorship mechanism is non-bossy (NB), i.e., that $f_i^\succ(R)=f_i^\succ(R_i',R_{-i})$ implies $f_j^\succ(R)=f_j^\succ(R_i',R_{-i})$ for all $i,j$, $R$, and $R_i'$.

\noindent
\underline{\textit{(i):}} Let $O=\{a,b,c\}$
and $f$ denote RSD. We consider first $|N|=4$ and second $|N|>4$.\medskip

\noindent
\underline{\textit{$|N|=4$:}} First, consider $|N|=4$ and $N=\{1,2,3,4\}$.
Suppose that $f$ is $q$-agent-object-wasteful where $q> \frac{1}{6}=\frac{4}{24}$ (and $|N|!=24$).

Let $R$ be such that (without loss of generality) $c\in A(R_1)$ and $\min\{f_{11}(R),f_{cc}(R)\}>\frac{4}{24}$.
Then by XEFF, $c\notin top(R)$. Let $top(R_1)=a$. Again by XEFF and 1 being unassigned with positive probability, at least one agent other than 1 ranks $c$ acceptable, say, without loss of generality, $c\in A(R_2)$. Thus, $top(R_2)\neq c$.
If $cP_33$ or $cP_44$, then by XEFF and $|O|=3$ object $c$ is always assigned for any order (as otherwise 1 and 2 must be assigned $a$ and $b$ and $c$ is left over for agent 3 or 4), a contradiction. Hence, both $3P_3c$ and $4P_4 c$. If agent 3 or agent 4 ranks no object acceptable, then we have a contradiction to non-wastefulness of RSD for three agents and three objects (Proposition 2 of Martini (2016)). Thus, $top(R_3),top(R_4)\in \{a,b\}$.
We consider the following two cases where $top(R_1)=top(R_2)$ or $top(R_1)\neq top(R_2)$.\medskip

\noindent
\underline{\textit{Case 1 for $|N|=4$: $top(R_1)\neq top(R_2)$.}} If $top(R_1)\neq top(R_2)$, then $top(R_2)=b$.
For $1$ to be unassigned, $c$ must be assigned to $2$ (so $b$ must be assigned to $3$ or $4$) and $a$ to $3$ or $4$ -- hence $3$ or $4$ must choose $b$ before $2$, $2$ must choose $c$ before $1$ while the other agent among $3$ and $4$ must choose $a$ before $1$. Thus $1$ can be unassigned only for the orders $3421$, $4321$, $4231$ and $3241$, i.e., four out of 24 orders, a contradiction.\medskip

\noindent
\underline{\textit{Case 2 for $|N|=4$: $top(R_1)= top(R_2)=a$.}}
If $R_1:ac\ldots$ and $R_2:ac\ldots$, then by XEFF object $c$ cannot remain unassigned, a contradiction. Hence, either $1$ or $2$ must rank $b$ above $c$. We consider two subcases.

Suppose $2$ ranks $b$ above $c$. For $1$ to be unassigned, $c$ must be assigned to $2$ (so $b$ must be assigned to $3$ or $4$) and $a$ to $3$ or $4$ -- hence $3$ or $4$ must choose $b$ before $2$, $2$ must choose $c$ before $1$ while the other agent among $3$ and $4$ must choose $a$ before $1$. Thus $1$ can be unassigned only for the orders $3421$, $4321$, $4231$ and $3241$, i.e., four out of 24 orders, a contradiction.\medskip

Suppose $1$ ranks $b$ above $c$. For $1$ to be unassigned, $c$ must be assigned to $2$ (so $a$ must be assigned to $3$ or $4$) and $b$ to $3$ or $4$ -- hence $3$ or $4$ must choose $a$ before $2$, $2$ must choose $c$ before $1$ while the other agent among $3$ and $4$ must choose $b$ before $1$. Thus $1$ can be unassigned only for the orders $3421$, $4321$, $4231$ and $3241$, i.e., four out of 24 orders, a contradiction.\medskip

\noindent
\underline{\textit{$|N|>4$:}}
Second, consider $|N|=k+1$ where $k\geq 4$. Let $N=\{1,\ldots,n\}$. Now the above serves as the Induction Hypothesis for four agents. %Note that the proof of Theorem \ref{the1} can always be embedded for such problems where all agents except for $\{1,2,3,4\}$ rank all objects as unacceptable and $f$ is $\frac{1}{6}$-agent-object-wasteful for $|O|=3$ and $|N|\geq 4$. 
In showing the Induction Step, suppose that $f$ is at-most-$\frac{1}{6}$-agent-object-wasteful whenever there are at most $k$ agents.
Suppose that $f$ is $q$-agent-object-wasteful with $q>\frac{1}{6}$ when $|N|=k+1$. Let $R\in \mathcal{R}^N$ be such that
(without loss of generality) $c\in A(R_1)$ and  $q=\min\{f_{11}(R),f_{cc}(R)\}>\frac{1}{6}$ and $c\in A(R_2)$. By XEFF, $top(R_1)\neq c \neq top(R_2)$, and both $iP_i c$ and
$top(R_i)\in \{a,b\}$ for all $i\in N\backslash \{1,2\}$ (where for the latter we also invoke the induction hypothesis). Let $S_a=\{i\in N\backslash \{1,2\}:top(R_i)=a\}$ and $S_b=\{i\in N\backslash \{1,2\}:top(R_i)=b\}$. We consider again the two cases $top(R_1)\neq top(R_2)$ and $top(R_1)=top(R_2)$.\medskip

\noindent
\underline{\textit{Case 1 for $|N|>4$: $top(R_1)\neq top(R_2)$.}}
For the first case, $top(R_1)\neq top(R_2)=b$ and $n\geq 5$. Consider the probability that $c$ remains unassigned. By XEFF we then have that $1$ receives $a$ and $2$ receives $b$. Thus, 1 must choose object $a$ before the agents in $S_a$ and 2 must choose object $b$ before the agents in $S_b$. Now for an arbitrary order 1 chooses $a$ before $S_a$ with probability $\frac{1}{|S_a|+1}$ and 2 chooses $b$ before $S_b$ with probability
$\frac{1}{|S_b|+1}$. Hence, the probability that $c$ is unassigned is bounded from above by
\[ \frac{1}{|S_a|+1}\cdot \frac{1}{|S_b|+1}.\]
As $|S_b|+1=n-|S_a|-1$ we obtain
\begin{equation}\label{eq1}
\frac{1}{|S_a|+1}\cdot \frac{1}{n-|S_a|-1}\leq \frac{1}{6} \Leftrightarrow 6\leq (|S_a|+1)\cdot (n-|S_a|-1)
\end{equation}
where $|S_a|\in \{0,1,\ldots,n-2\}$. Note that the righthandside of (\ref{eq1}) is identical for $|S_a|=0$ and $|S_a|=n-2$, and it is increasing in $|S_a|$ for $|S_a|=0$ and decreasing in $|S_a|$ for $|S_a|=n-2$. Thus, (\ref{eq1}) is always true when (i) $n\geq 7$,
(ii) $n= 6$ and $|S_a|\neq 0,n-2$, and (iii) $n=5$ and $|S_a|\neq 0,n-2$. In all those cases we have $f_{cc}(R)\leq \frac{1}{6}$ and hence,
$\min\{f_{11}(R),f_{cc}(R)\}\leq \frac{1}{6}$.

It remains to consider $n\in \{5,6\}$ and $|S_a|\in \{0,n-2\}$. Let $n=5$ and either $|S_a|=3$ or $|S_a|=0$.

Let $|S_a|=3$ and hence $|S_b|=0$. We consider two subcases. First, suppose that $3$, $4$ and $5$ consider both $a$ and $b$ acceptable. But then $c$ is unassigned only if agents 1 and 2 get objects $a$ and $b$. Thus, 1 and 2 must occupy the first two positions of the order. As there are six orders of the form $12\ldots$ and six orders of the form $21\ldots$, $c$ is wasted with probability $\frac{12}{120}=\frac{1}{10}<\frac{1}{6}$, a contradiction.

Second, suppose that at most two, say $4$ and $5$, consider $b$ acceptable and consider the probability that $1$ is unassigned, for which  $2$ must receive $c$, $4$ or $5$ must receive $b$, and $3$, $4$ or $5$ must receive $a$.\footnote{If fewer than two among $3$, $4$, and $5$ consider $b$ acceptable the probability of that event is even lower.} Thus, $4$ or $5$ must choose before $2$, $2$ must choose before $1$, and $3$, $4$ or $5$ must choose first. In total, there are fifteen such orders:
\begin{itemize}
\item four orders of $N\backslash \{5\}$ of the form $3421$ with $5$ inserted somewhere after $3$;
\item three orders of $N\backslash \{4\}$ of the form $3521$ with $4$ inserted somewhere after $3$ (while not double counting the order $35421$ from the above);
\item four orders of $N\backslash \{3\}$ of the form $4521$ with $3$ inserted somewhere after $4$; and
\item four orders of $N\backslash \{3\}$ of the form $5421$ with $3$ inserted somewhere after $5$.
\end{itemize}
Hence, with $5!=120$ orderings overall, the probability that $1$ remains unassigned is at most $\frac{15}{120}<\frac{20}{120}=\frac{1}{6}$.

Next suppose $|S_a|=0$ and hence $|S_b|=3$. If $3$, $4$ and $5$ consider both $b$ and $a$ acceptable, then the previous, first subcase establishes that this is not possible since the probability that $c$ is unassigned is $\frac{1}{10}<\frac{1}{6}$. If at most two of them, say $4$ and $5$, consider both $b$ and $a$ acceptable, then an argument analogous to the previous, second subcase establishes that this is not possible: for $1$ to be unassigned, $2$ must receive $c$, $4$ or $5$ must receive $a$, and $3$, $4$ or $5$ must receive $b$. Thus, $4$ or $5$ must choose before $2$, $2$ must choose before $1$, and $3$, $4$ or $5$ must choose first. In total, there are fifteen such orders
%\begin{itemize}
%\item four orders of $N\backslash \{5\}$ of the form $3421$ with $5$ inserted somewhere after $3$;
%\item three orders of $N\backslash \{4\}$ of the form $3521$ with $4$ inserted somewhere after $3$ (while not double counting the order $35421$ from the above);
%\item four orders of $N\backslash \{3\}$ of the form $4521$ with $3$ inserted somewhere after $4$; and
%\item four orders of $N\backslash \{3\}$ of the form $5421$ with $3$ inserted somewhere after $5$.
%\end{itemize}
%Hence, with $5!=120$ orderings overall,
and the probability that $1$ remains unassigned is at most $\frac{15}{120}<\frac{20}{120}=\frac{1}{6}$.

This completes the proof for $n=5$ and $|S_a|\in \{0,3\}$. For $n=6$ and $|S_a|\in \{0,4\}$ analogous arguments show that
$\min\{f_{11}(R),f_{cc}(R)\}\leq \frac{1}{6}$.\medskip

\noindent
\underline{\textit{Case 2 for $|N|>4$: $top(R_1)= top(R_2)$.}}
For the second case, $top(R_1)=top(R_2)=a$. As above for $|N|=4$ we consider four subcases.

For the first subcase, if $R_1:ac\ldots$ and $R_2:ac\ldots$, then by XEFF object $c$ cannot remain unassigned, a contradiction.

For the second subcase, if $R_1:ac\ldots$ and $R_2:abc$, then agent 1 is unassigned only if agent 2 gets $c$. Now for any order
$\succ$, we have by SP and NB of $f^{\succ}$:
\begin{equation}\label{eq2}
f_1^{\succ}(R)=1 \quad \& \quad  f_2^{\succ}(R)=c \Leftrightarrow f_1^{\succ}(R_2^{a\leftrightarrow b},R_{-2})=1 \quad \& \quad  f_2^{\succ}(R_2^{a\leftrightarrow b},R_{-2})=c.
\end{equation}
Furthermore, if $c$ is wasted, then 1 obtains $a$ and 2 obtains $b$. Thus, for any such order $\succ$, 1 chooses before 2 and by SP and NB of $f^{\succ}$:
\begin{equation}\label{eq3}
f_1^{\succ}(R)=a \quad \& \quad  f_2^{\succ}(R)=b \Rightarrow f_1^{\succ}(R_2^{a\leftrightarrow b},R_{-2})=a \quad \& \quad  f_2^{\succ}(R_2^{a\leftrightarrow b},R_{-2})=b.
\end{equation}
But now from (\ref{eq2}) we obtain $f_{11}(R)=f_{11}(R_2^{a\leftrightarrow b},R_{-2})$, and
from (\ref{eq3}) we obtain $f_{cc}(R)\leq f_{cc}(R_2^{a\leftrightarrow b},R_{-2})$.
Hence, under profile $(R_2^{a\leftrightarrow b},R_{-2})$ we must have that $f$ is $q$-agent-object-wasteful with $q>\frac{1}{6}$ which is a contradiction to the first case as for this profile we have $top(R_1)=a\neq b= top (R_2^{a\leftrightarrow b})$.

For the third subcase, if $R_1:abc\ldots$ and $R_2:ac\ldots$, we proceed as in the second subcase. First, note that agent 1 is unassigned only if agent 2 gets $c$. Now for any order
$\succ$, we have by SP and NB of $f^{\succ}$:
\begin{equation}\label{eq2'}
f_1^{\succ}(R)=1 \quad \& \quad  f_2^{\succ}(R)=c \Leftrightarrow f_1^{\succ}(R_1^{a\leftrightarrow b},R_{-1})=1 \quad \& \quad  f_2^{\succ}(R_1^{a\leftrightarrow b},R_{-1})=c.
\end{equation}
Furthermore, if $c$ is wasted, then 1 obtains $b$ and 2 obtains $a$. Thus, for any such order $\succ$, 2 chooses before 1 and by SP and NB of $f^{\succ}$:
\begin{equation}\label{eq3'}
f_1^{\succ}(R)=b \quad \& \quad  f_2^{\succ}(R)=a \Rightarrow f_1^{\succ}(R_1^{a\leftrightarrow b},R_{-1})=b \quad \& \quad  f_2^{\succ}(R_1^{a\leftrightarrow b},R_{-1})=a.
\end{equation}
But now from (\ref{eq2'}) we obtain $f_{11}(R)=f_{11}(R_1^{a\leftrightarrow b},R_{-1})$, and
from (\ref{eq3'}) we obtain $f_{cc}(R)\leq f_{cc}(R_1^{a\leftrightarrow b},R_{-1})$.
Hence, under profile $(R_1^{a\leftrightarrow b},R_{-1})$ we must have that $f$ is $q$-agent-object-wasteful with $q>\frac{1}{6}$. But then simply relabelling objects $a$ and $b$ yields a profile in contradiction to the first case.

%For the third subcase, if $R_1:abc$ and $R_2:ac\ldots$, then we show that without loss of generality we may suppose $2P_2b$. If not, then $R_2:acb$ and let $R_2':ac$. Now for any order $\succ$, $c$ remains unassigned under $f^{\succ}(R)$ if and only if $c$ remains unassigned under $f^{\succ}(R_2',R_{-2})$. Furthermore, for any order $\succ$, by SP and NB of $f^{\succ}$ we have
%\begin{equation}
%f_1^{\succ}(R)=1 \quad \& \quad  f_2^{\succ}(R)=c \Leftrightarrow f_1^{\succ}(R_2',R_{-2})=1 \quad \& \quad  f_2^{\succ}(R_2',R_{-2})=c.
%\end{equation}
%Hence, we have $q=\min \{f_{11}(R),f_{cc}(R)\}=\min \{f_{11}(R_2',R_{-2}),f_{cc}(R_2',R_{-2})\}$ where $q>\frac{1}{6}$.
%Thus, without loss of generality, we suppose that $R_2:ac$ (i.e., $2P_2b$).
%But then as $top(R_1)=top(R_2)=a$, we have $f_{11}(R)\leq f_{22}(R)$ and $q=\min\{f_{22}(R),f_{cc}(R)\}$.
%But by exchanging the roles of 1 and 2, we are now in the second subcase above, a contradiction.

For the fourth subcase, if $R_1=R_2:abc$, then again 1 is unassigned only if agent 2 gets $c$, i.e., (\ref{eq2}) continues to hold and
$f_{11}(R)=f_{11}(R_2^{a\leftrightarrow b},R_{-2})$.
Since $\min\{f_{11}(R),f_{cc}(R)\}>\frac{1}{6}$, we must have $f_{11}(R)=f_{11}(R_2^{a\leftrightarrow b},R_{-2})>\frac{1}{6}$. We consider $n=5$, $n=6$ and $n\geq 7$.

For $n=5$, suppose 3, 4 and 5 rank both $a$ and $b$ acceptable. But then 1 remains unassigned when 3, 4 and 5 choose $a$ and $b$ before 2 and 2 chooses $c$ before 1. In total, there are 18 orders of where this is obtained:
\begin{itemize}
\item four orders of $N\backslash \{5\}$ of the form $3421$ with $5$ inserted somewhere after $3$;
\item two orders of $N\backslash \{4\}$ of the form $3521$ with $4$ inserted somewhere after $3$ (while not double counting the orders $34521$ and $35421$ from above);
\item four orders of $N\backslash \{5\}$ of the form $4321$ with $5$ inserted somewhere after $4$;
\item two orders of $N\backslash \{3\}$ of the form $4521$ with $3$ inserted somewhere after $4$ (while not double counting the orders $45321$ and $43521$ from above);
\item four orders of $N\backslash \{4\}$ of the form $5321$ with $4$ inserted somewhere after $5$; and
\item two orders of $N\backslash \{3\}$ of the form $5421$ with $3$ inserted somewhere after $5$ (while not double counting the orders $53421$ and $54321$ from above).
\end{itemize}
Then $f_{11}(R)=\frac{18}{120}<\frac{20}{120}=\frac{1}{6}$.
%But then $c$ remains only unassigned when 1 and 2 get objects $a$ and $b$, and 1 and 2 must choose before 3, 4 and 5. As there are only twelve such orders, we have $f_{cc}(R)=\frac{12}{120}<\frac{1}{6}$, a contradiction. Now if two agents of 3, 4 and 5 rank only one object acceptable, then for at least one of the above orders 1 does not remain unassigned, and 1 is unassigned with probability smaller than or equal to $\frac{1}{6}$, a contradiction. Thus, suppose that 3 ranks only $a$ acceptable while 4 and 5 rank both $a$ and $b$ acceptable. But then $c$ remains unassigned in addition to the above twelve orders (where 1 and 2 choose before 3, 4 and 5) for $13245$, $13254$, $23145$ and $23154$. Hence, $f_{cc}(R)=\frac{16}{20}<\frac{1}{6}$, which is again a contradiction. If $3$ ranks only $b$ acceptable, then by $R_1=R_2:abc$, $c$ remains unassigned only if 1 and 2 choose before 3, 4 and 5, which finishes $n=5$.

Next, let $n\geq 6$. If agents $3$ to $n$ consider only $a$ acceptable while $R_1:abc$, $1$ will never be unassigned.
%If instead $R_1:ac\ldots$, then $2$ must rank $b$ above $c$ (as otherwise by XEFF, $c$ never remains unassigned). But then for $1$ to be unassigned, $2$ must be assigned $c$ and hence agents $3$ to $n$ are assigned $b$, contradicting the assumption that they all consider only $a$ acceptable.

So for the remainder, we can assume that among agents $3$ to $n$ some consider $b$ acceptable (possibly also $a$).
Suppose $1$ and $2$ both rank $c$ below $b$ (i.e., $R_1=R_2:abc$) while $k$ agents with $1\leq k\leq n-2$ among agents $3$ to $n$, say agents $3$ to $2+k$, consider $b$ acceptable and the remaining agents among agents $3$ to $n$, say agents $3+k$ to $n$, consider only $a$ acceptable. Then $c$ is unassigned iff $1$ and $2$ are assigned $a$ and $b$, i.e., if $1$ or $2$ chooses first while the other one of the two chooses before agents $3$ to $2+k$. Overall this has probability at most $\frac{2}{n}\cdot\frac{1}{k+1}\leq\frac{1}{n}\leq \frac{1}{6}$. %If instead (ii') among $(3+k)-6$ some consider also $b$ acceptable this only reduces the probability that $1$ and $2$ are assigned $a$ and $b$, hence reduces the probability that $c$ remains unassigned.
Finally, if $1$ and $2$ rank $c$ higher, this only further reduces the probability that $c$ remains unassigned.

\bigskip

\noindent
\underline{\textit{(ii):}} Let $N=\{1,2,3,4\}$
and $f$ denote RSD. From (i) we know that $f$ is at-most-$\frac{1}{6}$-agent-object-wasteful for $|O|=3$. This serves as the Induction Hypothesis for (ii).

Let $|O|=k+1$ with $k\geq 3$ and $\{a,b,c\}\subseteq O$. Suppose that $f$ is at-most-$\frac{1}{6}$-agent-object-wasteful whenever the set of objects contains at most $k$ elements. Suppose that $f$ is $q$-agent-object-wasteful with $q>\frac{1}{6}$ when $|O|=k+1$. Let $R\in \mathcal{R}^N$ be such that
(without loss of generality) $c\in A(R_1)$, $q=\min\{f_{11}(R),f_{cc}(R)\}>\frac{1}{6}$ and $c\in A(R_2)$. As there are 24 orders and $\frac{1}{6}=\frac{4}{24}$, 1 is unassigned for at least five orders and $c$ remains unassigned for at least five orders. By XEFF, $c\notin top (R)$.
By the induction hypothesis, $\cup_{i\in N}A(R_i)=O$ (as otherwise we may remove any object, which is unacceptable for all agents, from the problem) and $A(R_i)\neq \emptyset$ for all $i\in N$ (as otherwise again by Proposition 2 of Martini (2016) RSD is non-wasteful for three agents).
As agent 1 is unassigned with positive probability, we must have $|A(R_1)|\leq 3$. Let $top(R_1)=a$. By bounded invariance\footnote{This means that for $R_1'$ such that $R_1'|O=R_1|O$ and $A(R_1')=B(c,R_1)$ we have $f_{ic}(R)=f_{ic}(R_1',R_{-1})$ for all $i\in N$.} and SP of RSD, without loss of generality,
we may suppose that $c$ is the last object acceptable under $R_1$.

%Similarly, we may suppose that any $i\in N\backslash \{2,3,4\}$, who ranks $c$ acceptable, does not rank any object in $O\backslash A(R_1)$ %acceptable and below $c$: if $top(R_i)P_icP_idP_ii$ with $d\notin A(R_1)$, then let $R_i'$ be such that $A(R_i')=A(R_i)\backslash \{d\}$ and
%$R_i'|A(R_i')=R_i|A(R_i')$. By BI, $f_{cc}(R_i',R_{-i})=f_{cc}(R)$. Let $\succ$ be such that $f_1^{\succ}(R)=1$: if $1\succ i$ or %$f_i^{\succ}(R)R_ic$, then we continue to have $f_1^{\succ}(R_i',R_{-i})=1$; and if both $i\succ 1$ and $cP_if^{\succ}_{i}(R)$, then the agents %in $N\backslash \{1,i\}$ must have chosen $top(R_i)$ and $c$ before $i$ and we have $f_i^{\succ}(R_i',R_{-i})\in \{f_i^{\succ}(R),d\}$, and we %continue to have $f_1^{\succ}(R_i',R_{-i})=1$. Thus, $f_{11}(R)=f_{11}(R_i',R_{-i}$ and
%$q = \min \{f_{11}(R),f_{cc}(R)\}=\min \{f_{11}(R_i',R_{-i}),f_{cc}(R_i',R_{-i})\}$. Furthermore, similarly, for any $i\in N\backslash %\{2,3,4\}$, we may suppose that $|A(R_i)|\leq 3$: otherwise $i$ only can pick his fourth acceptable object when $i$ is last in the order and 1 is %assigned, and by truncating $i$'s preference by only keeping the first three objects of $R_i$ acceptable, 1 remains unassigned with the same %probability and $c$ remains unassigned with weakly higher probability.

If $|top(R)|=3$, say $top(R)=\{a,b,d\}$, then 1 is unassigned only if 2, 3 and 4 choose $\{a,b,d,c\}$, which is impossible. Thus, $|top(R)|\leq 2$.

Furthermore, suppose that both $cP_33$ and $cP_44$. Then all agents rank $c$ acceptable and object $c$ remains unassigned only if each agent gets an object preferred to $c$, i.e., there exists an allocation $x=(x_1,x_2,x_3,x_4)$ such that $x_iP_ic$ for all $i\in N$. But then for 1 to be unassigned, 2, 3 or 4 must get $c$. Suppose that (without loss of generality as the other cases for 3 and 4 are treated analogously) for the order $\succ$ we have $f_1^{\succ}(R)=1$ and $f_2^{\succ}(R)=c$. By XEFF, then 3 and 4 get $x_1$ and $x_2$. Without loss of generality, suppose that $f^{\succ}(R)=(1,c,x_1,x_2)$ (as the case $f^{\succ}(R)=(1,c,x_2,x_1)$ is treated analogously by exchanging the roles of 3 and 4). Then we must have
$x_1P_3x_3P_3c$, $x_2P_4x_4P_4c$ and $cP_2\{x_3,x_4\}$.
But then whenever for another order $\succ'$ 1 is unassigned and 3 gets $c$, then 2 and 4 must choose $x_1$ and $x_3$, and by $cP_2x_3$, we must have
$f^{\succ'}(R)=(1,x_1,c,x_3)$ and $x_3P_4x_2$, which is impossible as $f_4^{\succ}(R)=x_2$ and $x_3$ remains unassigned under $f^{\succ}(R)$.
Hence, 1 is assigned when agent 3 gets $c$.
Similarly, whenever for another order $\succ''$ 1 is unassigned and 4 gets $c$, then 2 and 3 must choose $x_1$ and $x_4$, and by $cP_2x_4$, we must have
$f^{\succ''}(R)=(1,x_1,x_4,c)$, which is impossible as $f_4^{\succ}(R)=x_2P_4c$ and $x_2$ remains unassigned under $f^{\succ''}(R)$.
Thus, 1 remains unassigned only when 2 gets $c$. Now this only happens when 3 and 4 get $x_1$ and $x_2$, i.e., for the allocations
$(1,c,x_1,x_2)$ and $(1,c,x_2,x_1)$.

If both of them are efficient, then 3 and 4 must agree on the ranking of $x_1$ and $x_2$.
If both $x_1P_3x_2$ and $x_1P_4x_2$, then $(1,c,x_1,x_2)$ can only be obtained for the order
$3421$ whereas $(1,c,x_2,x_1)$ can only be obtained for the order $4321$, i.e., 1 is unassigned for at most two orders, a contradiction.
If both $x_2P_3x_1$ and $x_2P_4x_1$, then $(1,c,x_1,x_2)$ can only be obtained for the orders
$4321$ and $4231$ whereas $(1,c,x_2,x_1)$ can only be obtained for the orders $3421$ and $3241$, i.e., 1 is unassigned for at most four orders, a contradiction.

If only one of them is efficient, then 3 and 4 must disagree on the ranking of $x_1$ and $x_2$.
If both $x_1P_3x_2$ and $x_2P_4x_1$, then $(1,c,x_1,x_2)$ is only efficient and can only be obtained for the orders
$3421$, $4321$ and $4231$, i.e., 1 is unassigned for at most three orders, a contradiction.
If both $x_2P_3x_1$ and $x_1P_4x_2$, then $(1,c,x_2,x_1)$ is only efficient and can only be obtained for the orders
$3421$, $4321$ and $3241$, i.e., 1 is unassigned for at most three orders, a contradiction.

Hence, we have $3P_3c$ or $4P_4c$. We again consider two cases.\medskip

\noindent\underline{\textit{Case 1: $top(R_1)\neq top(R_2)$.}}
If $top(R_1)\neq top(R_2)$, then $|top(R)|=2$. Let $top(R_2)=b$.

If both $3P_3c$ and $4P_4c$, then
for $1$ to be unassigned, $c$ must be assigned to $2$ (so $b$ must be assigned to $3$ or $4$) and $a$ to $3$ or $4$ -- hence $3$ or $4$ must choose $b$ before $2$, $2$ must choose $c$ before $1$ while the other agent among $3$ and $4$ must choose $a$ before $1$. Thus $1$ can be unassigned only for the orders $3421$, $4321$, $4231$ and $3241$, i.e., four out of 24 orders, a contradiction.

Thus, [$cP_33$ and $4P_4c$] or [$3P_3c$ and $cP_44$], say without loss of generality, $cP_33$ and $4P_4c$.
As $|top(R)|\leq 2$, we have $top(R_3),top(R_4)\in \{a,b\}$. Thus, if 1 is unassigned, then 2 or 3 gets $c$.
We consider four different subcases for 3 and 4's rankings over $a$ and $b$.

For the first subcase, let $aP_3b$ and $aP_4b$ which implies $top(R_3)=top(R_4)=a$.
Suppose that for some order $\succ$, 1 is unassigned and 2 gets $c$.
Then 3 and 4 must get $a$ and $b$. Hence we have the assignment $(1,c,a,b)$, obtained for the order $3421$ (only if $b$ ia acceptable for $4$), or $(1,c,b,a)$, obtained for the order $4321$ (only if $b$ is acceptable for $3$).

Suppose that for some order $\succ'$, 1 is unassigned and 3 gets $c$.
Then 2 and 4 must get $a$ and $b$; by XEFF this leaves $(1,b,c,a)$. That assignment is obtained for the orders $2431$ and $4231$. Moreover, it is obtained for the order $4321$ mentioned before (but only if $b$ is not acceptable for $3$). Last, it may be obtained for the order $4312$ -- but only if $R_1,R_2:ac...$ in which case $c$ is never unassigned. Hence, as long as $c$ may be unassigned, in total there at most four orders where 1 is unassigned, a contradiction.

For the second subcase, let $bP_3a$ and $bP_4a$ which implies $top(R_3)=top(R_4)=b$.
Suppose that for the order $\succ$ 1 is unassigned and 2 gets $c$.
Then 3 and 4 must get $a$ and $b$ and $(1,c,a,b)$ is obtained for the orders $4321$ (if $aP_3c$) and $4231$ (if $cP_2a$)
whereas $(1,c,b,a)$ is obtained for the order $3421$ (if $R_4:ba\ldots$) and $3241$ (if $R_2:bc\ldots$). Suppose that for the order $\succ'$ 1 is unassigned and 3 gets $c$.
Then 2 and 4 must get $a$ and $b$ and $(1,a,c,b)$ is obtained for the order $4321$ (if $cP_3a$) and $4231$ (if $aP_2c$)
whereas $(1,b,c,a)$ is obtained for the order $2431$ and $2341$ (if $cP_3a$). As either $aP_2c$ or $cP_2a$, and either $aP_3c$ or $cP_3a$, in total there are at most five orders (if $aP_3c$), respectively, at most six orders (if $cP_3a$) where 1 is unassigned.
If $aP_3c$, then we must have $R_4:ba\ldots$ and $R_2:bc\ldots$, and also 1 does not rank any object $d\in O\backslash \{a,b,c\}$ above $c$.
Now for $c$ to remain unassigned, 1, 2 and 3 must get an object preferred to $c$. Thus, by $R_2:bc\ldots$, 2 gets $b$ and by $A(R_1)\subseteq \{a,b,c\}$, 1 gets $a$ and 3 gets $d\in O\backslash \{a,b,c\}$. But then $dP_3c$ and 3 does not choose $c$ for the order $2431$ and 1 is unassigned for at most four orders, a contradiction.
If $cP_3a$, then $R_3:bc\ldots$ and 1 does not rank any object $d\in O\backslash \{a,b,c\}$ above $c$.
Now for $c$ to remain unassigned, 1, 2 and 3 must get an object preferred to $c$. Thus, by $R_3:bc\ldots$, 3 gets $b$ and by $A(R_1)\subseteq \{a,b,c\}$, 1 gets $a$ and 2 gets $d\in O\backslash \{a,b,c\}$. But then $dP_2c$ and 2 does not choose $c$ for the order $4321$, $4231$, $3421$ and $3241$, and 1 is unassigned for at most four orders, a contradiction.

For the third subcase, let $aP_3b$ and $bP_4a$ which implies $top(R_3)=a$ and $top(R_4)=b$.
Suppose that for the order $\succ$ 1 is unassigned and 2 gets $c$.
Then 3 and 4 must get $a$ and $b$ and $(1,c,a,b)$ is obtained for the orders $3421$, $4321$ and $4231$ (if $cP_2a$)
whereas $(1,c,b,a)$ is inefficient.
Suppose that for the order $\succ'$ 1 is unassigned and 3 gets $c$.
Then 2 and 4 must get $a$ and $b$ and $(1,a,c,b)$ is obtained for the order $4231$ (if $aP_2c$)
whereas $(1,b,c,a)$ is obtained for the order $2431$. As either $aP_2c$ or $cP_2a$, in total there are at most four orders,
%where 1 is unassigned and we must have $R_3:acb\ldots$ and 1 does not rank any object $d\in O\backslash \{a,b,c\}$ above $c$.
%Now for $c$ to remain unassigned, 1, 2 and 3 must get an object preferred to $c$. Thus, by $R_3:acb\ldots$, 3 gets $a$ and by $A(R_1)\subseteq \{a,b,c\}$, 1 gets $b$ and 2 gets $d\in O\backslash \{a,b,c\}$. But then $dP_2c$ and 2 does not choose $c$ for the orders $4321$ and $3421$ and 1 is unassigned for at most three orders,
for which 1 remains unassigned, a contradiction.

For the fourth subcase, let $bP_3a$ and $aP_4b$ which implies $top(R_3)=b$ and $top(R_4)=a$.
Suppose that for the order $\succ$ 1 is unassigned and 2 gets $c$.
Then 3 and 4 must get $a$ and $b$ and $(1,c,b,a)$ is obtained for the orders $3421$, $4321$ and $3241$ (if $cP_2a$)
whereas $(1,c,a,b)$ is inefficient.
Suppose that for the order $\succ'$ 1 is unassigned and 3 gets $c$.
Then 2 and 4 must get $a$ and $b$ and $(1,a,c,b)$ is obtained for the order $4231$ (if $aP_2c$)
whereas $(1,b,c,a)$ is obtained for the orders $2431$ and $2341$ (if $R_3:bc\ldots$). As either $aP_2c$ or $cP_2a$, in total there are at most five orders where 1 is unassigned and we must have $R_3:bc\ldots$ and 1 does not rank any object $d\in O\backslash \{a,b,c\}$ above $c$.
Now for $c$ to remain unassigned, 1, 2 and 3 must get an object preferred to $c$. Thus, by $R_3:bc\ldots$, 3 gets $b$ and by $A(R_1)\subseteq \{a,b,c\}$, 1 gets $a$ and 2 gets $d\in O\backslash \{a,b,c\}$. But then $dP_2c$ and 2 does not choose $c$ for the orders $4321$ and $3421$ and 1 is unassigned for at most three orders, a contradiction. This finishes Case 1.\medskip

\noindent\underline{\textit{Case 2: $top(R_1)= top(R_2)$.}}
For the second case, $top(R_1)=top(R_2)=a$. As $cP_33$, we also must have $top(R_3)=a$ as otherwise we are in Case 1 by exchanging the roles of agents 2 and 3. Using the fact that $|A(R_1)|\leq 3$, we consider five subcases.

For the first subcase, if $R_1:ac\ldots$ and $R_2:ac\ldots$, then by XEFF object $c$ cannot remain unassigned, a contradiction.

For the second subcase, if $R_1:ac\ldots$ and $R_2:abc$, then 1 is unassigned only if 2 or 3 gets $c$. If 2 gets $c$, then 3 and 4 must choose $a$ and $b$ before 2 which can only happen for the orders $4321$ and $3421$. If 3 gets $c$, then $(1,a,c,b)$ is obtained for the orders $2431$, $2341$ (if $cP_3a$) and $4231$ (if $bP_4a$) whereas $(1,b,c,a)$ is obtained for the order $4231$ (if $aP_4b$).
As either $aP_4b$ or $bP_4a$, 1 is unassigned for at most five orders and we must have $cP_3a$ which is impossible as $top(R_3)=a$. Hence, 1 is unassigned for four orders, a contradiction.

For the third subcase, if $R_1:abc$ and $R_2:ac\ldots$, consider the change in the preferences of agent 1, $R_1':bac$. Now, the probability that $1$ remains unassigned is the same under $R$ and under $R'=(R_1',R_{-1})$, while the probability that $c$ remains unassigned is weakly higher under $R'$. Hence, if $\min\{f_{11}(R),f_{cc}(R)\}>\frac{1}{6}$ then $\min\{f_{11}(R'),f_{cc}(R')\}>\frac{1}{6}$. But then consider $R''$ where we relabel $a$ and $b$ compared to $R'$. Clearly $\min\{f_{11}(R''),f_{cc}(R'')\}=\min\{f_{11}(R'),f_{cc}(R')\}>\frac{1}{6}$. But then $R''$ contradicts Case 1.

%For the third subcase, if $R_1:abc$ and $R_2:ac\ldots$, then we have $f_{22}(R)\geq f_{11}(R)$ and by exchanging the roles of 1 and 2 we are in the second subcase, a contradiction.

For the fourth subcase, if $R_1=R_2:abc$, then $c$ is unassigned only if 1 and 2 get $a$ and $b$ and 3 gets $d\in O\backslash \{a,b,c\}$. Hence, by $top(R_3)=a$, we have $aP_3dP_3c$. But then whenever 3 gets $c$ 1 is assigned as otherwise $\{a,b,d\}$ must chosen before 3, which is impossible. Thus, 1 is unassigned only if 2 gets $c$ and 3 and 4 choose $a$ and $b$ before, which can only happen for the orders $3421$ and $4321$, a contradiction.

For the fifth subcase, if $R_1:abc$ and $dP_2c$ for some $d\in O\backslash \{a,b,c\}$, then as above 1 is assigned when 2 gets $c$ and 1 can only remain unassigned when 3 gets $c$ and 2 and 4 get $a$ and $b$. As $top(R_3)=a$ and unless we are in one of the previous subcases above (by exchanging the roles of agents 2 and 3), we cannot have $R_3:ac\ldots$ and $R_3:abc\ldots$. Hence, for some $e\in O\backslash \{a,b,c\}$ we have $eP_3c$. But then again 1 cannot remain unassigned when 3 gets $c$, which is the final contradiction.\hfill$\square$

\section{Proof of Theorem \ref{theo2}}

Let $|N|\geq 4$ and $|O|\geq 3$.%Similarly, to Theorem 1 we provide a common proof and write in italics the arguments specific for (ii).

As XEFF implies XWNW, it suffices to show (i) and (ii)
for $N=\{1,2,3,4\}$ and $O=\{a,b,c\}$ as all agents other than 1, 2, 3 and 4 may rank no object acceptable and all objects other than $a$, $b$ and $c$ are ranked unacceptable in the same order and below $a$, $b$ and $c$.

We use a common proof to show (i) and (ii) where for (i) $f$ satisfies ETAE, SP and XWNW and for (ii) $f$ satisfies ETE, SP and XEFF (where we specify the arguments for (i) and (ii) whenever necessary). Note that $f$ satisfies ETE, SP and XWNW.

Towards a contradiction, suppose that $f$ is at-most-$q$-object-wasteful with $q<\frac{1}{4}$.
Let
\[\begin{array}{cccc}
C'' & a & b & c \\ \hline
abc & \frac{1}{2} & \alpha_1 &  0\\
cba & 0 & \alpha_2 &  \frac{1}{2}\\
a & \frac{1}{2} & 0 &  0\\
c & 0 & 0 &  \frac{1}{2}
\end{array}.
\]
In determining $f(C'')$, we have $f_{3a}(C'')\leq \frac{1}{2}$ as otherwise agent 3 may report $C_1''$ and we obtain a contradiction to ETE, SP and feasibility. Similarly, we obtain $f_{1a}(C'')\leq \frac{1}{2}$. Then $f_{2a}(C'')=0$ where for (i) this follows from Lemma \ref{lem.none.at.bottom} and for (ii) from XEFF. Hence, by XWNW, we obtain $f_{1a}(C'')=\frac{1}{2}=f_{3a}(C'')$.
Analogously we obtain $f_{2c}(C'')=\frac{1}{2}=f_{4c}(C'')$.
If both $\alpha_1\leq \frac{3}{8}$ and $\alpha_2\leq \frac{3}{8}$, then $f$ is $q$-object-wasteful with $q\geq \frac{1}{4}$, a contradiction.
Thus, $\alpha_1>\frac{3}{8}$ or $\alpha_2>\frac{3}{8}$. Without loss of generality, let $\frac{3}{8}<\alpha_2\equiv \alpha$.
Let
\[\begin{array}{cccc}
C''' & a & b & c \\ \hline
abc & \frac{1}{3} & \frac{1}{2}& \alpha-\frac{1}{3} \\
abc & \frac{1}{3} &  \frac{1}{2} & \alpha-\frac{1}{3}\\
a & \frac{1}{3} & 0 &  0\\
c & 0 & 0 &  \frac{5}{3}-2\alpha
\end{array}.
\]
In determining $f(C''')$, we obtain $f_{1a}(C''')=f_{2a}(C''')=f_{3a}(C''')=\frac{1}{3}$ for (i) by Lemma \ref{lem.equal.split.at.top} (b) and for (ii) by SP and ETE as 3 may report $C_1'''$.
We have $f_{1b}(C''')=\frac{1}{2}=f_{2b}(C''')$ where for (i) by ETAE and XWNW, $b$ must be fully assigned and $f_{1b}(C''')=\frac{1}{2}=f_{2b}(C''')$ (as otherwise agents 1 and 2 may drop $c$) and for (ii) this follows from XEFF and ETE.
As $C''$ and $C'''$ only differ in 2's preference and by ETE, we now obtain
$f_{2c}(C''')= \alpha- \frac{1}{3}= f_{1c}(C''')$. By XWNW, $c$ must be fully assigned and we obtain
$f_{4c}(C''')=\frac{5}{3}-2\alpha$. Note that by $\alpha> \frac{3}{8}$, we have $\alpha-\frac{1}{3}> \frac{1}{24}$ and
$f_{4c}(C''')<\frac{22}{24}$.

Let
\[\begin{array}{cccc}
R^0 & a & b & c \\ \hline
abc & \frac{1}{3} & \frac{1}{2}& \alpha-\frac{1}{3} \\
abc & \frac{1}{3} &  \frac{1}{2} & \alpha-\frac{1}{3}\\
a & \frac{1}{3} & 0 &  0\\
cab & 0 & 0 &  \frac{5}{3}-2\alpha
\end{array}.
\]
We obtain $f_{1a}(R^0)=f_{2a}(R^0)=f_{3a}(R^0)=\frac{1}{3}$ for (i) from Lemma \ref{lem.equal.split.at.top} (b) and for (ii) by XEFF $f_{4a}(R^0)=0$ and from ETE and SP (as 3 may report $R^0_1$).
Furthermore, $f_{4b}(R^0)=0$ for (i) from Lemma \ref{lem.none.at.bottom} and for (ii) from XEFF.
Thus, by ETE and XWNW,
$f_{1b}(R^0)=f_{2b}(R^0)=\frac{1}{2}$.% (as otherwise agents 1 and 2 may drop $c$ and using SP, ETAE and XWNW, agent 3 is assigned with positive probability object $b$, and then agent 3 may report only object $c$ acceptable, which 3 then receives with probability smaller than 1, a contradiction to XWNW as 3 is the only agent for whom $c$ is acceptable).
Thus, $f_{4a}(R^0)=f_{4b}(R^0)=0$. As $R^0$ and $C'''$ only differ in 4's preference, by SP we have $f_4(R^0)=f_4(C''')$.
In particular, $f_{4a}(R^0)+f_{4b}(R^0)+f_{4c}(R^0)=f_{4c}(C''')<\frac{22}{24}$.

Let
\[\begin{array}{cccc}
E & a & b & c \\ \hline
abc & \frac{1}{4} & \frac{1}{3}& \frac{1}{3} \\
abc & \frac{1}{4} &  \frac{1}{3} & \frac{1}{3}\\
a & \frac{1}{4} & 0 &  0\\
abc & \frac{1}{4} & \frac{1}{3} &  \frac{1}{3}
\end{array}.
\]
We obtain $f_{3a}(E)=\frac{1}{4}$ from SP and ETE (as for $E'=(abc,E_{-3})$ object $a$ is shared equally) and the other entries of $f(E)$ using ETE and XWNW.

Note that $E$ and $R^0$ only differ in 4's preference, and both $E_4$ and $R_4^0$ rank all objects acceptable.
But this is a contradiction to SP as
\[ f_{4a}(E)+f_{4b}(E)+f_{4c}(E)=\frac{1}{4}+\frac{1}{3}+\frac{1}{3}=\frac{22}{24}>f_{4c}(R^0)=f_{4a}(R^0)+f_{4b}(R^0)+f_{4c}(R^0).
\]
Hence, we must have $q\geq \frac{1}{4}$ for at-most-$q$-object-wastefulness of $f$.%\medskip
\hfill$\square$

\section{Proof of Lemma \ref{lem.RSD.tight}}

Let $N=\{1,2,3,4,\ldots,n\}$, $O=\{a,b,c\}$, and $f$ denote the RSD-mechanism. As $f$ satisfies \SETE, \SP\ and \XEFF, by Theorem \ref{theo2} it suffices to show that $f$ is not $q$-object-wasteful for $q>\frac{1}{4}$.

Thus, suppose that, without loss of generality, $f_{cc}(R)\geq q>\frac{1}{4}$ and $\sum_{i:cP_ii}f_{ii}(R)\geq q >\frac{1}{4}$ for profile $R$. But then two agents rank $c$ acceptable (as when three or four agents rank $c$ acceptable, then $c$ is fully assigned, and when only one agent ranks $c$ acceptable, then this agent is fully assigned) and all other agents rank $c$ unacceptable, say
$cP_11$, $cP_22$, and  $kP_3c$ for all $k>2$. Moreover $c$ is not ranked first by $1$ or $2$ as it would then be assigned with probability $1$. Denote the set of agents other than $1$ and $2$ who rank $a$ first as $S_a$ and define $S_b$ analogously.

We first consider $n=4$. As $4!=24$ and $f_{cc}(R)> \frac{1}{4}=\frac{6}{24}$, object $c$ is unassigned for at least seven orders.
Now when $c$ is unassigned, then 1 and 2 must be assigned $a$ and $b$, i.e., $c$ is unassigned only for the assignments
\[
\mu = \left(\begin{array}{cccc}
1 & 2 & 3 & 4\\
a & b & 3 & 4
\end{array}\right)
\mbox{ and }
\nu=
\left(\begin{array}{cccc}
1 & 2 & 3 & 4\\
b & a & 3 & 4
\end{array}\right).
\]
If $A(R_3)=\emptyset$ or $A(R_4)=\emptyset$, then RSD is ex-ante non-wasteful by Martini (2016, Proposition 2).
Thus, $top(R_3),top(R_4)\in \{a,b\}$. Without loss of generality, let $top(R_1)=a$. We consider the following three cases.\medskip

\noindent
\underline{\textit{Case 1: $[a\in A(R_3)\, \&\, b\in A(R_4)]$.}}

But then $c$ can be unassigned only if agent 1 chooses first, chooses $top(R_1)=a$, and $2$ chooses $b$ before $4$ -- which arises with probability $\frac{1}{4}\cdot \frac{1}{2}=\frac{1}{8}$ -- or if agent 2 chooses first, chooses $top(R_2)\in \{a,b\}$, and agent 1 chooses the other object of $\{a,b\}$ before agent 3 or agent 4 (whoever among $3$ and $4$ considers it acceptable) -- which arises with probability $\frac{1}{4}\cdot\frac{1}{2}=\frac{1}{8}$. Thus, $c$ is unassigned with probability at most $\frac{1}{8}+\frac{1}{8}=\frac{1}{4}$,  a contradiction.\medskip

\noindent
\underline{\textit{Case 2: $[b\in A(R_3) \, \&\, a\in A(R_4)]$.}}

Analogous to Case 1.\medskip

\noindent
\underline{\textit{Case 3: $A(R_3)=A(R_4)=\{a\}$.}}

If $b\in A(R_1)$, then agent 1 cannot be unassigned and $q$-object-wastefulness reduces to $q$-agent-object-wastefulness, a contradiction as $f$ is at-most-$\frac{1}{6}$-agent-object-wasteful. Thus, $b\notin A(R_1)$ and $R_1:ac$. But then 1 is only unassigned when 3 or 4 receive $a$ and 2 receives $c$, i.e., we must have $cP_2b$. But then $top(R_2)=a$ and both $R_1:ac$ and $R_2:ac\cdots$, and \XEFF\ of $f$ implies that $c$ is not wasted, a contradiction.\medskip

\noindent
\underline{\textit{Case 4: $A(R_3)=A(R_4)=\{b\}$.}}

But then 1 is only unassigned if 2 chooses $a$ before 1, i.e., $a\in A(R_2)$. Hence, 1 cannot be unassigned (as 2 cannot choose both $a$ and $c$ before 1)
and $q$-object-wastefulness reduces to $q$-agent-object-wastefulness, a contradiction as $f$ is at-most-$\frac{1}{6}$-agent-object-wasteful.\smallskip

For $n\geq 5$, as we have dealt with $n\leq 4$ above, w.l.o.g. we may suppose that each agent ranks at least one object acceptable. We consider two cases.\smallskip

\noindent
\underline{\textit{Case I: $top(R_1)\neq top(R_2)$.}}

W.l.o.g., $top(R_1)=a$ and $top(R_2)=b$. Consider the probability that $c$ remains unassigned. By XEFF we then have that $1$ receives $a$ and $2$ receives $b$. Thus, 1 must choose object $a$ before the agents in $S_a$ and 2 must choose object $b$ before the agents in $S_b$. Now for an arbitrary order 1 chooses $a$ before $S_a$ with probability $\frac{1}{|S_a|+1}$ and 2 chooses $b$ before $S_b$ with probability
$\frac{1}{|S_b|+1}$. Hence, the probability that $c$ is unassigned is bounded from above by
\[ \frac{1}{|S_a|+1}\cdot \frac{1}{|S_b|+1}.\]
As $|S_b|+1=n-|S_a|-1$ we obtain
\begin{equation}
\frac{1}{|S_a|+1}\cdot \frac{1}{n-|S_a|-1}\leq \frac{1}{4} \Leftrightarrow 4\leq (|S_a|+1)\cdot (n-|S_a|-1)
\end{equation}
where $|S_a|\in \{0,1,\ldots,n-2\}$. Note that the righthand side is identical for $|S_a|=0$ and $|S_a|=n-2$, and it is increasing in $|S_a|$ for $(|S_a|+1)< (n-|S_a|-1)$ and decreasing in $|S_a|$ for $(|S_a|+1)> (n-|S_a|-1)$. Thus,  it is always true when $n\geq 5$.

\noindent
\underline{\textit{Case II: $top(R_1)= top(R_2)$.}}

W.l.o.g., $top(R_1)=top(R_2)=a$. If agents $3$ to $n$ consider only $a$ acceptable while $R_1:abc$, $1$ will never be unassigned.
If instead $R_1:ac\ldots$, then $2$ must rank $b$ above $c$ (as otherwise by XEFF, $c$ never remains unassigned). But then for $1$ to be unassigned, $2$ must be assigned $c$ and hence agents $3$ to $n$ are assigned $b$, contradicting the assumption that they all consider only $a$ acceptable.

So for the remainder, we can assume that among agents $3$ to $n$ some consider $b$ acceptable (possibly also $a$).
Suppose $1$ and $2$ both rank $c$ below $b$ (i.e., $R_1=R_2:abc$) while $k$ agents with $1\leq k\leq n-2$ among agents $3$ to $n$, say agents $3$ to $2+k$, consider $b$ acceptable and the remaining agents among agents $3$ to $n$, say agents $3+k$ to $n$, consider only $a$ acceptable. Then $c$ is unassigned iff $1$ and $2$ are assigned $a$ and $b$, i.e., if $1$ or $2$ chooses first while the other one of the two chooses before agents $3$ to $2+k$. Overall this has probability at most $\frac{2}{n}\cdot\frac{1}{k+1}\leq\frac{1}{n}\leq \frac{1}{5}$. %If instead (ii') among $(3+k)-6$ some consider also $b$ acceptable this only reduces the probability that $1$ and $2$ are assigned $a$ and $b$, hence reduces the probability that $c$ remains unassigned.
Finally, if $1$ and $2$ rank $c$ higher, this only further reduces the probability that $c$ remains unassigned.\hfill$\square$

\section{Proof of Lemma \ref{lemRDA}}

Let $N=\{1,2,3,4\}$ and $O=\{a,b,c\}$. Let $g$ denote RDA.

For the proof of (a), towards a contradiction, suppose that $g$ is $q$-agent-object-wasteful where $q\geq \frac{1}{6}$.
Let $R$ be such that (without loss of generality), $top(R_1)=a$, $c\in A(R_1)$ and $\min\{g_{11}(R),g_{cc}(R)\}\geq \frac{1}{6}$. 
Then by XNW of $g$, $c\notin top(R)$. Again by XNW and 1 being unassigned with positive probability, at least one agent other than 1 ranks $c$ acceptable, say, without loss of generality, $c\in A(R_2)$. Thus, $top(R_2)\neq c$.
If $cP_33$ or $cP_44$, then by XNW and $|O|=3$ object $c$ is always assigned under RDA, a contradiction. Hence, both $3P_3c$ and $4P_4 c$. If agent 3 or agent 4 ranks no object acceptable, then we have a contradiction to ex-post non-wastefulness of RDA for three agents and three objects (as for $top(R_1)\neq top(R_2)$ or $|A(R_1)|=3$ agent 1 cannot be unassigned; for $top(R_1)=top(R_2)=a$ and $R_1=R_2:ac$ object $c$ cannot be unassigned by XNW, and otherwise we have $R_1:ac$ and $R_2:abc$ and by stability agent 1 cannot be unassigned as then 2 has to receive $c$). Thus, $top(R_3),top(R_4)\in \{a,b\}$.
We consider the following two cases where $top(R_1)=top(R_2)$ or $top(R_1)\neq top(R_2)$.\medskip

\noindent
\underline{\textit{Case 1: $top(R_1)= top(R_2)=a$.}}

\noindent
\textit{Case 1.1: $R_1=R_2:ac\cdots$.}

Then by XNW of $g$, $c$ is always assigned, a contradiction.\smallskip

\medskip

For the remaining subcases of Case 1, we will show that the contradiction arises as $1$ is unassigned with probability smaller than $\frac{1}{6}$. Note that then either 1 or 2 ranks $b$ second after $a$.

\noindent
\textit{Case 1.2: $R_1:abc$ and $aP_2cP_22$.}

Note that $1$ is unassigned only under the following two assignments:
\[
\mu = \left(\begin{array}{cccc}
1 & 2 & 3 & 4\\
1 & c & a & b
\end{array}\right)
\mbox{ and }
\nu=
\left(\begin{array}{cccc}
1 & 2 & 3 & 4\\
1 & c & b & a
\end{array}\right).
\]
But $\mu$ is chosen with probability at most $\frac{1}{12}$ as $2>_c 1$ arises with probability $\frac{1}{2}$,
$3>_a\{1,2\}$ arises with probability $\frac{1}{3}$ and $4>_b 1$ arises with probability $\frac{1}{2}$. Similarly, $\nu$ is chosen with at most probability $\frac{1}{12}$ as $2>_c 1$ arises with probability $\frac{1}{2}$,
$4>_a\{1,2\}$ arises with probability $\frac{1}{3}$ and $3>_b 1$ arises with probability $\frac{1}{2}$. Since we counted double the event $\{3,4\}>_a\{1,2\}$, we have $g_{11}(R)<\frac{1}{6}$, a contradiction.\smallskip

\noindent
\textit{Case 1.3: $aP_1cP_11$ and $R_2:abc$.}

Again, in order for $1$ to be unassigned $2$ needs to be assigned $c$, so $a$ and $b$ need to be assigned to $3$ and $4$ -- which yields $\mu$ or $\nu$. But $\mu$ is chosen with probability at most $\frac{1}{12}$ as $2>_c 1$ arises with probability $\frac{1}{2}$,
$3>_a\{1,2\}$ arises with probability $\frac{1}{3}$ and $4>_b 1$ arises with probability $\frac{1}{2}$. Similarly, $\nu$ is chosen with at most probability $\frac{1}{12}$. Since we counted double the event $\{3,4\}>_a\{1,2\}$, we have $g_{11}(R)<\frac{1}{6}$, a contradiction.\medskip

\noindent
\underline{\textit{Case 2: $top(R_1)=a\neq b= top(R_2)$.}}

We consider three subcases.\smallskip

\noindent
\textit{Case 2.1: $R_1:ac\cdots.$ and $R_2:bc\cdots$.}

First, suppose that both $\mu$ and $\nu$ belong to the support of $g(R)$. Then both $3$ and $4$ need to consider both $a$ and $b$ acceptable. In this case consider the probability that $c$ remains unassigned, which happens only if $1$ is assigned $a$ and $2$ is assigned $b$. But for this to happen, we need $1>_a\{3,4\}$ and $2>_b\{3,4\}$, which arises with probability $\frac{1}{3}\cdot \frac{1}{3}=\frac{1}{9}<\frac{1}{6}$, a contradiction.\smallskip

Second, if $\mu$ belongs to the support of $g(R)$ but $\nu$ does not belong to the support of $g(R)$, then 1 is unassigned with at most probability $\frac{1}{8}$ as $2>_c1$ arises with probability $\frac{1}{2}$, $3>_a1$ arises with probability $\frac{1}{2}$ and $4>_b 2$ arises with probability $\frac{1}{2}$. The case where $\nu$ belongs to the support of $g(R)$ but $\mu$ does not belong to the support of $g(R)$ is analogous.

\noindent
\textit{Case 2.2: $R_1:abc$ and $bP_2cP_22$.}

Again 1 is unassigned only at $\mu$ and $\nu$. But $\mu$ is chosen at most with probability  $\frac{1}{12}$ as $2>_c 1$ arises with probability $\frac{1}{2}$,
$3>_a 1$ arises with probability $\frac{1}{2}$ and $4>_b\{1,2\}$ arises with probability $\frac{1}{3}$, and similarly, $\nu$ has at most probability $\frac{1}{12}$. As we counted double the event $\{3,4\}>_b \{1,2\}$, we have $g_{11}(R)<\frac{1}{6}$, a contradiction.\smallskip

\noindent
\textit{Case 2.3: $aP_1cP_11$ and $R_2:bac$.}

Again 1 is unassigned (only) at $\mu$ and $\nu$.
But $\mu$ is chosen  with probability at most $\frac{1}{12}$ as (by stability) $2>_c 1$ arises with probability $\frac{1}{2}$,
$3>_a\{1,2\}$ arises with probability $\frac{1}{3}$ and $4>_b 1$ arises with probability $\frac{1}{2}$, and similarly, $\nu$ has at most probability $\frac{1}{12}$. As we counted double the event $\{3,4\}>_a \{1,2\}$, we have $g_{11}(R)<\frac{1}{6}$, a contradiction.

For the proof of (b), consider the following profile:
\[
\begin{array}{cccc}
\hat{R} & a & b & c \\ \hline
ac &  \frac{1}{2} & 0 & \frac{3}{8} \\
bc & 0 & \frac{1}{2} & \frac{3}{8}\\
a & \frac{1}{2} & 0 & 0 \\
b & 0 & \frac{1}{2} &  0
\end{array}
\]
For profile $\hat{R}$, object $c$ is unassigned when agent 1 is assigned object $a$ (which arises for $1>_a3$ with probability $\frac{1}{2}$) and agent 2 is assigned object $b$ (which arises for $2>_b4$ with probability $\frac{1}{2}$). Thus, $RDA_{cc}(\hat{R})=\frac{1}{4}$ and $RDA_{1c}(\hat{R})=RDA_{2c}(\hat{R})= \frac{3}{8}$.\footnote{This follows from the fact that RDA satisfies anonymity and neutrality, i.e., the names of the agents and the objects do not matter.}
Hence, $RDA(\hat{R})$ is $\frac{1}{4}$-object-wasteful, and $RDA$ is $q$-object-wasteful with $q\geq \frac{1}{4}$.\hfill$\square$

\section{Erdil reconsidered}\label{Erdil reconsidered}

Let us reconsider Erdil's mechanism and implement the $-\epsilon = - \frac{3}{24}$-worsening of his mechanism (instead of the $\frac{2}{24}$-improvement of his mechanism).

Let $N=\{1,2,3,4\}$ and $O=\{a,b,c\}$ (which can be generalized to arbitrary $N$ and $O$ as in Erdil (2014)).

Let $R_{-1}$ be such that
\[
\begin{array}{c|c|c}
R_2 & R_3 & R_4\\
\hline

c & c & c\\
a& b &
\end{array}
\]
The following three sets partition $\mathcal{R}^1$:
\begin{eqnarray*}
\mathcal{R}^1_a & =\{ R_1\in \mathcal{R}^1: aP_1bP_11\},\\
\mathcal{R}^1_b & =\{ R_1\in \mathcal{R}^1: bP_1aP_11\},\\
\hat{\mathcal{R}}^1 & =\{ R_1\in \mathcal{R}^1: 1P_1 a \mbox{ or } 1P_1b\}.
\end{eqnarray*}
Before formally defining the mechanism,
let $R_1:ab$ and $R=(R_1,R_{-1})$. Note that there are $4!=24$ orders.

Now 1 is assigned $b$ for
\[
\mu = \left(\begin{array}{cccc}
1 & 2 & 3 & 4\\
b & a & c & 4
\end{array}\right)
\mbox{ and }
\nu=
\left(\begin{array}{cccc}
1 & 2 & 3 & 4\\
b & a & 3 & c
\end{array}\right).
\]
Note that $\mu$ is obtained for the orders 3-2-1-4, 3-2-4-1, 3-4-2-1 while $\nu$ is obtained for the order 4-2-1-3, and for 1 to obtain $b$ it suffices that $bP_11$, i.e., the preference between $a$ and $b$ does not play any role. Thus, $RSD_{1b}(R)\geq \frac{4}{24}$.

Next 1 is unassigned for
\[
\eta = \left(\begin{array}{cccc}
1 & 2 & 3 & 4\\
1 & a & b & c
\end{array}\right)
\]
and $\eta$ is obtained for the orders 4-2-3-1 and 4-3-2-1, and 1's preference does not play any role, i.e., $RSD_{11}(R)\geq \frac{2}{24}$.
Now 3 is unassigned for
\[
\eta' = \left(\begin{array}{cccc}
1 & 2 & 3 & 4\\
b & a & 3 & c
\end{array}\right)
\]
and $\eta'$ is obtained for the orders 4-2-1-3 and for this $bP_11$ suffices, i.e., $RSD_{33}(R)\geq \frac{1}{24}$.

Finally, under profile $R$ object $b$ is unassigned for
the orders 3-1-2-4, 1-3-2-4 and 3-4-1-2 (for which we obtain $(a,2,c,4)$) and for this $aP_1bP_1c$ suffices, i.e., $RSD_{bb}(R)\geq \frac{3}{24}$.

Below we then subtract for profile $R$ for agent 1 the share of $\frac{3}{24}$ from object $b$ (and add this to 1's unassigned part) while keeping the assignments of other objects for 1 and the random assignments of other agents unchanged. Then in the modified Erdil-mechanism $g$ for profile $R$, object $b$ is unassigned with probability greater than or equal to $\frac{5}{24}$, agent 1 is unassigned with probability greater than or equal to $\frac{5}{24}$ (and agent 3 is unassigned with probability greater than or equal to $\frac{1}{24}$), which means that $g(R)$ is $\frac{5}{24}$-agent-object-wasteful.
%(I also thought that it is $\frac{7}{24}$-object-wasteful by reducing further by $\frac{1}{24}$ the share of $b$ for agent 1 but I could not finish the argument).

Next we define $g$. Let $Q\in \mathcal{R}^N$.

First, $g_i(Q)=RSD_i(Q)$ for all $Q$ and all $i\neq 1$.

Second, $g(Q)=RSD(Q)$ if $Q_{-1}\neq R_{-1}$ or $Q_1\in \hat{\mathcal{R}}^1$.

Third, let $Q_{-1}=R_{-1}$ and $Q_1\in \mathcal{R}^1_a\cup \mathcal{R}^1_b$.
Note that 1's total assignment equals $\frac{22}{24}$ at $Q$ under RSD, i.e., $RSD_{11}(Q)=\frac{2}{24}$. Then we reduce agent 1's assignment as follows:
\begin{eqnarray*}
g_1(Q) & =RSD_1(Q) - \frac{3}{24}b \mbox{ if } Q_1\in \mathcal{R}^1_a\\
g_1(Q) & =RSD_1(Q) - \frac{3}{24}a \mbox{ if }Q_1\in \mathcal{R}^1_b.
\end{eqnarray*}
Note that in the above then being unassigned is increased by $\frac{3}{24}$ for agent 1.

Next note that for $Q_1\in \mathcal{R}^1_a$, we have shown above for $bQ_11$ that $RSD_{1b}(Q)\geq \frac{4}{24}$, i.e., the above reduction is feasible (and similarly for $Q_1\in \mathcal{R}^1_b$).

Again for $Q_1\in \mathcal{R}^1_a$ the reduction is possible as we do the following
\[
\left(\begin{array}{cccc}
1 & 2 & 3 & 4\\
b & a & c & 4
\end{array}\right)
\mbox{ + }
\left(\begin{array}{cccc}
1 & 2 & 3 & 4\\
a & 2 & b & c
\end{array}\right)
\longrightarrow
\left(\begin{array}{cccc}
1 & 2 & 3 & 4\\
1 & a & b & c
\end{array}\right)
\mbox{ + }
\left(\begin{array}{cccc}
1 & 2 & 3 & 4\\
a & 2 & c & 4
\end{array}\right).
\]
Note that both $\left(\begin{array}{cccc}
1 & 2 & 3 & 4\\
b & a & c & 4
\end{array}\right)$ and $\left(\begin{array}{cccc}
1 & 2 & 3 & 4\\
a & 2 & b & c
\end{array}\right)$ are obtained for three orders (for the first one as shown above, and for the second one for the orders 4-1-3-2, 4-3-1-2 and 4-1-2-3). Note that all these assignments are ex-post efficient under $Q$. Hence, these probability transfers are possible in $RSD(R)$ to obtain $g(R)$ while maintaining XEFF of $g$.

Now SP follows as in Erdil (2014) (and if $Q_1\in \hat{\mathcal{R}}^1$, then the lowest ranked object becomes agent 1's outside option):\medskip

If $aQ_1bQ_11Q_1c$, $aQ_1bQ_1cQ_11$, $aQ_1cQ_1bQ_11$, then $g_1(Q)=(18a+1b)/24$.

If $bQ_1aQ_11Q_1c$, $bQ_1aQ_1cQ_11$, $bQ_1cQ_1aQ_11$, then $g_1(Q)=(18b+1a)/24$.

If $cQ_1aQ_1bQ_11$, then $g_1(Q)=(6c+12a+1b)/24$.

If $cQ_1bQ_1aQ_11$, then $g_1(Q)=(6c+12b+1a)/24$.\medskip

The interesting fact is that $g$ satisfies ETAE (and no randomization is necessary).
% (and we do not care of anonymity or neutrality, remember that $q$-object-wastefulness may not carry over for randomizations of mechanisms).

Now let $Q$ be a profile. If $g(Q)=RSD(Q)$, then ETAE is obvious. Thus, let $g(Q)\neq RSD(Q)$. But then we must have $Q_{-1}=R_{-1}$ and $Q_1\notin \hat{\mathcal{R}}^1$. But then given $R_{-1}$, for ETAE to apply we must have $Q_1:cab$ or $Q_1:cba$.
If $Q_1:cab$, then by definition we have $g_{1c}(Q)=RSD_{1c}(Q)$ and $g_{1a}(Q)=RSD_{1a}(Q)$ and ETAE is satisfied, and similarly for $Q_1:cba$. This yields the desired conclusion.
\end{appendix}

\section*{References}
\begin{description}
\item Abdulkadiro\u{g}lu, Atila, Parag A. Pathak and Alvin E. Roth (2009): Strategy-Proofness versus Efficiency in Matching with Indifferences: Redesigning the
NYC High School Match. American Economic Review 99:1954--1978.
\item Abdulkadiro\u{g}lu, Atila and Tayfun S\"onmez (2003): Ordinal efficiency and dominated sets of assignments.
Journal of Economic Theory 112:157--172.
\item Arnosti, Nick (2023): Lottery design for school choice, Management Science
69:244--259.
\item Ashlagi, Itai and Afshin Nikzad (2020): What matters in school choice tie-breaking? How
competition guides design, Journal of Economic Theory 190:105120.
\item Ashlagi, Itai, Afshin Nikzad, and Assaf Romm (2019): Assigning more students to their top
choices: A comparison of tie-breaking rules, Games and Economic Behavior 115:167--187.
\item Basteck, Christian (2018): Fair solutions to the random assignment problem.
Journal of
Mathematical Economics 79:163--172.
\item Basteck, Christian (2024): An axiomatization of the random priority rule. WZB Discussion Paper SP II 2024-201.
\item Basteck, Christian and Lars Ehlers (2023): Strategy-Proof and Envyfree Random Assignment.
Journal of Economic Theory 209:105618.
\item Basteck, Christian and Lars Ehlers (2025): On the (Constrained) Efficiency of Strategy-Proof Random Assignment.
Econometrica 93: 569--595.
\item Birkhoff, Garrett (1946): Three observations on linear algebra. Univ. Nac. Tacuman, Rev.
Ser. A, 5:147--151.
\item Bogomolnaia, Anna and Herv\'e Moulin (2001): A new solution to the random assignment
problem. Journal of Economic Theory, 100:295--328.
\item Bogomolnaia, Anna and Herv\'e Moulin (2015): Size versus fairness in the assignment problem.
Games and Economic Behavior, 90:119--127.

\item Bade, Sophie (2020): Random serial dictatorship: the one and only. Mathematics of
Operations Research 45(1):353--368.

\item Demeulemeester, Tom and Juan Pereyra (2024): Rawlsian assignments. Working Paper, arXiv preprint arXiv:2207.02930.

\item Duddy, Conal (2025): Egalitarian random assignment.
Economic Theory 80: 321--354.

\item Erdil, Aytek (2014): Strategy-proof stochastic assignment. Journal of Economic Theory,
151:146--162.
\item Han, Xiang (2024a): On the efficiency and fairness of deferred acceptance with single tie-breaking.
Journal of Economic Theory 218:105842.
\item Han, Xiang (2024b): A theory of fair random allocation under priorities.
Theoretical Economics 19:1185--1221.

\item Heo, Eun Jeong, Vikram Manjunath and Samson Alva (2025): Unambiguous efficiency of random
allocations, Working Paper.
\item Kesten, Onur and Utku M. Ünver (2015): A theory of school choice lotteries.
Theoretical
Economics 10:543--595.
\item Martini, Giorgio (2016): Strategy-proof and fair assignment is wasteful. Games and Economic
Behavior 98:172--179.

\item Nesterov, Alexander S. (2017): Fairness and efficiency in strategy-proof object allocation
mechanisms. Journal of Economic Theory, 170:145--168.
\item P\'{a}pai, Szilvia (2000): Strategyproof assignment by hierarchical exchange. Econometrica 68:1403--1433.
\item Pycia, Marek and Peter Troyan (2026): The random priority mechanism is uniquely simple, efficient, and fair. Econometrica, forthcoming.
\item Pycia, Marek and Utku M. \"Unver (2017): Incentive compatible allocation and exchange of discrete resources. Theoretical Economics 12:
287--329.
\item Shapley, Lloyd S. and Herbert Scarf (1974): On cores and indivisibility.
Journal of Mathematical Economics 1:23--28.
\item Shende, Priyanka and Manish Purohit (2023): Strategy-proof and envy-free mechanisms for house allocation. Journal of Economic Theory 213: 105712.
%\item Zhang, Jun (2019): Efficient and fair assignment mechanisms are strongly group manipulable.
%Journal of Economic Theory 180:167--177.
\item Zhang, Jun (2023): On wastefulness of random assignments in discrete
allocation problems. Economic Theory 76:289--310.
%\item Zhang, Jun (2023b):
%Strategy-proof allocation with outside option. Games and Economic Behavior 137: 50--67.
\end{description}
%\bibliographystyle{plainnat}
%\bibliography{./library}

\end{document}